\documentclass[10pt,twocolumn,twoside]{IEEEtran}
%
%
\maxdepth=0pt
\usepackage{cite}  
\usepackage[pdftex]{graphicx}
\usepackage{stfloats} 
\usepackage{amsmath,amsfonts,amssymb,amscd}
\usepackage{multirow}
\usepackage{bbm}
\usepackage{bbold}

\usepackage{amsthm}

\newtheorem{proposition}{Proposition}

\newtheorem{lemma}{Lemma}

\newtheorem{assumption}{Assumption}

\usepackage{setspace}
\usepackage{pdfsync}
\usepackage{epstopdf}
\usepackage{hyperref}
\usepackage{cleveref}
\usepackage{lipsum,multicol}
\usepackage[numbered,autolinebreaks,useliterate]{mcode}
\usepackage{mathtools,bbm}
\usepackage{mathtools, dsfont}

\usepackage{caption}
\usepackage{subcaption}
\usepackage{scalefnt}

\usepackage{lipsum}
\usepackage{calligra}
\DeclareMathAlphabet{\mathcalligra}{T1}{calligra}{m}{n}

\DeclareMathOperator{\diag}{diag}

\newcommand{\vertiii}[1]{{\left\vert\kern-0.25ex\left\vert\kern-0.25ex\left\vert #1 
    \right\vert\kern-0.25ex\right\vert\kern-0.25ex\right\vert}}

\usepackage[noend]{algpseudocode}
\usepackage{algorithm,algorithmicx}

\algrenewcommand\algorithmicrequire{\textbf{Initialization:}}
\algrenewcommand\algorithmicensure{\textbf{For each $i$, execute for $k\geq 0$:}}

\algblockdefx[Loop]{Loop}{EndLoop}[1]{\textbf{for $k\geq 0$:} #1}

\makeatletter
\algrenewcommand\ALG@beginalgorithmic{\small}
\makeatother

\setlength\textfloatsep{1\baselineskip plus 0pt minus 0pt}

\makeatletter
\newcommand\fs@betterruled{%
  \def\@fs@cfont{\bfseries}\let\@fs@capt\floatc@ruled
  \def\@fs@pre{\vspace*{5pt}\hrule height.8pt depth0pt \kern2pt}%
  \def\@fs@post{\kern2pt\hrule\relax}%
  \def\@fs@mid{\kern2pt\hrule\kern2pt}%
  \let\@fs@iftopcapt\iftrue}
\floatstyle{betterruled}
\restylefloat{algorithm}
\makeatother

\usepackage{balance}
\usepackage{float}
\usepackage{tabu}

\interdisplaylinepenalty=10000

\hyphenation{op-tical net-works semi-conduc-tor}
\allowdisplaybreaks
\begin{document}
\title{{Distributed Optimal Power Flow Algorithms over Time-Varying Communication Networks}}

\author{Madi Zholbaryssov, Alejandro D. Dom\'{i}nguez-Garc\'{i}a,~\IEEEmembership{Senior Member,~IEEE}
\thanks{M. Zholbaryssov and A. D. Dom\'{i}nguez-Garc\'{i}a are with the ECE Department at the University of Illinois at Urbana-Champaign, Urbana, IL 61801, USA. E-mail:  \{zholbar1, aledan\}@ILLINOIS.EDU.}}
\maketitle
\begin{abstract}
In this paper, we consider the problem of optimally coordinating the response of a group of distributed energy resources (DERs) in distribution systems by solving the so-called optimal power flow (OPF) problem. The OPF problem is concerned with determining an optimal operating point, at which some cost function, e.g., generation cost or power losses, is minimized, and operational constraints are satisfied. To solve the OPF problem, we propose distributed algorithms that are able to operate over time-varying communication networks and have geometric convergence rate. We solve the second-order cone program
(SOCP) relaxation of the OPF problem for radial distribution systems, which is formulated using the so-called DistFlow model.
Theoretical results are further supported by the numerical simulations.
\end{abstract}
\IEEEpeerreviewmaketitle

\section{Introduction}
The growing integration of renewable generation inevitably raises concerns about the efficacy of the traditional centralized electricity paradigm in meeting global energy needs in a reliable and sustainable way. It is envisioned that present-day power grids, which are predominantly based on centralized energy sources, will eventually transition towards more decentralized architecture, where electric power is mainly produced from distributed energy resources (DERs). One of the obstacles in fulfilling this vision is the need for effective control strategies for coordinating a large population of DERs. 

In order to adequately respond to rapid and large fluctuations in renewable generation, DERs will need to more frequently adjust their set-points, which will require real-time control systems to run and process data more often. To coordinate a large number of DERs, it will also be required to process a large volume of data in real-time. The traditional centralized approach operates by collecting the data in a central processing unit requiring a denser communication network with high-speed communication channels, which also needs to be secure to prevent cyber attackers from stealing sensitive private information. Hence, the centralized approach may not be feasible because of the resulting communication overhead.  
By contrast, a distributed approach processes the data locally, thereby dispensing with the need for moving the data to the central node. However, it is more difficult to solve the OPF problem in a distributed way, since communication latency and random data packet losses might prevent the distributed approach from converging to an optimal solution.

In this work, we consider the standard OPF problem for balanced distribution systems with high penetration of DERs, where each DER is operated within its capacity constraints and is endowed with a generation cost. The objective of the OPF problem is to determine an optimal operating point at which some cost function, e.g., total generation cost or power loss, is minimized, and operational constraints are satisfied. 
One of the main objectives of this work is to design distributed OPF solvers with a certain level of resiliency to communication latency and random data packet losses required to properly coordinate DERs over less reliable communication networks. Another important objective is to ensure that these algorithms have fast convergence rate in order to quickly update the set-points of DERs and provide a fast response to rapid and large transients in renewable generation.

A vast body of work has focused on solving the OPF problem for distribution systems. Earlier works (see, e.g., \cite{BaWe08, FaLo13, GaLiLo15}) focused on dealing with the non-convexity of the OPF problem, and proposed semidefinite program (SDP) and SOCP relaxations, which were shown to be exact for radial networks under some conditions. A few works proposed distributed approaches for solving the OPF problem over time-invariant communication networks (see, e.g., \cite{DaZh13,SuBa14,MaWe15,PeLo18}). In \cite{PeLo18}, the authors proposed a distributed algorithm for solving the SOCP relaxation of the OPF problem for balanced radial networks, which is based on the alternating direction method of multipliers (ADMM). The asynchronous ADMM was proposed in \cite{GuHu17} to solve the OPF problem over time-varying communication networks.
There exists another body of works (see, e.g., \cite{ZhCh12, DoCaHa12,YaTa13,KaHu12, ZhPa14, CaDoHa15, ChCo15, WuJo17, MoHu14, MoHu15, ZhFo19}) that focused on solving simplified OPF problems over time-varying communication networks, in which most operational constraints were neglected. One line of works (see, e.g., \cite{ZhCh12, DoCaHa12,YaTa13,KaHu12, ZhPa14, CaDoHa15, ChCo15, WuJo17}) focused on the DER coordination problem with only total active power balance constraint and generation capacity constraints being considered. In addition to these constraints, another line of works (see, e.g., \cite{MoHu14, MoHu15, ZhFo19}) also considered line flow constraints. 

Our starting point in the design of the algorithms is a primal-dual algorithm for solving the system of optimality conditions also known as the \textit{Lagrangian system}. We then develop distributed versions of this primal-dual algorithm by having bus agents closely emulate the iterations of the primal-dual algorithm, where each agent maintains and updates only local variables.
The resulting distributed primal-dual algorithms converge geometrically fast when operated over unreliable communication networks. Each algorithm can be viewed as a feedback interconnection of the primal-dual algorithm representing the nominal system and the disturbance generated due to the nature of the distributed implementation. The key ingredient for establishing the convergence results is to show that both systems are finite-gain stable, which then allows us to use the small-gain theorem (see, e.g., \cite{Khalil}) and show the convergence of the feedback interconnected system. The small-gain-theorem-based analysis first appeared in \cite{Nedic17} in the context of distributed algorithms for solving an unconstrained consensus optimization problem.


\section{Preliminaries}
\label{sec:preliminaries}
In this section, we formulate the OPF problem and outline the communication network models adopted in this work.
\subsection{OPF Problem Formulation}
We consider a balanced radial distribution system, the topology of which can be described by a directed graph~$\mathcal{G}_p = (\mathcal{V}_p,\mathcal{E}_p)$, where $\mathcal{V}_p \coloneqq \{1,2,\dots, n\}$ denotes the set of $n$ buses (nodes), and $\mathcal{E}_p \coloneqq \{\vec{e}_1, \vec{e}_2, \dots, \vec{e}_{n-1}\}$ denotes the set of electrical lines, with $\vec{e}_k = (i,j) \in \mathcal{E}_p$ if node~$i$ is located upstream from node~$j$, i.e., node~$i$ is closer to the distribution substation, represented by node~$1$. Let $\mathcal{N}_i^+ = \{j: (i,j) \in \mathcal{E}_p\}$ and $\mathcal{N}_i^- = \{l: (l,i) \in \mathcal{E}_p\}$ denote the sets of (downstream) out-neighbors and (upstream) in-neighbors of node~$i$, respectively. We define a node-to-edge incidence matrix, $M\in\mathds{R}^{n\times |\mathcal{E}_p|}$, with $M_{ik} = 1$ and $M_{jk} = -1$, if $\vec{e}_k = (i,j) \in \mathcal{E}_p$, and $M_{ik} = 0$ and $M_{jk} = 0$, otherwise. Also, let $M_0\in\mathds{R}^{n\times |\mathcal{E}_p|}$ contain the entries of $M$, each corresponding to an upstream end, with $M_{0ik} = 1$ if $\vec{e}_k = (i,j) \in \mathcal{E}_p$, for some $j\in\mathcal{V}_p$, and $M_{0ik} = 0$, otherwise. Similarly, $N_0\in\mathds{R}^{n\times |\mathcal{E}_p|}$ contains the entries of $M$, each corresponding to a downstream end, with $N_{0ik} = -1$ if $\vec{e}_k = (j,i) \in \mathcal{E}_p$, for some $j\in\mathcal{V}_p$, and $N_{0ik} = 0$, otherwise. [Note that $M = M_0+N_0$.] Let $r_{ij}$ and $x_{ij}$ denote the series resistance and reactance of line~$(i,j)$, $R \coloneqq \diag(\{r_{ij}\}_{(i,j)\in\mathcal{E}_p})$, and $X \coloneqq \diag(\{x_{ij}\}_{(i,j)\in\mathcal{E}_p})$.

Let $l^{(p)}_i$ and $g^{(p)}_i$ denote the active power demand and supply at node~$i$, $l^{(q)}_i$ and $g^{(q)}_i$ denote the reactive power demand and supply at node~$i$, $p_{ij}$ and $q_{ij}$ denote the active and reactive power flow out of node~$i$ through line~$(i,j)$, i.e., $p_{ij} + q_{ij}\sqrt{-1}$ is the sending end power. Let $V_i$ denote the voltage magnitude at node~$i$, and $v_i \coloneqq V_i^2$. Let $I_{ij}$ denote the current magnitude through line~$(i,j) \in \mathcal{E}_p$, and $\ell_{ij} \coloneqq I_{ij}^2$. If $\mathcal{G}_p$ is radial, the AC power flow equations can be exactly represented via the standard \textit{DistFlow} model (see, e.g., \cite{BaWu89,FaLo13}):
\begin{subequations}\label{DistFlow}
\begin{align}
0 &= g^{(p)} - l^{(p)} - Mp + N_0R\ell,\label{DistFlow1}\\
0 &= g^{(q)} - l^{(q)} - Mq + N_0X\ell,\label{DistFlow2}\\
0 &= M^\top v - 2Rp-2Xq+(R^2+X^2)\ell,\label{DistFlow3}\\
0 &= p\circ p+q\circ q - M_0^\top v\circ \ell,\label{DistFlow4}
\end{align}
\end{subequations}
where $\circ$ denotes an element-wise multiplication, $g^{(p)} = [g^{(p)}_1,\dots,g^{(p)}_n]^\top$, $g^{(q)} = [g^{(q)}_1,\dots,g^{(q)}_n]^\top$, $v = [v_1,\dots,v_n]^\top$, $p = [\{p_{ij}\}_{(i,j)\in\mathcal{E}_p}]$, $q = [\{q_{ij}\}_{(i,j)\in\mathcal{E}_p}]$, and $\ell = [\{\ell_{ij}\}_{(i,j)\in\mathcal{E}_p}]$.

Next, we specify the operational constraints to be satisfied in the problem formulation. We impose the following operating limits on the power outputs of the DERs: 
\begin{subequations}\label{cap_constr}
\begin{align}
g^{(p)\min}&\leq g^{(p)}\leq g^{(p)\max},\\
g^{(q)\min}&\leq g^{(q)}\leq g^{(q)\max}.
\end{align}
\end{subequations}
For an inverter-interfaced DER at node~$i$, the maximum amount of active and reactive power that can be produced is limited by the apparent power capability, $s^{\max}_i$, of an inverter:
\begin{align}
(g^{(p)}_i)^2 + (g^{(q)}_i)^2 \leq (s^{\max}_i)^2. \label{cap_constr1}
\end{align}
To simplify the exposition, we will not consider \eqref{cap_constr1} in the problem formulation; however, the proposed algorithms can handle the constraint \eqref{cap_constr1}.

Also, the voltage levels and line currents need to be within the following operating limits:
\begin{align}
v^{\min}\leq v\leq v^{\max}, \label{volt_constr}\\
0 \leq \ell \leq \ell^{\max}. \label{curr_constr}
\end{align}
Then, the OPF problem can be formulated as follows:
\begin{align}
{\textbf{OPF}}:
\mbox{min} & \mbox{ } f(g^{(p)})\coloneqq \sum\limits_{i=1}^n f_i(g^{(p)}_i) \nonumber\\
\mbox{over} & \mbox{ } g^{(p)}, g^{(q)}, p, q, v, \ell\label{OPF}\\
\mbox{subject to} & \mbox{ \eqref{DistFlow}, \eqref{cap_constr}, \eqref{volt_constr}, \eqref{curr_constr}},\nonumber
\end{align}
where $f_i(\cdot)$ denotes the cost function associated with the electric power generated by the DER at bus~$i$. We make the following assumption regarding the cost function. 
\begin{assumption}\label{objective_assumption}
Each cost function~$f_i(\cdot)$ is twice differentiable and strongly convex with parameter~$m>0$, i.e., $f_i^{\prime\prime}(x)\geq m$, $\forall x \in [g^{(p)\min}_i, g^{(p)\max}_i]$, $\forall i\in\mathcal{V}_p$.
\end{assumption}

\subsection{SOCP Relaxation Of The OPF Problem}
Because of the nonlinear equality constraint \eqref{DistFlow4}, the OPF problem \eqref{OPF} is non-convex. For radial networks, it has been shown in \cite{FaLo13,GaLiLo15} that under certain assumptions when \eqref{DistFlow4} is relaxed to the second-order cone constraint, namely,
\begin{align}\label{DistFlow5}
p\circ p+q\circ q \leq M_0^\top v\circ\ell,
\end{align}
the OPF problem \eqref{OPF} admits an exact second-order cone program (SOCP) relaxation given below:
\begin{align}
{\textbf{SOCP}}:
\mbox{minimize} & \mbox{ } f(g^{(p)}) \nonumber\\
\mbox{over} & \mbox{ } g^{(p)}, g^{(q)}, p, q, v, \ell\label{OPF2}\\
\mbox{subject to} & \mbox{ \eqref{DistFlow1}--\eqref{DistFlow3}, \eqref{cap_constr}, \eqref{volt_constr}, \eqref{curr_constr}, \eqref{DistFlow5}}.\nonumber
\end{align}
For our further analysis, we introduce additional variable~$\varepsilon = [\{\varepsilon_{ij}\}_{(i,j)\in\mathcal{E}_p}]$, and break the constraint \eqref{DistFlow3} into the following equivalent constraints:
\begin{subequations}
\begin{align}
0 &= \varepsilon - 2Rp-2Xq+(R^2+X^2)\ell,\\
0 &= \varepsilon - M^\top v.
\end{align}
\label{DistFlow6}%
\end{subequations}
The proposed algorithms are designed to solve a regularized approximation of the OPF problem \eqref{OPF2}, where we add a regularization term to the objective function. The purpose of the problem regularization is to allow us to establish the convergence results.
However, if the regularization term is small, there is practically no difference between the solutions of \eqref{OPF2} and its regularized approximation, which we provide below:
\begin{align}
{\textbf{rSOCP}}:
\mbox{minimize} & \mbox{ } f(g^{(p)}) + \rho \|\ell\|_2^2 + \rho\|v-1\|_2^2\nonumber\\
\mbox{over} & \mbox{ } g^{(p)}, g^{(q)}, p, q, v, \ell, \varepsilon \label{rOPF}\\
\mbox{subject to} & \mbox{ \eqref{DistFlow1}--\eqref{DistFlow2}, \eqref{cap_constr}, \eqref{volt_constr}, \eqref{curr_constr}, \eqref{DistFlow5}, \eqref{DistFlow6}},\nonumber
\end{align}
where $\|\cdot\|_2$ is the Euclidean norm, and $\rho \|\ell\|_2^2+ \rho\|v-1\|_2^2$ is the regularization term that also allows us to penalize the line currents and the deviation of the bus voltages from the nominal voltage, $1$~pu.
To this end, we develop a distributed algorithm that solves rSOCP for radial distribution systems. 

\subsection{Communication Network Models}\label{subsec:cyber_layer}
Next, we introduce the model describing the communication network that enables the information exchange between nodes of the distribution system. We assume that the topology of the nominal communication network coincides with the topology of the power network. We consider (i) bidirectional and (ii) unidirectional communication models.

\subsubsection{Bidirectional Communication Model}
Let $\mathcal{G}_0 = (\mathcal{V}_p,\mathcal{E}_0)$ denote an undirected graph, where $\mathcal{E}_0$ is the set of all available bidirectional communication links, with $\{i,j\} \in \mathcal{E}_0$ if $(i,j)$ or $(j,i) \in \mathcal{E}_p$.
During time period~$(t_k,t_{k+1})$, successful data transmissions among nodes can be captured by the undirected graph~$\mathcal{G}^{(c)}[k]=(\mathcal{V}_p,\mathcal{E}_c[k])$, where $\mathcal{E}_c[k]\subseteq \mathcal{E}_0$ is the set of active communication links, with $\{i,j\} \in \mathcal{E}_c[k]$ if nodes~$i$ and $j$ receive information from each other during time period~$(t_k,t_{k+1})$. 
We make the following standard assumption regarding the connectivity of the network (see, e.g., \cite{Nedic09}). 
\begin{assumption}
\label{assume_comm_model_un}
$\{i,j\} \in \bigcup_{l= kB}^{(k+1)B-1}\mathcal{E}_c[l]$, $\forall (i,j)\in\mathcal{E}_p$, for some positive integer~$B$.
\end{assumption}
Assumption~\ref{assume_comm_model_un} requires that a communication link~$\{i,j\}$ is active at least once every $B$ iterations. Note that communication graph~$\mathcal{G}^{(c)}[k]$ is not necessarily connected at any given time instant~$k$. 

\subsubsection{Unidirectional Communication Model}
Let $\vec{\mathcal{G}}_0 = (\mathcal{V}_p,\vec{\mathcal{E}}_0)$ denote a directed graph, where $\vec{\mathcal{E}}_0$ is the set of all available unidirectional communication links, with $(i,j) \in \vec{\mathcal{E}}_0$ and $(j,i) \in \vec{\mathcal{E}}_0$ if $(i,j) \in \mathcal{E}_p$. 
During time period~$(t_k,t_{k+1})$, successful data transmissions among nodes can be captured by the directed graph~$\vec{\mathcal{G}}^{(c)}[k]=(\mathcal{V}_p,\vec{\mathcal{E}}_c[k])$, where $\vec{\mathcal{E}}_c[k]\subseteq \vec{\mathcal{E}}_0$ is the set of active communication links, with $(i,j) \in \vec{\mathcal{E}}_c[k]$ if node~$j$ receives information from node~$i$ during time period~$(t_k,t_{k+1})$. 
We make the following assumption regarding the connectivity of the network.
\begin{assumption}
\label{assume_comm_model_dd}
$(i,j) \in \bigcup_{l= kB}^{(k+1)B-1}\vec{\mathcal{E}}_c[l]$ and $(j,i) \in \bigcup_{l= kB}^{(k+1)B-1}\vec{\mathcal{E}}_c[l]$, $\forall (i,j)\in\mathcal{E}_p$, for some positive integer~$B$.
\end{assumption}
Unlike the bidirectional communication network, the unidirectional communication network does not necessarily allow nodes to exchange information simultaneously, within the same time period~$(t_k,t_{k+1})$, if a communication link between them is active. 

\subsection{Simplified OPF}
Many of the basic ideas behind the proposed distributed algorithms can be more effectively presented by considering a simplified OPF problem, referred to as the security-constrained economic dispatch (SCED) problem given below:
\begin{subequations}\label{SCED}
\begin{align}
{\textbf{SCED}}:\mbox{minimize} & \mbox{ } f(g^{(p)}) \\
\mbox{over} & \mbox{ } g^{(p)}, p\\
\mbox{subject to} & \mbox{ }g^{(p)}-l^{(p)} = Mp,\label{sced_power_balance}\\
& \mbox{ } g^{(p)\min}\leq g^{(p)}\leq g^{(p)\max},\\
& \mbox{ } p^{\min}\leq p\leq p^{\max},
\end{align}
\end{subequations}
where $p^{\max}$ denotes the vector of the line capacities, and $p^{\min}=-p^{\max}$.
Then, the basic ideas can be directly carried over to design distributed algorithms that solve rSOCP. In the remainder, we first present the distributed primal-dual algorithms that solve SCED. Then, we extend these algorithms to solve rSOCP.

\subsection{The Small-Gain Theorem}
In the following, we give a brief overview of the main analysis tool used in later developments---the small-gain theorem (see, e.g., \cite[Theorem~5.6]{Khalil}) for discrete-time systems.
For the forthcoming developments, we adopt the appropriate metric for measuring energy content of the signals of interest. For a given sequence of iterates, $\{x[k]\}_{k=0}^{\infty}$, where $x[k]\in\mathds{R}^n$, consider the following norm (previously used in \cite{Nedic17}): \[\|x\|_2^{a,K}\coloneqq \max\limits_{0\leq k\leq K}a^{-k}\|x[k]\|_2,\] for some $a \in (0,1)$, where $\|\cdot\|_2$ is the Euclidean norm. If $\|x\|_2^{a,K}$ is bounded for all $K\geq 0$, then, $a^{-k}\|x[k]\|_2$ is bounded for all $k\geq 0$, and, thus, it follows that $x[k]$ converges to zero at a geometric rate~$\mathcal{O}(a^k)$.

Now, consider a feedback connection of two discrete-time systems~$\mathcal{H}_1$ and $\mathcal{H}_2$ such that
\begin{align*}
e_2[k+1] &= \mathcal{H}_1(e_1[k]),\\
e_1[k+1] &= \mathcal{H}_2(e_2[k]).
\end{align*}
We assume that $\mathcal{H}_1$ and $\mathcal{H}_2$ are finite-gain stable in the sense of the norm~$\|\cdot\|_2^{a,K}$, namely, the following relations hold:
\begin{subequations}\label{finite-gain-stability}
\begin{align}
\|e_2\|_2^{a,K} &\leq \gamma_1\|e_1\|_2^{a,K}+\beta_1,\\
\|e_1\|_2^{a,K} &\leq \gamma_2\|e_2\|_2^{a,K}+\beta_2,
\end{align}
\end{subequations}
for some nonnegative constants~$\beta_1$, $\beta_2$, $\gamma_1$, and $\gamma_2$.
From~\eqref{finite-gain-stability}, we have that
\begin{align}\label{small-gain-th-1}
\|e_2\|_2^{a,K} &\leq \gamma_1\|e_1\|_2^{a,K}+\beta_1\nonumber\\
&\leq \gamma_1\gamma_2\|e_2\|_2^{a,K}+\gamma_1\beta_2+\beta_1,
\end{align}
which by rearranging~\eqref{small-gain-th-1} yields
\begin{align*}
\|e_2\|_2^{a,K} &\leq \frac{\gamma_1\beta_2+\beta_1}{1-\gamma_1\gamma_2}.
\end{align*}
Similarly,
\begin{align*}
\|e_1\|_2^{a,K} &\leq \frac{\gamma_2\beta_1+\beta_2}{1-\gamma_1\gamma_2}.
\end{align*}
Then, if $\gamma_1\gamma_2<1$, $\|e_1\|_2^{a,K}$ and $\|e_2\|_2^{a,K}$ are bounded, and $e_1[k]$ and $e_2[k]$ converge to zero at a geometric rate~$\mathcal{O}(a^k)$. 

\section{Distributed SCED Over Time-Varying Communication Graphs}\label{sec:sced}
The purpose of this section is to present the key ideas behind the distributed primal-dual algorithm for solving rSOCP by considering a simpler problem, namely, SCED. We first consider the case of undirected communication graphs. Then, we tackle the general case of directed communication graphs.
\subsection{Time-Varying Undirected Communication Graphs}\label{subsec:sced_bi}
Let $L(g^{(p)},p,\lambda)$ denote the Lagrangian for SCED given by
\begin{align*}
L(g^{(p)},p,\lambda)  &= f(g^{(p)}) + \lambda^\top (g^{(p)} - l^{(p)} - Mp) \\&\quad+\frac{\rho}{2}\|g^{(p)}- l^{(p)} - Mp\|_2^2,
\end{align*}
where $\rho>0$ is a regularization parameter, $\lambda$ denotes the Lagrange multiplier associated with the power balance constraint \eqref{sced_power_balance}.
Our starting point for solving SCED is the following primal-dual algorithm \cite[Section~4.4]{NonlinearProgramming} with additional projection:
\begin{subequations}\label{centr_pd_sced}
\begin{align}
g^{(p)}[k+1] &= \left[g^{(p)}[k] - s\frac{\partial L[k]}{\partial g^{(p)}}\right]_{g^{(p)\min}}^{g^{(p)\max}},\label{pd_sced1}\\
p[k+1] &= \left[p[k] - s\frac{\partial L[k]}{\partial p}\right]_{p^{\min}}^{p^{\max}},\label{pd_sced2}\\
\lambda[k+1] &= \lambda[k] + s\frac{\partial L[k]}{\partial \lambda},
\end{align}
\end{subequations}
where $s>0$ is a stepsize, and $[\cdot]_{x_1}^{x_2}$ denotes the projection onto the box~$[x_1,x_2]$, for $x_1, x_2 \in\mathds{R}^n$. Let $(g^{(p)*},p^*,\lambda^*)$ denote the equilibrium of~\eqref{centr_pd_sced}, where $g^{(p)*} \coloneqq [g^{(p)*}_1,\dots,g^{(p)*}_n]^\top$, $p^* \coloneqq [\{p_{ij}^*\}_{(i,j)\in\mathcal{E}_p}]$, and $\lambda^*\coloneqq[\lambda^*_1,\dots,\lambda^*_n]^\top$. In the distributed version of \eqref{centr_pd_sced}, each node~$i$ maintains the estimates of only local optimal quantities, namely, the local optimal power injection, $g^{(p)*}_i$, out-going flows, $p_{ij}^*$, $(i,j)\in\mathcal{E}_p$, in-coming flows, $p_{li}^*$, $(l,i)\in\mathcal{E}_p$, and the Lagrange multiplier, $\lambda_i^*$. Let $g^{(p)}_i[k]$ denote the estimate of $g^{(p)*}_i$, $\hat{p}_{ij}[k]$ the estimate of $p_{ij}^*$, $(i,j)\in\mathcal{E}_p$, $\breve{p}_{li}[k]$ the estimate of $p_{li}^*$, $(l,i)\in\mathcal{E}_p$, and $\lambda_i[k]$ the estimate of $\lambda_i^*$ maintained at node~$i$ at time instant~$k$. Note that $p_{ij}^*$, $(i,j)\in\mathcal{E}_p$, is estimated by nodes~$i$ and $j$. Let \[x^{(i)}[k] \coloneqq \big[g^{(p)}_i[k],[\{\hat{p}_{ij}[k]\}_{(i,j)\in\mathcal{E}_p}],[\{\breve{p}_{li}[k]\}_{(l,i)\in\mathcal{E}_p}],\lambda_i[k]\big]^\top\] denote the vector of the estimates of all local optimal primal and dual variables, denoted by 
\[x^{(i)*} \coloneqq \big[g^{(p)*}_i,[\{\hat{p}_{ij}^*\}_{(i,j)\in\mathcal{E}_p}],[\{\breve{p}_{li}^*\}_{(l,i)\in\mathcal{E}_p}],\lambda_i^*\big]^\top,\]
maintained at node~$i$ at time instant~$k$.
We let node~$i$ perform the updates based on its local Lagrangian given by
\begin{align*}
L^{(i)}(x^{(i)})  &= f(g^{(p)}_i) + \lambda_ib^{(p)}_i + \frac{\rho}{2}(b^{(p)}_i)^2,
\end{align*}
where \[b^{(p)}_i \coloneqq g^{(p)}_i - l^{(p)}_i - \sum_{j\in\mathcal{N}_i^+}\hat{p}_{ij} + \sum_{l\in\mathcal{N}_i^-}\breve{p}_{li}.\]
Then, node~$i$ updates $g^{(p)}_i[k]$ and $\lambda_i[k]$ as follows:
\begin{align}\label{pd_sced_un1}
g^{(p)}_i[k+1] &= \left[g^{(p)}_i[k] - s\frac{\partial L^{(i)}[k]}{\partial g^{(p)}_i}\right]_{g^{(p)\min}_i}^{g^{(p)\max}_i},\\
\lambda_i[k+1] &= \lambda_i[k] + s\frac{\partial L^{(i)}[k]}{\partial \lambda_i},
\end{align}
where $L^{(i)}[k]\coloneqq L^{(i)}(x^{(i)}[k])$.
Next, we explain how node~$i$ updates its local flow estimates, $\hat{p}_{ij}[k]$, $(i,j)\in\mathcal{E}_p$, and $\breve{p}_{li}[k]$, $(l,i)\in\mathcal{E}_p$.
Consider $(i,j)\in\mathcal{E}_p$, and note that $\hat{p}_{ij}[k]$ and $\breve{p}_{ij}[k]$ are the estimates of $p_{ij}^*$ maintained by nodes~$i$ and $j$, respectively. To ensure that the estimates~$\hat{p}_{ij}[k]$ and $\breve{p}_{ij}[k]$ converge to the same value, nodes~$i$ and $j$ need to exchange the estimates, compute the average, and use it in their updates as follows:
\begin{align}\label{pd_sced_un2}
\hat{p}_{ij}[k+1] &= \Big[(1-a_{ij}[k])\hat{p}_{ij}[k]+ a_{ij}[k]\breve{p}_{ij}[k] \nonumber\\&\quad- s\hat{y}_{ij}[k]\Big]_{p_{ij}^{\min}}^{p_{ij}^{\max}},\\
\breve{p}_{ij}[k+1] &= \Big[(1-a_{ij}[k])\breve{p}_{ij}[k]+ a_{ij}[k]\hat{p}_{ij}[k]\nonumber\\&\quad - s\breve{y}_{ij}[k]\Big]_{p_{ij}^{\min}}^{p_{ij}^{\max}},
\end{align}
where 
\begin{align*}
a_{ij}[k]=\left\{\begin{matrix*}[l]0.5 & \mbox{if }\{i,j\} \in \mathcal{E}_c[k],\\0 & \mbox{otherwise,}\end{matrix*}\right.
\end{align*}
and $\hat{y}_{ij}[k]$ and $\breve{y}_{ij}[k]$ are the estimates of the gradient~$\frac{\partial L}{\partial p_{ij}}$ maintained at nodes~$i$ and $j$, respectively. One way to estimate the gradient can be purely based on the local Lagrangian (local information):
\begin{align}\label{local_grad}
\hat{y}_{ij}[k] &= \frac{\partial L^{(i)}[k]}{\partial \hat{p}_{ij}},\quad
\breve{y}_{ij}[k] = \frac{\partial L^{(j)}[k]}{\partial \breve{p}_{ij}}.
\end{align}
However, leveraging only local information, as in \eqref{local_grad}, results in slow (asymptotic) convergence (to be demonstrated numerically in Section~\ref{subsec:simulations_sced_un}). A better approach is to let each node track the gradient by also using the neighbor's information:
\begin{subequations}\label{grad_track_sced}
\begin{align}
\hat{y}_{ij}[k+1] &= (1-a_{ij}[k]) \hat{y}_{ij}[k] + a_{ij}[k]\breve{y}_{ij}[k]\nonumber\\
&\quad+ 2\left(\frac{\partial L^{(i)}[k+1]}{\partial \hat{p}_{ij}} - \frac{\partial L^{(i)}[k]}{\partial \hat{p}_{ij}}\right), \\
\breve{y}_{ij}[k+1] &= (1-a_{ij}[k]) \breve{y}_{ij}[k] + a_{ij}[k]\hat{y}_{ij}[k]\nonumber\\
&\quad + 2\left(\frac{\partial L^{(j)}[k+1]}{\partial \breve{p}_{ij}} - \frac{\partial L^{(j)}[k]}{\partial \breve{p}_{ij}}\right),
\end{align}
\end{subequations}
with
\begin{align}
\hat{y}_{ij}[0] = 2\frac{\partial L^{(i)}[0]}{\partial \hat{p}_{ij}}, \quad
\breve{y}_{ij}[0] = 2\frac{\partial L^{(j)}[0]}{\partial \breve{p}_{ij}},
\end{align}
where $\hat{y}_{ij}[k]$ and $\breve{y}_{ij}[k]$ are updated so that their average 
\begin{align*}
\frac{1}{2}\left(\hat{y}_{ij}[k]+\breve{y}_{ij}[k]\right)&= \frac{\partial L^{(i)}[k]}{\partial \hat{p}_{ij}}+\frac{\partial L^{(j)}[k]}{\partial \breve{p}_{ij}}
\\&=-\lambda_i[k]+\lambda_j[k]
\end{align*}
has exactly the same form as
\begin{align*}
\left.\frac{\partial L}{\partial p_{ij}}\right|_{g^{(p)}[k],p[k],\lambda[k]} &= -\lambda_i[k]+\lambda_j[k],
\end{align*}
used in algorithm~\eqref{centr_pd_sced}. This idea of tracking the gradient, which appeared in \cite{Nedic17} for solving an unconstrained multi-agent optimization problem, allows us to more closely emulate the updates in algorithm~\eqref{centr_pd_sced}, and achieve faster (geometric) convergence rate.

Below, we provide the iterations run by node~$i$:
\begin{subequations}\label{pd_sced_un}
\begin{align}
g^{(p)}_i[k+1] &= \left[g^{(p)}_i[k] - s\frac{\partial L^{(i)}[k]}{\partial g^{(p)}_i}\right]_{g^{(p)\min}_i}^{g^{(p)\max}_i},\label{local_pd_sced_p}\\
\hat{p}_{ij}[k+1] &= \Big[(1-a_{ij}[k])\hat{p}_{ij}[k]+ a_{ij}[k]\breve{p}_{ij}[k] \nonumber\\&\quad- s\hat{y}_{ij}[k]\Big]_{p_{ij}^{\min}}^{p_{ij}^{\max}}, (i,j)\in\mathcal{E}_p,\\
\breve{p}_{li}[k+1] &= \Big[(1-a_{li}[k])\breve{p}_{li}[k]+ a_{li}[k]\hat{p}_{li}[k]\nonumber\\&\quad - s\breve{y}_{li}[k]\Big]_{p_{li}^{\min}}^{p_{li}^{\max}},(l,i)\in\mathcal{E}_p,\\
\hat{y}_{ij}[k+1] &= (1-a_{ij}[k]) \hat{y}_{ij}[k] + a_{ij}[k]\breve{y}_{ij}[k]\nonumber\\
&\quad+ 2\left(\frac{\partial L^{(i)}[k+1]}{\partial \hat{p}_{ij}} - \frac{\partial L^{(i)}[k]}{\partial \hat{p}_{ij}}\right), \\
\breve{y}_{li}[k+1] &= (1-a_{li}[k]) \breve{y}_{li}[k] + a_{li}[k]\hat{y}_{li}[k]\nonumber\\
&\quad + 2\left(\frac{\partial L^{(i)}[k+1]}{\partial \breve{p}_{li}} - \frac{\partial L^{(i)}[k]}{\partial \breve{p}_{li}}\right),\\
\lambda_i[k+1] &= \lambda_i[k] + s\frac{\partial L^{(i)}[k]}{\partial \lambda_i}.\label{local_pd_sced_lmbd}
\end{align}
\end{subequations}

\subsection{Feedback Interconnection Representation of the Distributed Primal-Dual Algorithm}\label{subsec:feedback_sced}
In the following, we represent~\eqref{pd_sced_un} as a feedback interconnection of a nominal system, denoted by $\mathcal{H}_1$, and a disturbance system, denoted by $\mathcal{H}_2$, which allows us to utilize the small-gain theorem for convergence analysis purposes. To this end, let
\begin{align*}
g^{(p)}[k]&\coloneqq[g^{(p)}_1[k],\dots,g^{(p)}_n[k]]^\top, \lambda[k]\coloneqq[\lambda_1[k],\dots,\lambda_n[k]]^\top,\\
\hat{p}[k]&\coloneqq [\{\hat{p}_{ij}[k]\}_{(i,j)\in\mathcal{E}_p}], \breve{p}[k]\coloneqq [\{\breve{p}_{ij}[k]\}_{(i,j)\in\mathcal{E}_p}],\\
\hat{y}[k]&\coloneqq [\{\hat{y}_{ij}[k]\}_{(i,j)\in\mathcal{E}_p}], \breve{y}[k]\coloneqq [\{\breve{y}_{ij}[k]\}_{(i,j)\in\mathcal{E}_p}],\\
\overline{p}[k]&\coloneqq \frac{1}{2}(\hat{p}[k]+\breve{p}[k]), \overline{y}[k] \coloneqq \frac{1}{2}(\hat{y}[k]+\breve{y}[k]). 
\end{align*}
By using \eqref{pd_sced_un}, we write the iterations for $g^{(p)}[k]$, $\overline{p}[k]$, and $\lambda[k]$, which constitute the nominal system, $\mathcal{H}_1$, given by:
\begin{subequations}\label{H1_sced_un}
\begin{align}
{\mathcal{H}_1:}\mbox{ }g^{(p)}[k+1] &= \left[g^{(p)}[k] - s\frac{\partial \overline{L}[k]}{\partial g^{(p)}}+e_g[k]\right]_{g^{(p)\min}}^{g^{(p)\max}},\\
\overline{p}[k+1] &= \frac{1}{2}\left[\overline{p}[k] - s\frac{\partial \overline{L}[k]}{\partial p}+e_p[k]\right]_{p^{\min}}^{p^{\max}}\nonumber\\
&\quad+\frac{1}{2}\left[\overline{p}[k] - s\frac{\partial \overline{L}[k]}{\partial p}-e_p[k]\right]_{p^{\min}}^{p^{\max}},\label{H1_sced_un_2}\\
\lambda[k+1] &= \lambda[k] + s\frac{\partial \overline{L}[k]}{\partial \lambda} + e_{\lambda}[k],
\end{align}
\end{subequations}
where
\begin{align*}
\overline{L}[k] &\coloneqq L(g^{(p)}[k],\overline{p}[k],\lambda[k]),\\
e_g[k] &\coloneqq -s\rho (M\overline{p}[k]-M_0\hat{p}[k]-N_0\breve{p}[k]),\\
e_p[k] &\coloneqq \hat{p}[k]-\overline{p}[k] + s\overline{y}[k]-s\hat{y}[k],\\
e_{\lambda}[k] &\coloneqq s(M\overline{p}[k]-M_0\hat{p}[k]-N_0\breve{p}[k]), 
\end{align*}
to obtain~\eqref{H1_sced_un_2}, we used the fact that
\[\frac{\partial \overline{L}[k]}{\partial p} = \overline{y}[k].\]
We note that $e[k]\coloneqq [e_g[k]^\top,e_p[k]^\top,e_{\lambda}[k]^\top]^\top$ results from $(\hat{p}[k],\breve{p}[k])$ and $(\hat{y}[k],\breve{y}[k])$ deviating from their respective average, $\overline{p}[k]$ and $\overline{y}[k]$; without $e[k]$, the nominal system~$\mathcal{H}_1$ has exactly the same form as~\eqref{centr_pd_sced}.
Now, we define the disturbance system, $\mathcal{H}_2$, as follows: 
\begin{subequations}\label{H2_sced_un}
\begin{align}
{\mathcal{H}_2:}\mbox{ }\hat{p}_{ij}[k+1] &= \Big[(1-a_{ij}[k])\hat{p}_{ij}[k]+ a_{ij}[k]\breve{p}_{ij}[k] \nonumber\\&\quad- s\hat{y}_{ij}[k]\Big]_{p_{ij}^{\min}}^{p_{ij}^{\max}},(i,j)\in\mathcal{E}_p,\\
\breve{p}_{ij}[k+1] &= \Big[(1-a_{ij}[k])\breve{p}_{ij}[k]+ a_{ij}[k]\hat{p}_{ij}[k]\nonumber\\&\quad - s\breve{y}_{ij}[k]\Big]_{p_{ij}^{\min}}^{p_{ij}^{\max}},\\
\hat{y}_{ij}[k+1] &= (1-a_{ij}[k]) \hat{y}_{ij}[k] + a_{ij}[k]\breve{y}_{ij}[k]\nonumber\\
&\quad+ 2\left(\frac{\partial L^{(i)}[k+1]}{\partial \hat{p}_{ij}} - \frac{\partial L^{(i)}[k]}{\partial \hat{p}_{ij}}\right), \\
\breve{y}_{ij}[k+1] &= (1-a_{ij}[k]) \breve{y}_{ij}[k] + a_{ij}[k]\hat{y}_{ij}[k]\nonumber\\
&\quad + 2\left(\frac{\partial L^{(j)}[k+1]}{\partial \breve{p}_{ij}} - \frac{\partial L^{(j)}[k]}{\partial \breve{p}_{ij}}\right),\\
e_g[k] &= -s\rho (M\overline{p}[k]-M_0\hat{p}[k]-N_0\breve{p}[k]),\\
e_p[k] &= \hat{p}[k]-\overline{p}[k] + s\overline{y}[k]-s\hat{y}[k],\\
e_{\lambda}[k] &= s(M\overline{p}[k]-M_0\hat{p}[k]-N_0\breve{p}[k]).
\end{align}
\end{subequations}
Then, as illustrated in Fig.~\ref{fig:feedback_sced}, algorithm~\eqref{pd_sced_un} can be viewed as a feedback interconnection of $\mathcal{H}_1$ and $\mathcal{H}_2$.
\begin{figure}
    \centering 
	\includegraphics[trim=0cm 0cm 0cm 0cm, clip=true, scale=0.6]{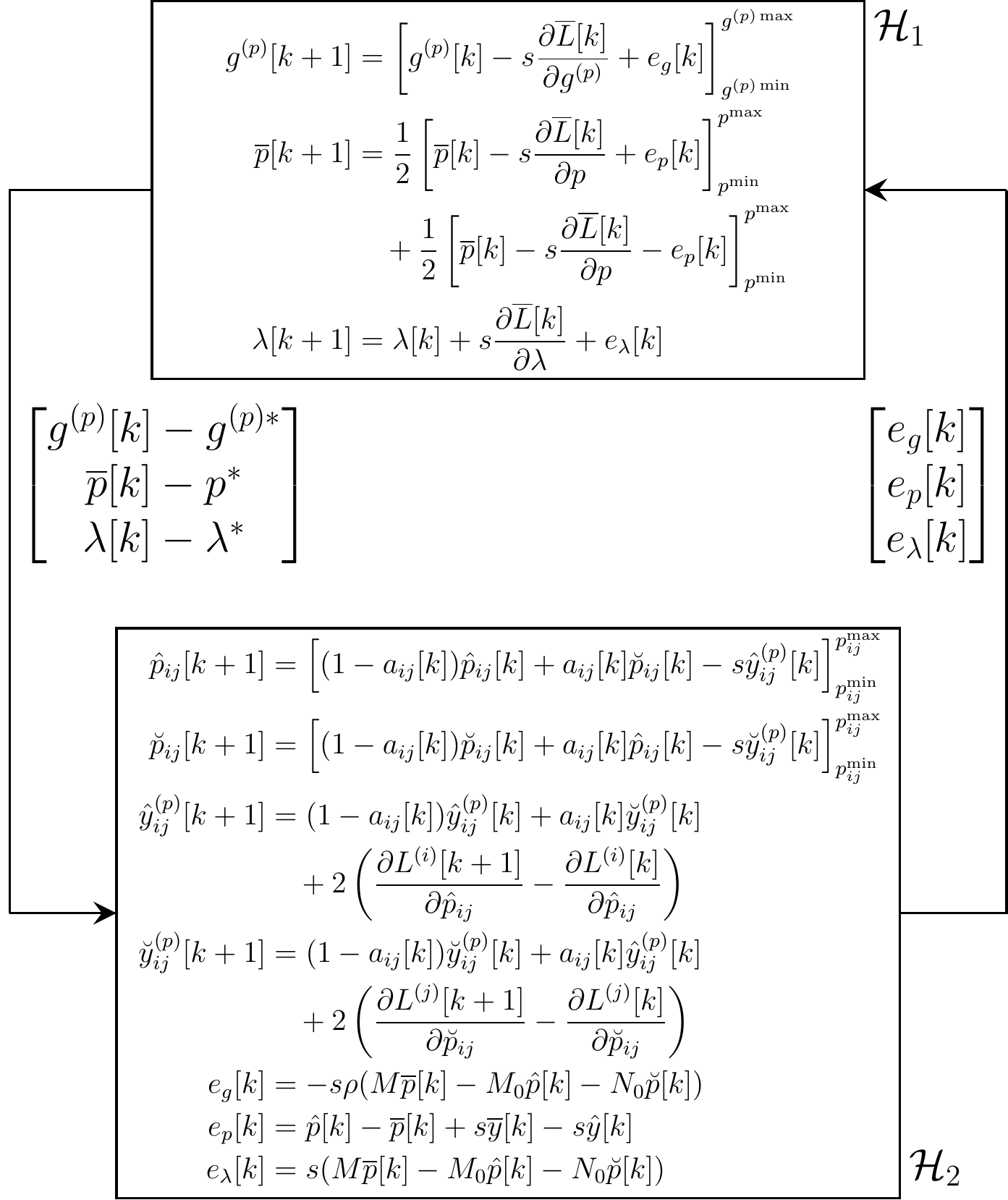} 
    \caption{Algorithm~\eqref{pd_sced_un} as a feedback system.}
    \vspace{-5pt} 
    \label{fig:feedback_sced}
\end{figure}
In the following, we find the relationship between the loop gain of the feedback system and the stepsize~$s$ establishing that the loop gain can be reduced by decreasing $s$. This enables us to apply the small-gain theorem and show that the feedback loop does not amplify the energy of the convergence error, but, on the contrary, the error eventually decays to zero, if the loop gain is sufficiently small.

\subsection{Convergence Analysis}\label{subsec:conv_analysis_sced_un}
We first establish that the systems~$\mathcal{H}_1$ and $\mathcal{H}_2$ are finite-gain stable:
\begin{itemize}
\item[{\textbf{R1.}}] $\|z\|_2^{a,K}\leq \alpha_1\|e\|_2^{a,K}+\beta_1$ for some positive $\alpha_1$ and $\beta_1$,
\item[{\textbf{R2.}}] $\|e\|_2^{a,K}\leq s\alpha_2\|z\|_2^{a,K}+\beta_2$ for some positive $\alpha_2$ and $\beta_2$,
\end{itemize}
for some $a \in (0,1)$ and sufficiently small~$s>0$,
where
\begin{align}
z[k] \coloneqq \begin{bmatrix}g^{(p)}[k]-g^{(p)*}\\\overline{p}[k]-p^*\\ \lambda[k]-\lambda^*\\ \end{bmatrix}
\end{align}
denotes the convergence error.
Then, from {\textbf{R1}} and {\textbf{R2}}, we determine that the loop gain is $s\alpha_1\alpha_2$. Noticing that the gain~$s\alpha_1\alpha_2$ becomes strictly smaller than $1$ for sufficiently small~$s$, we apply the small-gain theorem and show that $\|z\|_2^{a,K}$ is bounded for all $K>0$ resulting in the convergence of $z[k]$ to zero at a geometric rate~$\mathcal{O}(a^k)$. 
In the next results, we establish that the relations {\textbf{R1}} and {\textbf{R2}} hold and that algorithm~\eqref{pd_sced_un} converges geometrically fast.
[We omit the proofs, since they are analogous to those of similar results established in Section~\ref{sec:opf} dealing with rSOCP.]
\begin{proposition}\label{prop:H1_sced_un}
Let Assumption~\ref{objective_assumption} hold.
Then, under~\eqref{H1_sced_un}, we have that
\begin{align}
\textnormal{\textbf{R1. }} \|z\|_2^{a,K}\leq \alpha_1\|e\|_2^{a,K}+\beta_1,
\label{z_to_e_sced_un}
\end{align}
for some positive~$\alpha_1$ and $\beta_1$, $a \in (0,1)$, and sufficiently small~$s>0$.
\end{proposition}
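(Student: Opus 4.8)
The plan is to treat $\mathcal{H}_1$ as the projected primal-dual iteration \eqref{centr_pd_sced} driven by the exogenous input $e[k]=[e_g[k]^\top,e_p[k]^\top,e_\lambda[k]^\top]^\top$, and to show that the map $e\mapsto z$ is finite-gain stable in the weighted norm $\|\cdot\|_2^{a,K}$. First I would write the error dynamics: subtract the equilibrium relations (the KKT/fixed-point conditions satisfied by $(g^{(p)*},p^*,\lambda^*)$) from \eqref{H1_sced_un}, and use the nonexpansiveness of the box projection $[\cdot]_{x_1}^{x_2}$ to bound $\|z[k+1]\|_2$ by the norm of the ``pre-projection'' increment plus the forcing term $e[k]$. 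Concretely, $z[k+1]$ is, up to the projection, an affine map $z[k]\mapsto (I - sA)z[k] + Be[k]$, where $A$ collects the Hessian of $f$, the $\rho$-augmentation blocks $\rho M^\top M$ etc., and the skew-symmetric coupling between the primal block and $\lambda$ coming from the constraint $g^{(p)}-l^{(p)}-Mp$. The key structural point is that strong convexity of $f$ (Assumption~\ref{objective_assumption}, $f_i''\ge m$) together with the $\rho$-regularization and the skew-symmetric Lagrangian coupling makes $I-sA$ a contraction in an appropriate (possibly rescaled) Euclidean norm for all sufficiently small $s>0$: there is a constant $c>0$ with $\|(I-sA)w\|\le (1-cs)\|w\|$, or at least $\le \sqrt{1-cs}\,\|w\|$, on the relevant subspace. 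This is the classical primal-dual contraction estimate of \cite[Section~4.4]{NonlinearProgramming}; I would invoke it rather than re-derive it.

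Next I would convert this per-step contraction into the weighted-norm bound {\textbf{R1}}. Pick $a\in(0,1)$ with $\sqrt{1-cs}<a<1$ (possible once $s$ is small). From $\|z[k+1]\|_2\le a\|z[k]\|_2 + \|Be[k]\|_2$ (absorbing the harmless projection nonexpansiveness and any fixed offset into $\beta_1$), multiply by $a^{-(k+1)}$ and take the max over $0\le k\le K$: the $a\|z[k]\|_2$ term becomes $a^{-k}\|z[k]\|_2\le \|z\|_2^{a,K}$, and $a^{-(k+1)}\|Be[k]\|_2 \le a^{-1}\|B\|\,a^{-k}\|e[k]\|_2 \le a^{-1}\|B\|\,\|e\|_2^{a,K}$. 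Including the $k=0$ term $\|z[0]\|_2$ as a constant, this gives $\|z\|_2^{a,K}\le \alpha_1\|e\|_2^{a,K}+\beta_1$ with $\alpha_1 = a^{-1}\|B\|/(1-a)$ (after the standard rearrangement of the max-recursion) and $\beta_1$ proportional to $\|z[0]\|_2/(1-a)$; both are positive and independent of $K$. One must be a little careful with the two-branch average in \eqref{H1_sced_un_2}: since $\tfrac12[u+e_p]_{p^{\min}}^{p^{\max}}+\tfrac12[u-e_p]_{p^{\min}}^{p^{\max}}$ differs from $[u]_{p^{\min}}^{p^{\max}}$ by at most $\|e_p\|_2$ (again by nonexpansiveness and convexity of the box), this discrepancy is just another contribution to the input term and does not break the argument.

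The main obstacle I anticipate is the contraction estimate for $I-sA$ itself, because $A$ is not symmetric positive definite — the Lagrange-multiplier block contributes a skew-symmetric part, so $\|I-sA\|_2$ in the plain Euclidean norm is $1+O(s^2)$ at best, not $1-\Theta(s)$. Resolving this requires either (i) a small-gain/Lyapunov argument with a carefully chosen weighting matrix $P\succ 0$ so that $(I-sA)^\top P (I-sA)\preceq (1-cs)P$, exploiting that strong convexity controls the primal block and that the augmented term $\rho M^\top M$ plus strong convexity makes the reduced saddle operator coercive, or (ii) citing the known linear-rate result for regularized primal-dual dynamics and adapting its constants. I would go with a weighted-norm Lyapunov bound, stating the contraction as a lemma (analogous to the corresponding step in Section~\ref{sec:opf}) and then running the max-recursion above; the projection steps only ever help, so the only real work is the spectral/Lyapunov estimate on the linear part.
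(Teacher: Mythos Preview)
Your proposal is correct and follows essentially the same route as the paper (whose proof for this proposition is deferred to the analogous Proposition~\ref{prop:H1_opf_un}): nonexpansiveness of the projection to reduce to the pre-projection map, a mean-value linearization giving $G[k]-G^*=F[k]z[k]$ with $F[k]=I-sB[k]$, an argument that $B[k]$ has eigenvalues with strictly positive real part (your ``weighted Lyapunov $P$'' is exactly the paper's ``there exists an induced matrix norm with $\|F[k]\|\le b_1<1$,'' stated as Lemma~\ref{lem:F_norm}), and then the same $a^{-k}$-weighted max-recursion followed by norm equivalence to return to $\|\cdot\|_2$. Your handling of the two-branch average in \eqref{H1_sced_un_2} matches the paper's Lemma~\ref{lem:projection}.
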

\begin{proposition}\label{prop:H2_sced_un}
Let Assumptions~\ref{objective_assumption} and \ref{assume_comm_model_un} hold.
Then, under~\eqref{H2_sced_un}, we have that
\begin{align}\label{e_to_z_sced_un}
\textnormal{\textbf{R2. }}\|e\|_2^{a,K}\leq s\alpha_2\|z\|_2^{a,K}+\beta_2,
\end{align}
for some positive~$\alpha_2$ and $\beta_2$, $a \in (0,1)$, and sufficiently small~$s>0$.
\end{proposition}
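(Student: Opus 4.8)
The plan is to treat $\mathcal{H}_2$ in~\eqref{H2_sced_un} as a consensus-type system driven by $z[k]$, and to show that the consensus error contracts geometrically over windows of length $B$ while being forced only by $s\|z\|$-sized inputs. First I would introduce, for each edge $(i,j)\in\mathcal{E}_p$, the disagreement variables $\delta^p_{ij}[k]\coloneqq \hat p_{ij}[k]-\breve p_{ij}[k]$ and $\delta^y_{ij}[k]\coloneqq \hat y_{ij}[k]-\breve y_{ij}[k]$, stacking them into $\delta[k]\coloneqq[\delta^p[k]^\top,\delta^y[k]^\top]^\top$. The point of the averaging weights $a_{ij}[k]\in\{0,\tfrac12\}$ is that, when an edge is active, $(1-a_{ij})-a_{ij}=0$ kills the disagreement in one step; when inactive it is merely propagated. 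Using Assumption~\ref{assume_comm_model_un} (each edge active at least once per $B$ iterations), one gets a uniform contraction: there is $\sigma\in(0,1)$ with $\|\delta[(k+1)B]\|_2\le \sigma\|\delta[kB]\|_2 + (\text{forcing from }z\text{ and gradient increments over the window})$. I would also note that the gradient-tracking average is exactly $\overline{y}_{ij}[k]=-\lambda_i[k]+\lambda_j[k]$, as derived in the text just before~\eqref{pd_sced_un}, so the $y$-dynamics are driven by increments of $\partial L^{(i)}/\partial\hat p_{ij}$, which by the projection updates and Assumption~\ref{objective_assumption} are Lipschitz in $(g^{(p)},\hat p,\breve p,\lambda)$ with a bound proportional to $s$ times gradient magnitudes plus the disagreement itself.

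The second step is to bound the error outputs $e_g,e_p,e_\lambda$ by $\delta[k]$ and by $s\|z\|$. From their definitions, $M\overline{p}-M_0\hat p-N_0\breve p = \tfrac12(M_0-N_0)(\breve p-\hat p)$ up to sign bookkeeping, hence $\|e_g[k]\|_2\le s\rho c_1\|\delta^p[k]\|_2$ and $\|e_\lambda[k]\|_2\le s c_1\|\delta^p[k]\|_2$ for a constant $c_1$ depending only on $M_0,N_0$; and $e_p[k]=\tfrac12(\hat p[k]-\breve p[k]) - \tfrac{s}{2}(\hat y[k]-\breve y[k])$, so $\|e_p[k]\|_2\le \tfrac12\|\delta^p[k]\|_2+\tfrac{s}{2}\|\delta^y[k]\|_2$. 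Therefore $\|e[k]\|_2\le c_2(1+s)\|\delta[k]\|_2$. Converting to the weighted norm $\|\cdot\|_2^{a,K}$ and combining with the windowed contraction from the first step gives $\|\delta\|_2^{a,K}\le \gamma\|z\|_2^{a,K}+\beta_2$ once $a$ is chosen close enough to $1$ that $a^{-B}\sigma<1$; I would carry this out exactly as the small-gain/weighted-norm manipulation already demonstrated in Section~\ref{sec:preliminaries}, summing the geometric series of window contractions and absorbing the initial condition into $\beta_2$. Since the forcing of $\delta$ from $z$ enters through $s$-scaled gradient increments, the resulting bound reads $\|\delta\|_2^{a,K}\le s\,\tilde\alpha\|z\|_2^{a,K}+\beta_2$, and chaining with $\|e\|_2^{a,K}\le c_2(1+s)\|\delta\|_2^{a,K}$ yields exactly {\textbf{R2}} with $\alpha_2\coloneqq c_2(1+s)\tilde\alpha$ (bounded for $s$ in a fixed interval).

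The main obstacle I anticipate is handling the projections $[\cdot]_{p^{\min}}^{p^{\max}}$ cleanly inside the disagreement recursion: projection onto the same box is nonexpansive, so $\|[\hat p-s\hat y]-[\breve p-s\breve y]\|_2$ is controlled, but the averaging step mixes pre-projection and post-projection quantities asymmetrically across the two nodes, and one must verify that the "active-edge annihilation" still holds after projection — it does, because when $a_{ij}=\tfrac12$ both nodes project the \emph{same} argument $\tfrac12(\hat p_{ij}+\breve p_{ij})-s(\cdot)$ only if $\hat y_{ij}=\breve y_{ij}$, which in general fails, so instead one bounds the post-projection disagreement by the pre-projection disagreement $\tfrac{s}{2}|\hat y_{ij}-\breve y_{ij}|$ plus, on inactive edges, the full $|\hat p_{ij}-\breve p_{ij}|$, and then shows this feeds back consistently. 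A secondary technical point is the $\hat y$-recursion: its increment term $2(\partial L^{(i)}[k+1]/\partial\hat p_{ij}-\partial L^{(i)}[k]/\partial\hat p_{ij})$ must be bounded using the one-step increments of $(g^{(p)},\hat p,\breve p,\lambda)$, which are themselves $O(s)$-Lipschitz perturbations plus $O(\|\delta\|)$ terms, giving a closed inequality for $\|\delta\|_2^{a,K}$ that is solvable for small $s$. Once these two bookkeeping issues are dispatched, the weighted-norm summation is routine.
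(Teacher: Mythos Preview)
Your overall strategy---treat $\mathcal{H}_2$ as an edge-wise consensus system whose disagreement contracts over windows of length $B$ and is forced by terms tied to $z$---matches the paper's (see the proof of the analogous Proposition~\ref{prop:H2_opf_un}, carried out through Lemmas~\ref{lem:e_to_zy}--\ref{lem:d_to_zzt}). However, the way you extract the factor $s$ in front of $\|z\|_2^{a,K}$ has a genuine gap.

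The claim $\|\delta\|_2^{a,K}\le s\tilde\alpha\|z\|_2^{a,K}+\beta_2$ for the stacked vector $\delta=[\delta^p;\delta^y]$ fails on the $\delta^y$ block. The forcing in the $\hat y$-recursion is the gradient increment $2\big(\partial L^{(i)}[k{+}1]/\partial\hat p_{ij}-\partial L^{(i)}[k]/\partial\hat p_{ij}\big)$, which is Lipschitz in the one-step state change; but that state change is only $O(\|z\|)$, not $O(s\|z\|)$. For example, using $g^{(p)*}_i=\mathcal P_{\mathcal X}(g^{(p)*}_i-s\nabla_i^*)$ one gets
\[
\bigl|g^{(p)}_i[k{+}1]-g^{(p)}_i[k]\bigr|\le 2\bigl|g^{(p)}_i[k]-g^{(p)*}_i\bigr|+s\bigl|\nabla_i[k]-\nabla_i^*\bigr|,
\]
and the first term carries no $s$ (precisely what happens when $g^{(p)*}_i$ sits on a box face so $\nabla_i^*\neq 0$). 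Accumulating over a window therefore yields only $\|\delta^y\|_2^{a,K}\le C\|z\|_2^{a,K}+\text{const}$. Plugging that into your lossy bound $\|e\|\le c_2(1+s)\|\delta\|$ gives $\|e\|_2^{a,K}\le C'\|z\|_2^{a,K}$ without the required $s$, so the small-gain loop does not close.

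The fix---and this is exactly the decomposition the paper uses---is to keep $\delta^p$ and $\delta^y$ separate and record where $s$ actually appears. First, from the definitions of $e_g,e_p,e_\lambda$ one has $\|e\|\le C_1\|\delta^p\|+sC_2\|\delta^y\|$ (the $s$ on $\delta^y$ is explicit in $e_p=\tfrac12\delta^p-\tfrac{s}{2}\delta^y$, and $e_g,e_\lambda$ already carry $s$); this is Lemma~\ref{lem:e_to_zy}. Second, $\|\delta^p\|_2^{a,K}\le s\beta_3\|\delta^y\|_2^{a,K}+\beta_0$, because when an edge is active the post-projection $p$-disagreement is at most $s|\hat y_{ij}-\breve y_{ij}|$ by nonexpansiveness---your ``main obstacle'' paragraph is in fact this step, and it supplies one of the two $s$'s (Lemma~\ref{lem:z_to_y}). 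Third, $\|\delta^y\|_2^{a,K}\le C\|z\|_2^{a,K}+\text{const}$ with \emph{no} $s$ needed (Lemmas~\ref{lem:y_to_d}--\ref{lem:d_to_zzt}). Chaining these three gives $\|e\|_2^{a,K}\le s(C_1\beta_3+C_2)C\|z\|_2^{a,K}+\beta_2$, which is \textbf{R2}. In short, the $s$ does not come from the gradient increments being small; it comes from the fact that $\delta^y$ enters $e$ only through an $s$-scaled term \emph{and} that $\delta^p$ is itself $s$-dominated by $\delta^y$.
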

\begin{proposition}\label{prop:small_gain_sced_un}
Let Assumptions~\ref{objective_assumption} and \ref{assume_comm_model_un} hold.
Then, under algorithm~\eqref{pd_sced_un}, we have that
\begin{align}
\|z\|_2^{a,K}\leq \beta,
\label{z_relation_sced_un}
\end{align}
for some $a \in (0,1)$, $\beta>0$, and sufficiently small~$s>0$. In particular, $x^{(i)}[k]$ converges to $x^{(i)*}$, $\forall i$, at a geometric rate~$\mathcal{O}(a^k)$.
\end{proposition}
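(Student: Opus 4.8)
The plan is to obtain Proposition~\ref{prop:small_gain_sced_un} as a direct consequence of Propositions~\ref{prop:H1_sced_un} and \ref{prop:H2_sced_un} together with the small-gain argument spelled out in Section~\ref{sec:preliminaries}. Proposition~\ref{prop:H1_sced_un} supplies \textbf{R1}, $\|z\|_2^{a,K}\leq \alpha_1\|e\|_2^{a,K}+\beta_1$, and Proposition~\ref{prop:H2_sced_un} supplies \textbf{R2}, $\|e\|_2^{a,K}\leq s\alpha_2\|z\|_2^{a,K}+\beta_2$; recalling the feedback picture of Fig.~\ref{fig:feedback_sced}, where $\mathcal{H}_1$ realizes the map $e\mapsto z$ and $\mathcal{H}_2$ the map $z\mapsto e$, the loop gain is $s\alpha_1\alpha_2$. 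The first task is to make sure a single $a\in(0,1)$ and a single upper threshold on $s$ serve both propositions simultaneously, since each is stated with its own ``some $a$'' and ``sufficiently small $s$''; this follows by inspecting their proofs, where the admissible $a$ ranges over an interval that depends continuously on $s$, so shrinking $s$ makes both conditions hold at once. Fix such an $a$ and restrict $s$ accordingly.

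With \textbf{R1} and \textbf{R2} in force, I would substitute the latter into the former to get
\begin{align*}
\|z\|_2^{a,K}\leq \alpha_1\big(s\alpha_2\|z\|_2^{a,K}+\beta_2\big)+\beta_1 = s\alpha_1\alpha_2\|z\|_2^{a,K}+\alpha_1\beta_2+\beta_1.
\end{align*}
Since $\alpha_1,\alpha_2>0$ are fixed, tightening the bound on $s$ if necessary so that $s<1/(\alpha_1\alpha_2)$ makes the loop gain $s\alpha_1\alpha_2<1$, and rearranging gives
\begin{align*}
\|z\|_2^{a,K}\leq \frac{\alpha_1\beta_2+\beta_1}{1-s\alpha_1\alpha_2}\eqqcolon\beta,
\end{align*}
a finite constant independent of $K$, which is \eqref{z_relation_sced_un}. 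By the definition of $\|\cdot\|_2^{a,K}$ this yields $a^{-k}\|z[k]\|_2\leq\beta$, i.e. $\|z[k]\|_2\leq\beta a^k$ for all $k$, so $g^{(p)}[k]\to g^{(p)*}$, $\overline{p}[k]\to p^*$, and $\lambda[k]\to\lambda^*$ at the geometric rate $\mathcal{O}(a^k)$.

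It then remains to promote convergence of the averages $\overline{p}[k]$, $\overline{y}[k]$ to convergence of the per-node quantities that make up $x^{(i)}[k]$, namely $\hat{p}_{ij}[k]$, $\breve{p}_{li}[k]$ and the gradient trackers. The observation here is that, once $\|z\|_2^{a,K}\leq\beta$ is established, \textbf{R2} forces $\|e\|_2^{a,K}\leq s\alpha_2\beta+\beta_2$, so $e[k]\to 0$ geometrically as well. Because $e_g[k]$, $e_p[k]$, $e_\lambda[k]$ are built precisely from the consensus gaps $\hat{p}[k]-\breve{p}[k]$ and $\hat{y}[k]-\breve{y}[k]$ --- using $M=M_0+N_0$ and $\overline{p}=\tfrac12(\hat p+\breve p)$ one checks $M\overline p-M_0\hat p-N_0\breve p$ is a linear image of $\hat p-\breve p$ whose matrix has full column rank since $\mathcal{G}_p$ is a tree --- the decay of $e[k]$ forces $\hat{p}[k]-\breve{p}[k]\to 0$, and then the $e_p$ relation forces $\hat{y}[k]-\breve{y}[k]\to 0$, all at rate $\mathcal{O}(a^k)$. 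Writing $\hat{p}_{ij}[k]=\overline{p}_{ij}[k]+\tfrac12\big(\hat{p}_{ij}[k]-\breve{p}_{ij}[k]\big)$ (and symmetrically for $\breve{p}_{ij}[k]$, $\breve{p}_{li}[k]$, $\hat{y}_{ij}[k]$, $\breve{y}_{ij}[k]$) then gives $\hat{p}_{ij}[k],\breve{p}_{ij}[k]\to p^*_{ij}$, and stacking the components shows $x^{(i)}[k]\to x^{(i)*}$ for every $i$ at rate $\mathcal{O}(a^k)$.

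I expect the main obstacle to be the bookkeeping in this last step: verifying that the disturbance signal $e[k]$ controls \emph{all} of the consensus discrepancies rather than merely some projection of them, so that boundedness of $\|e\|_2^{a,K}$ genuinely pins down each local estimate (this is where the tree structure, via the full-column-rank of the relevant incidence matrix, is used). A secondary point requiring care is the compatibility of the constants $a$, $\alpha_1$, $\alpha_2$ and of the admissible stepsize ranges across the two propositions, so that the small-gain manipulation is carried out with one consistent choice of $a$ and one $s$; beyond that, the argument is the mechanical chaining and rearrangement of \textbf{R1} and \textbf{R2} already previewed in Section~\ref{sec:preliminaries}.
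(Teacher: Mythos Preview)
Your small-gain step---substituting \textbf{R2} into \textbf{R1} and rearranging to obtain $\|z\|_2^{a,K}\leq(\alpha_1\beta_2+\beta_1)/(1-s\alpha_1\alpha_2)$---is exactly what the paper does (see the proof of the analogous Proposition~\ref{prop:small_gain_opf_un}), including the observation that $s$ can be shrunk to make the loop gain strictly less than one.

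For the passage from convergence of $\overline{p}[k]$ to convergence of the per-node estimates $\hat{p}_{ij}[k]$, $\breve{p}_{li}[k]$, your route differs from the paper's. The paper goes back \emph{inside} the proof of the $\mathcal{H}_2$ finite-gain result and reuses the chain of intermediate lemmas there (the bounds on $\|\tilde y\|_2^{a,K}$ and $\|\tilde z\|_2^{a,K}$ directly in terms of $\|z\|_2^{a,K}$), concluding with the triangle inequality $\|\hat{\mathbf x}-\mathbf x^*\|_2^{a,K}\leq\|z\|_2^{a,K}+\|\tilde z\|_2^{a,K}$. You instead treat \textbf{R2} as a black box to bound $\|e\|_2^{a,K}$ and then invert the linear map from the consensus gaps to $e$: indeed $M\overline p-M_0\hat p-N_0\breve p=-\tfrac12(M_0-N_0)(\hat p-\breve p)$, and the unsigned incidence matrix $M_0-N_0$ of a tree has full column rank, so $e_\lambda$ controls $\hat p-\breve p$, after which $e_p$ controls $\hat y-\breve y$. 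This is correct and has the merit of not reopening the proof of Proposition~\ref{prop:H2_sced_un}; the paper's route has the merit of recycling already-proved estimates without the extra linear-algebra observation. One small simplification: $x^{(i)}$ as defined contains $g_i^{(p)}$, $\hat p_{ij}$, $\breve p_{li}$, and $\lambda_i$ but not the gradient trackers, so once $\hat p-\breve p$ is shown to vanish geometrically you are already done.
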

Finally, we show that $(g^{(p)*},p^*)$ is the solution of SCED. 
\begin{lemma}\label{lem:sced_sol}
Consider $(g^{(p)*},p^*,\lambda^*)$, namely, the equilibrium of \eqref{centr_pd_sced}.
Then, $(g^{(p)*},p^*)$ is the solution of SCED.
\end{lemma}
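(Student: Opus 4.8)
The plan is to show that the equilibrium point $(g^{(p)*},p^*,\lambda^*)$ of the projected primal-dual iteration \eqref{centr_pd_sced} exactly satisfies the Karush-Kuhn-Tucker (KKT) conditions of the convex program SCED, and then invoke convexity to conclude global optimality. First I would write out the fixed-point equations: at equilibrium, \eqref{pd_sced1} gives $g^{(p)*} = \big[g^{(p)*} - s\,\partial L/\partial g^{(p)}\big]_{g^{(p)\min}}^{g^{(p)\max}}$, \eqref{pd_sced2} gives the analogous identity for $p^*$ with the box $[p^{\min},p^{\max}]$, and the $\lambda$-update forces $\partial L/\partial\lambda = g^{(p)*} - l^{(p)} - Mp^* = 0$, i.e. the power-balance constraint \eqref{sced_power_balance} holds. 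In particular the quadratic penalty term $\tfrac{\rho}{2}\|g^{(p)} - l^{(p)} - Mp\|_2^2$ and its gradient vanish at the equilibrium, so $\partial L/\partial g^{(p)}$ reduces to $\nabla f(g^{(p)*}) + \lambda^*$ and $\partial L/\partial p$ reduces to $-M^\top\lambda^*$.

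Next I would use the standard characterization of a Euclidean projection onto a box as the solution of a variational inequality: for any $s>0$, $y = [y - s\,d]_{x_1}^{x_2}$ if and only if $\langle d, w - y\rangle \ge 0$ for all $w\in[x_1,x_2]$ (equivalently, $-d$ lies in the normal cone to the box at $y$). Applying this to the $g^{(p)}$-equation yields $\langle \nabla f(g^{(p)*}) + \lambda^*, w - g^{(p)*}\rangle \ge 0$ for all $w$ in the generation box, and applying it to the $p$-equation yields $\langle -M^\top\lambda^*, w - p^*\rangle \ge 0$ for all $w$ in the line-flow box. Introducing multipliers for the active bound constraints, these two variational inequalities are precisely the stationarity and complementary-slackness conditions of SCED, with $\lambda^*$ playing the role of the multiplier for the equality constraint; together with primal feasibility (the box constraints are enforced by the projections, and \eqref{sced_power_balance} was shown above) this gives a full KKT tuple.

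Finally, since $f$ is convex (indeed strongly convex) by Assumption~\ref{objective_assumption}, the feasible set of SCED is a polyhedron, and Slater-type regularity is immediate for this linearly-constrained problem, the KKT conditions are sufficient for global optimality; hence $(g^{(p)*},p^*)$ solves SCED. I expect the only mildly delicate point to be the bookkeeping that connects the projection/variational-inequality form to the usual KKT statement with explicit bound multipliers, and the observation that the regularization term is inert at equilibrium precisely because the equality constraint is satisfied there — once that is noted, the argument is routine. An alternative, essentially equivalent route is to combine the two variational inequalities with the convexity inequality $f(w) \ge f(g^{(p)*}) + \langle\nabla f(g^{(p)*}), w - g^{(p)*}\rangle$ to show directly that $f(g^{(p)}) \ge f(g^{(p)*})$ for every feasible $(g^{(p)},p)$, which sidesteps naming the bound multipliers altogether.
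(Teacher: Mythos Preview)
Your proposal is correct and follows essentially the same approach as the paper: write the fixed-point equations of \eqref{centr_pd_sced}, observe that the $\lambda$-equation forces primal feasibility of the equality constraint (so the augmented-Lagrangian penalty is inert), translate the projection fixed-point identities into the KKT stationarity/complementarity conditions for the box constraints, and invoke convexity to conclude. The paper carries this out by writing the KKT system \eqref{KKT} with explicit bound multipliers $\mu^*,\nu^*,\alpha^*,\beta^*$ and citing \cite[Proposition~3.3.1]{NonlinearProgramming}, whereas you phrase the same step via the variational-inequality characterization of the projection; the content is identical.
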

\begin{proof}
At the equilibrium, we have that
\begin{align*}
g^{(p)*} &= \left[g^{(p)*} - s\nabla f(g^{(p)*}) - s\lambda^* \right]_{g^{(p)\min}}^{g^{(p)\max}},\\
p^* &= \left[p^* + sM^\top\lambda^*\right]_{p^{\min}}^{p^{\max}},\\
\lambda^* &= \lambda^* + s(g^{(p)*}- l^{(p)}-Mp^*).
\end{align*}
Then, the following relations hold:
\begin{subequations}\label{KKT}
\begin{align}
0 &= \nabla f(g^{(p)*})+\lambda^* + \mu^* - \nu^*,\\
0 &= M^\top\lambda^*-\alpha^*+\beta^*,\\
0 &= g^{(p)*}- l^{(p)}-Mp^*,\\
0 &=\mu_i^*(g^{(p)*}_i-g^{(p)\max}_i),\\
0 &=\nu_i^*(g^{(p)\min}_i-g^{(p)*}_i),i=1,\dots,n,\\
0 &=\alpha_{ij}^*(p_{ij}^*-p^{\max}_{ij}),\\
0 &=\beta_{ij}^*(p^{\min}_{ij}-p_{ij}^*), (i,j)\in\mathcal{E}_p,
\end{align}
\end{subequations}
for some non-negative $\mu_i^*$, $\nu_i^*$, $i=1,\dots,n$, $\alpha_{ij}^*$, and $\beta_{ij}^*$, $(i,j)\in\mathcal{E}_p$, where $\mu^*=[\mu_1^*,\dots,\mu_n^*]^\top$, $\nu^*=[\nu_1^*,\dots,\nu_n^*]^\top$, $\alpha^* = [\{\alpha_{ij}^*\}_{(i,j)\in\mathcal{E}_p}]$, and $\beta^* = [\{\beta_{ij}^*\}_{(i,j)\in\mathcal{E}_p}]$. Noticing that \eqref{KKT} represents the Karush-Kuhn-Tucker (KKT) conditions for SCED, it follows from \cite[Proposition~3.3.1]{NonlinearProgramming} that $(g^{(p)*},p^*)$ is the solution of SCED.
\end{proof}

\subsection{Numerical Simulations}
\label{subsec:simulations_sced_un}
Next, we present numerical results to illustrate the performance of the distributed primal-dual algorithm~\eqref{pd_sced_un} over undirected graph $\mathcal{G}^{(c)}[k]$ using the IEEE~$69$--bus radial test system \cite{Matpower}. 

A subset of buses are designated to have a DER. For a DER at bus~$i$, we choose $f_i(p_i) = a_ip_i^2$, where $a_i>0$ is randomly selected. It is assumed that each communication link becomes inactive with probability~$0.4$. The algorithm uses a constant stepsize~$s = 2\times10^{-2}$ and $\rho=2$. For initialization, we use $g^{(p)}[0] = l_p[0]$, $\hat{p}_{ij}[0]=0$, $\breve{p}_{ij}[0]=0$, $(i,j)\in\mathcal{E}_p$, and 
\[\hat{y}_{ij}[0] = 2\frac{\partial L^{(i)}[0]}{\partial \hat{p}_{ij}}, \quad
\breve{y}_{ij}[0] = 2\frac{\partial L^{(j)}[0]}{\partial \breve{p}_{ij}}.\]

In the distributed implementation, communicating data takes much longer than one iteration executed by a computing device. Rather than the total number of iterations, the number of communication attempts can serve as a more appropriate performance metric to evaluate the practical usefulness of the algorithm.
We believe that it is reasonable to assume that a computing device is able to perform a number of iterations (less than 100) between consecutive communication attempts. Let $m$ denote the number of iterations between consecutive communication attempts. In the numerical example, we used different values of $m$. We note that making $m$ large or even finding a minimum of the local Lagrangians, $L^{(i)}(x^{(i)})$, $i\in\mathcal{V}_p$, does not necessarily make the performance better. On the contrary, keeping $m$ relatively small ($m<20$) often achieves a much better performance.

In Fig.~\ref{fig:num_results_sced}, we compare the performance of algorithm~\eqref{pd_sced_un}, for convenience referred to as~$\mathbf{A}_1$, 
where we recall that
\begin{subequations}\label{grad_track_sced_rpt}
\begin{align}
\hat{y}_{ij}[k+1] &= (1-a_{ij}[k]) \hat{y}_{ij}[k] + a_{ij}[k]\breve{y}_{ij}[k]\nonumber\\
&\quad+ 2\left(\frac{\partial L^{(i)}[k+1]}{\partial \hat{p}_{ij}} - \frac{\partial L^{(i)}[k]}{\partial \hat{p}_{ij}}\right), \\
\breve{y}_{ij}[k+1] &= (1-a_{ij}[k]) \breve{y}_{ij}[k] + a_{ij}[k]\hat{y}_{ij}[k]\nonumber\\
&\quad + 2\left(\frac{\partial L^{(j)}[k+1]}{\partial \breve{p}_{ij}} - \frac{\partial L^{(j)}[k]}{\partial \breve{p}_{ij}}\right),
\end{align}
\end{subequations}
against that of algorithm~\eqref{pd_sced_un1}--\eqref{local_grad}, referred to as~$\mathbf{A}_2$, where 
\begin{align}\label{local_grad_rpt}
\hat{y}_{ij}[k] &= \frac{\partial L^{(i)}[k]}{\partial \hat{p}_{ij}},\quad
\breve{y}_{ij}[k] = \frac{\partial L^{(j)}[k]}{\partial \breve{p}_{ij}}.
\end{align}
We demonstrate that updating $\hat{y}_{ij}[k]$ and $\breve{y}_{ij}[k]$ using \eqref{grad_track_sced_rpt} results in a much better performance than using \eqref{local_grad_rpt}. 
Figure~\ref{fig:num_results_sced} shows the trajectory of the relative cost error of the obtained solutions, namely, \[\frac{|f(g^{(p)}[k])-f(g^{(p)*})|}{f(g^{(p)*})},\] and the evolution of the largest constraint violation. In both algorithms, each node runs $m=5$ iterations between consecutive communication attempts. The results in Fig.~\ref{fig:num_results_sced} demonstrate that $\mathbf{A}_1$ has geometric convergence rate, and converges significantly faster than $\mathbf{A}_2$. We note that $\mathbf{A}_2$ has asymptotic convergence rate and requires stepsize $s$ to go to zero asymptotically. In the simulations, $\mathbf{A}_2$ uses $s[k] = a/(k+b)$, with $a = 1$, and $b=10$.
\begin{figure}
    \begin{subfigure}[t]{0.5\textwidth}
        \centering
        \includegraphics[trim=0cm 0cm 0cm 0cm, clip=true, scale=0.45]{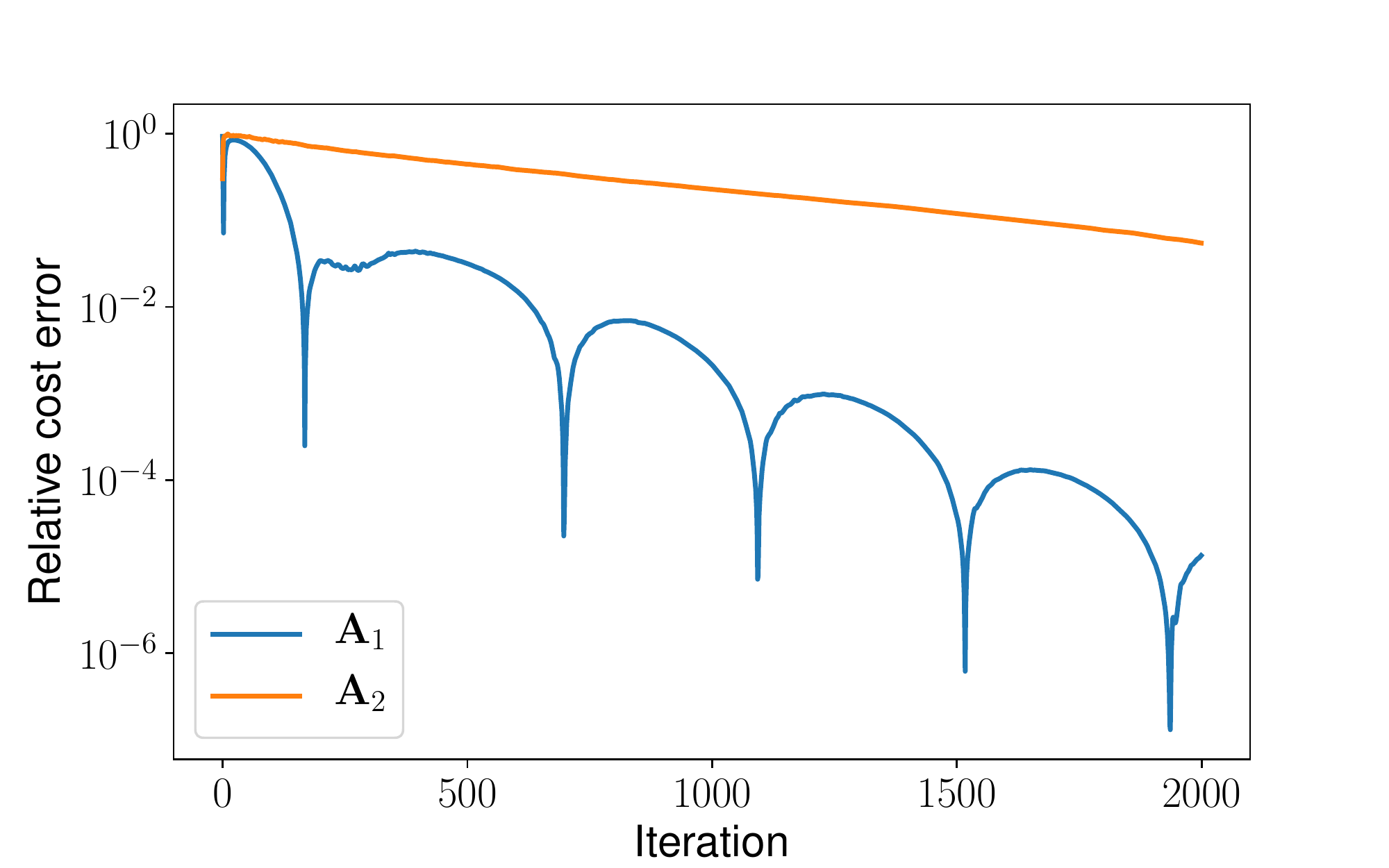}
        \caption{Trajectories of the relative cost error for the estimated solutions}
        \label{fig:cost-sced}
    \end{subfigure}\vspace{10pt}\\
    \begin{subfigure}[t]{0.5\textwidth}
        \centering
        \includegraphics[trim=0cm 0cm 0cm 0cm, clip=true, scale=0.45]{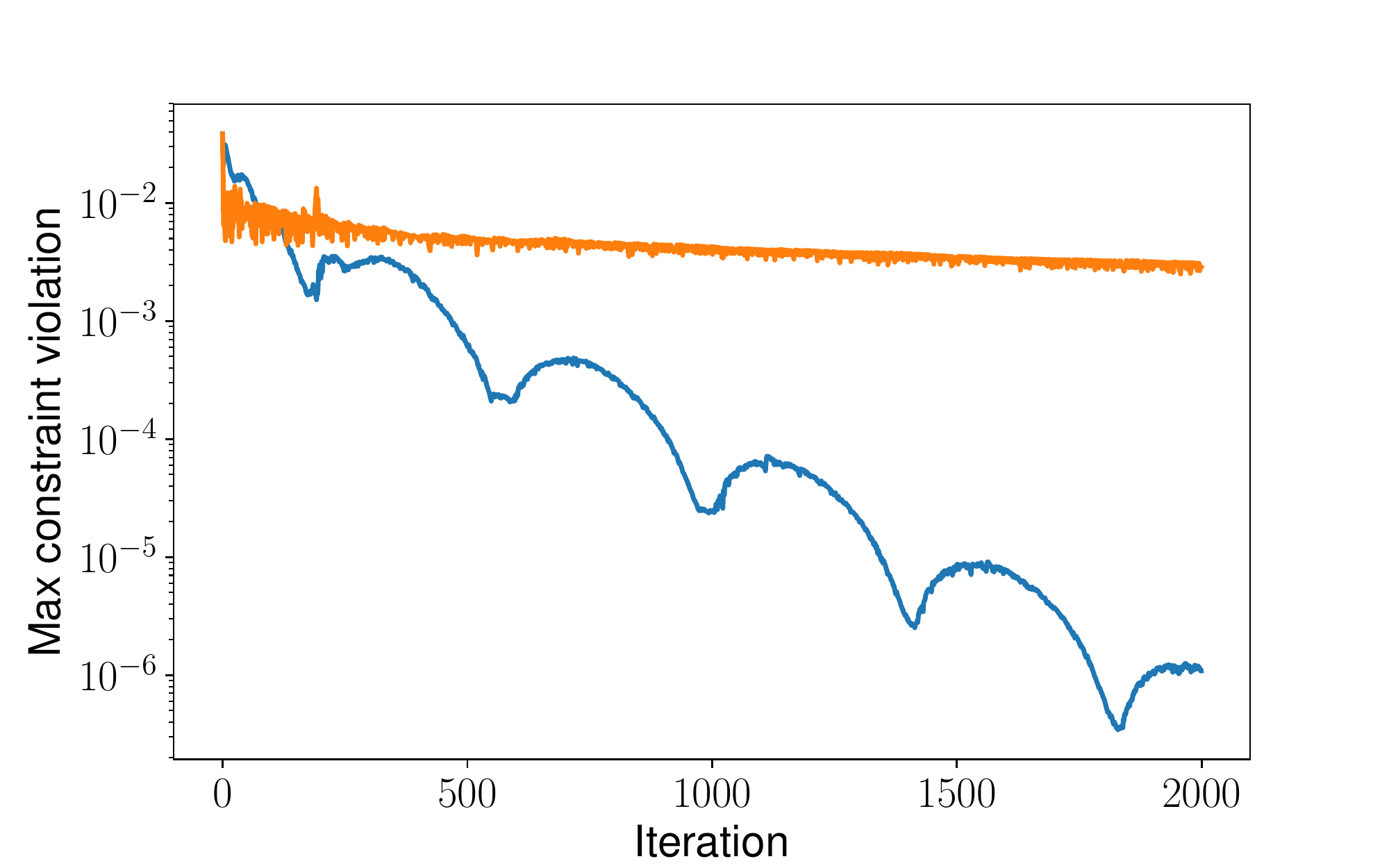}
        \caption{Evolution of the largest constraint violation}
        \label{fig:constr-sced}
    \end{subfigure}
    \vspace{10pt}
    \caption{A comparison of the performances of $\mathbf{A}_1$ and $\mathbf{A}_2$.}
    \label{fig:num_results_sced}
\end{figure}

\subsection{Time-Varying Directed Communication Graphs}\label{subsec:sced_dd}
When $\mathcal{G}^{(c)}[k]$ is directed, the iterations in \eqref{pd_sced_un} fail to converge. In the following, we provide the robust extension of \eqref{pd_sced_un} that solves SCED over time-varying directed graphs.

In the robust extension, we let node~$i$ perform the same updates for $g^{(p)}_i$ and $\lambda_i$ as the ones in \eqref{pd_sced_un}, namely,
\begin{subequations}\label{pd_sced_dd0}
\begin{align}
g^{(p)}_i[k+1] &= \left[g^{(p)}_i[k] - s\frac{\partial L^{(i)}[k]}{\partial g^{(p)}_i}\right]_{g^{(p)\min}_i}^{g^{(p)\max}_i},\\
\lambda_i[k+1] &= \lambda_i[k] + s\frac{\partial L^{(i)}[k]}{\partial \lambda_i}.
\end{align}
\end{subequations}
The only difference between \eqref{pd_sced_un} and its robust extension is how the averaging step is performed in the updates of the flow and gradient estimates. The key idea behind the averaging step in the robust extension is to let the neighboring nodes perform averaging exactly once over possibly longer time periods. In other words, we ensure that, for any given $(i,j)\in\mathcal{E}_p$, there exists a sequence of time instants~$\{T_k\}_{k=0}^\infty$ such that nodes~$i$ and $j$ perform averaging (using the values they received from each other) exactly once during each time interval~$(T_k,T_{k+1})$, $k=0,1,\dots$.
One of the simplest ways to implement such strategy over directed $\mathcal{G}^{(c)}[k]$ is to let nodes~$i$ and $j$ perform averaging in an alternating fashion. In other words, once node~$i$ performs averaging at time instant~$t_k$ using the value received from node~$j$, it waits for node~$j$ to perform averaging (before node~$i$ performs averaging again). While in this waiting mode, node~$i$ still performs a local update and, possibly, averaging with other neighbors. To implement this strategy over directed $\mathcal{G}^{(c)}[k]$, nodes~$i$ and $j$ need to maintain and communicate certain acknowledgement flags. By sending a flag, a node intends to let its neighbor know whether or not it has performed averaging; then, based on this information, the neighbor decides whether or not it should perform averaging. Below, we provide more details of this approach.

Suppose node~$i$ receives $\breve{p}_{ij}[k]$ and the flag from node~$j$ at time instant~$t_k$. If the status of the received flag is different from the previously received one, then, node~$i$ flips the status of its own flag, stores $\hat{p}_{ij}[k]$ and $\breve{p}_{ij}[k]$, and performs averaging as follows:
\begin{align}
\hat{p}_{ij}[k+1] &= \Big[(1-a_{ij}[k])\hat{p}_{ij}[k]+ a_{ij}[k]\breve{p}_{ij}[k]\nonumber\\
&\quad  - s\hat{y}_{ij}[k]\Big]_{p_{ij}^{\min}}^{p_{ij}^{\max}},
\end{align} 
Over the next iterations, it keeps sending $\hat{p}_{ij}[k]$, $\breve{p}_{ij}[k]$ and its flag to node~$j$, until node~$i$ receives a different flag from node~$j$. Meanwhile, if node~$j$ receives a different flag (from the previously received one) from node~$i$ (which means that node~$i$ has performed averaging) at some time instant~$t_\tau> t_k$, node~$j$ flips the status of its own flag and performs averaging, but slightly differently, as follows:
\begin{align}
\breve{p}_{ij}[\tau+1] &= \Big[(1-a_{ij}[k])\breve{p}_{ij}[k]+ a_{ij}[k]\hat{p}_{ij}[k]\nonumber\\&\quad + (\breve{p}_{ij}[\tau] - \breve{p}_{ij}[k]) - s\breve{y}_{ij}[\tau\Big]_{p_{ij}^{\min}}^{p_{ij}^{\max}}.
\label{j_ave}
\end{align} 
Note that, in the averaging step, node~$j$ uses the same values that node~$i$ used at time instant~$t_k$. In \eqref{j_ave}, we also have $(\breve{p}_{ij}[\tau] - \breve{p}_{ij}[k])$, which is the sum of all gradient terms, $s\breve{y}_{ij}[t]$, $t=k,k+1,\dots,\tau$, that have been accumulated since time instant~$t_k$. In this way, we are able to mimic the corresponding iteration in algorithm \eqref{pd_sced_un}. Also, note that if nodes~$i$ and $j$ happen to perform averaging within the same time period~$(t_k,t_{k+1})$, then, $\tau = k$, and \eqref{j_ave} exactly mimics the corresponding iteration in algorithm \eqref{pd_sced_un}. This scheme, where nodes perform averaging in alternating fashion, is referred to as the \textit{alternating averaging protocol} (see, e.g., \cite{ZhDo18,ZhDo19}). We now formally define the updates of the flow and gradient estimates as follows.
For each $(i,j)\in\mathcal{E}_p$, node~$i$ runs the following iterations: 
\begin{subequations}\label{pd_sced_dd1}
\begin{align}
\hat{p}_{ij}[k+1] &= \Big[(1-a_{ij}[k])\hat{p}_{ij}[k]+ a_{ij}[k]\breve{p}_{ij}[k]\nonumber\\
&\quad  - s\hat{y}_{ij}[k]\Big]_{p_{ij}^{\min}}^{p_{ij}^{\max}},\\ 
\hat{y}_{ij}[k+1] &= (1-a_{ij}[k]) \hat{y}_{ij}[k] + a_{ij}[k]\breve{y}_{ij}[k]\nonumber\\
&\quad + 2\left(\frac{\partial L^{(i)}[k+1]}{\partial \hat{w}_{ij}} - \frac{\partial L^{(i)}[k]}{\partial \hat{w}_{ij}}\right), 
\end{align}
\end{subequations}
while node~$j$ executes the iterations given below:
\begin{subequations}\label{pd_sced_dd2}
\begin{align}
\breve{p}_{ij}[k+1] &= \Big[(1-a_{ji}[k])\breve{r}_{ij}[k]+ a_{ji}[k]\hat{r}_{ij}[k]\nonumber\\
&\quad+\breve{p}_{ij}[k] - \breve{r}_{ij}[k]- s\breve{y}_{ij}[k]\Big]_{p_{ij}^{\min}}^{p_{ij}^{\max}},\\
\breve{y}_{ij}[k+1] &= (1-a_{ij}[k]) \breve{\rho}_{ij}[k] + a_{ij}[k]\hat{\rho}_{ij}[k]\nonumber\\&\quad+\breve{y}_{ij}[k] - \breve{\rho}_{ij}[k]\nonumber\\
&\quad + 2\left(\frac{\partial L^{(i)}[k+1]}{\partial \breve{w}_{ij}} - \frac{\partial L^{(i)}[k]}{\partial \breve{w}_{ij}}\right),
\end{align}
\end{subequations}
where $a_{ij}[k]$, $a_{ji}[k]$, $\hat{r}_{ij}[k]$, $\breve{r}_{ij}[k]$, $\hat{\rho}_{ij}[k]$, and $\breve{\rho}_{ij}[k]$ are updated using the \textit{alternating averaging protocol} (see, e.g., \cite{ZhDo18,ZhDo19}):
\begin{subequations}\label{alt-ave}
\begin{align*}
\mbox{node~$i$:}\\
\breve{\phi}_{ij}[k] &= \left\{\begin{array}{l l}\phi_{ji}[k] &\mbox{if } (j,i)\in\vec{\mathcal{E}}_c[k],\\ \breve{\phi}_{ij}[k-1] & \mbox{otherwise,}\end{array}\right.\\
\phi_{ij}[k] &= \left\{\begin{array}{l l}\neg\phi_{ij}[k-1] &\mbox{if } \breve{\phi}_{ij}[k]\neq \breve{\phi}_{ij}[k-1],\\ \phi_{ij}[k-1] & \mbox{otherwise,}\end{array}\right.\\
a_{ij}[k] &= \left\{\begin{array}{l l}0.5 &\mbox{if } (j,i)\in\vec{\mathcal{E}}_c[k], \breve{\phi}_{ij}[k]\neq \breve{\phi}_{ij}[k-1],\\ 0 & \mbox{otherwise,}\end{array}\right.\\
\mbox{node~$j$:}\\
\hat{\phi}_{ij}[k] &= \left\{\begin{array}{l l}\phi_{ij}[k]&\mbox{if } (i,j)\in\vec{\mathcal{E}}_c[k],\\ \hat{\phi}_{ij}[k-1] & \mbox{otherwise,}\end{array}\right.\\
\phi_{ji}[k] &= \left\{\begin{array}{l l}\neg\phi_{ji}[k-1] &\mbox{if } \hat{\phi}_{ij}[k]\neq \hat{\phi}_{ij}[k-1],\\ \phi_{ji}[k-1] & \mbox{otherwise,}\end{array}\right.\\
a_{ji}[k] &= \left\{\begin{array}{l l}0.5 &\mbox{if } (i,j)\in\vec{\mathcal{E}}_c[k], \hat{\phi}_{ij}[k]\neq \hat{\phi}_{ij}[k-1],\\ 0 & \mbox{otherwise,}\end{array}\right.\\
\hat{r}_{ij}[k] &= \left\{\begin{array}{l l}\hat{p}_{ij}[t_k] &\mbox{if } (i,j)\in\vec{\mathcal{E}}_c[k], \hat{\phi}_{ij}[k]\neq \hat{\phi}_{ij}[k-1],\\ 0 & \mbox{otherwise,}\end{array}\right.\\
\breve{r}_{ij}[k] &= \left\{\begin{array}{l l}\breve{p}_{ij}[t_k] &\mbox{if } (i,j)\in\vec{\mathcal{E}}_c[k], \breve{\phi}_{ij}[k]\neq \breve{\phi}_{ij}[k-1],\\ 0 & \mbox{otherwise,}\end{array}\right.\\
\hat{\rho}_{ij}[k] &= \left\{\begin{array}{l l}\hat{y}_{ij}[t_k] &\mbox{if } (i,j)\in\vec{\mathcal{E}}_c[k], \hat{\phi}_{ij}[k]\neq \hat{\phi}_{ij}[k-1],\\ 0 & \mbox{otherwise,}\end{array}\right.\\
\breve{\rho}_{ij}[k] &= \left\{\begin{array}{l l}\breve{y}_{li}[t_k] &\mbox{if } (i,j)\in\vec{\mathcal{E}}_c[k], \breve{\phi}_{ij}[k]\neq \breve{\phi}_{ij}[k-1],\\ 0 & \mbox{otherwise,}\end{array}\right.
\tag{\ref{alt-ave}}
\end{align*}
\end{subequations}
where $\neg$ denotes the logical negation, i.e., $\neg \xi = 1$ if $\xi=0$, and $\neg \xi = 0$, otherwise; $t_k\leq k$ denotes the latest time, when node~$i$ performed averaging. In protocol \eqref{alt-ave}, $\phi_{ij}$ and $\phi_{ji}$ are the acknowledgement flags maintained by nodes~$i$ and $j$, respectively. Nodes~$i$ and $j$ store the received statuses of the flags~$\phi_{ji}$ and $\phi_{ij}$ in $\breve{\phi}_{ij}$ and $\hat{\phi}_{ij}$, respectively. 

Initially, $\phi_{ji}[0] = 1$, $\phi_{ij}[0] = 0$, $\breve{\phi}_{ij}[0] = 0$, and $\hat{\phi}_{ij}[0] = 0$. The reason for setting the node~$j$'s flag, $\phi_{ji}[0]$, to $1$ is to initiate the protocol execution. [If both flags, $\phi_{ij}$ and $\phi_{ji}$, are set to zero, the protocol will never execute.] 
Below, we state the convergence result for the robust primal-dual algorithm \eqref{pd_sced_dd0}, \eqref{pd_sced_dd1}, \eqref{pd_sced_dd2}, and \eqref{alt-ave}, omitting the proof since it is analogous to that of a similar result established in Section~\ref{sec:opf} dealing with rSOCP.
\begin{proposition}\label{prop:small_gain_sced_dd}
Let Assumptions~\ref{objective_assumption} and \ref{assume_comm_model_dd} hold.
Then, under algorithm~\eqref{pd_sced_dd0}, \eqref{pd_sced_dd1}, \eqref{pd_sced_dd2}, and \eqref{alt-ave}, we have that
\begin{align}
\|z\|_2^{a,K}\leq \beta,
\label{z_relation_sced_dd}
\end{align}
for some $a \in (0,1)$, $\beta>0$, and sufficiently small~$s>0$. In particular, $x^{(i)}[k]$ converges to $x^{(i)*}$, $\forall i$, at a geometric rate~$\mathcal{O}(a^k)$.
\end{proposition}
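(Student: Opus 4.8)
The plan is to follow the same feedback-interconnection / small-gain route used for the undirected case (Propositions~\ref{prop:H1_sced_un}--\ref{prop:small_gain_sced_un}), the only genuinely new work being the re-derivation of relation \textbf{R2} for the modified disturbance system induced by the alternating averaging protocol \eqref{alt-ave}. First I would rewrite algorithm~\eqref{pd_sced_dd0}, \eqref{pd_sced_dd1}, \eqref{pd_sced_dd2}, \eqref{alt-ave} as a feedback connection of a nominal system and a disturbance system. Since node~$i$ uses exactly the same updates \eqref{pd_sced_dd0} for $g^{(p)}_i$ and $\lambda_i$ as in \eqref{pd_sced_un}, and the edge averages $\overline{p}_{ij}\coloneqq\tfrac12(\hat p_{ij}+\breve p_{ij})$, $\overline{y}_{ij}\coloneqq\tfrac12(\hat y_{ij}+\breve y_{ij})$ still satisfy $\partial\overline L[k]/\partial p=\overline y[k]$ and $\tfrac12(\hat y_{ij}[k]+\breve y_{ij}[k])=-\lambda_i[k]+\lambda_j[k]$, the nominal system is \emph{verbatim} the system $\mathcal{H}_1$ of \eqref{H1_sced_un}, with the same disturbance signals $e_g,e_p,e_\lambda$ expressed through the disagreements $\hat p-\overline p$, $\breve p-\overline p$, $\hat y-\overline y$, $\breve y-\overline y$. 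Consequently \textbf{R1} continues to hold with the same constants, i.e., Proposition~\ref{prop:H1_sced_un} applies unchanged and nothing there needs re-proving.

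Next I would establish \textbf{R2} for the new disturbance system. The crucial structural fact is that protocol \eqref{alt-ave}, together with Assumption~\ref{assume_comm_model_dd}, forces each edge $(i,j)$ to perform its averaging step exactly once on every interval $(T^{ij}_k,T^{ij}_{k+1}]$ of a deterministic sequence $\{T^{ij}_k\}_{k\ge0}$ whose gaps are uniformly bounded, $T^{ij}_{k+1}-T^{ij}_k\le \bar B$ for some $\bar B=\mathcal{O}(B)$: a full handshake (node~$i$ averages and flips $\phi_{ij}$, node~$j$ sees the flip over link $(i,j)$ and averages, node~$i$ sees $\phi_{ji}$ flip over link $(j,i)$ and may average again) consumes at most two $B$-windows, since both $(i,j)$ and $(j,i)$ are guaranteed active once per $B$ steps. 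Between two consecutive averagings, $\hat p_{ij},\breve p_{ij}$ change only by the accumulated gradient corrections $-s\hat y_{ij}$, $-s\breve y_{ij}$ (projection being nonexpansive), and $\hat y_{ij},\breve y_{ij}$ change only by the gradient-difference terms in \eqref{pd_sced_dd1}--\eqref{pd_sced_dd2}, which are Lipschitz in the one-step state increment and hence $\mathcal{O}\!\left(s(\|z[k]\|+1)\right)$. I would then set up a coupled recursion over the averaging epochs for the disagreement vector $\delta[k]\coloneqq\big(\hat p[k]-\overline p[k],\,\breve p[k]-\overline p[k],\,\hat y[k]-\overline y[k],\,\breve y[k]-\overline y[k]\big)$: at an averaging epoch the $p$- and $y$-disagreements are contracted — the $a_{ij}=0.5$ step, together with the catch-up terms $\breve p_{ij}[k]-\breve r_{ij}[k]$, $\breve y_{ij}[k]-\breve\rho_{ij}[k]$ and the delayed quantities $\hat r_{ij},\breve r_{ij},\hat\rho_{ij},\breve\rho_{ij}$, reproduces the centralized-style update up to an $\mathcal{O}(s)$ residue — while every off-epoch step only injects an $\mathcal{O}(s)$ input. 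Summing the resulting geometric recursion over the bounded windows and passing to the weighted norm $\|\cdot\|_2^{a,K}$, exactly as in the argument behind Proposition~\ref{prop:H2_sced_un}, yields $\|e\|_2^{a,K}\le s\alpha_2\|z\|_2^{a,K}+\beta_2$, with $\beta_2$ absorbing the nonzero initial disagreement and the flag-initialization transient.

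Finally, with \textbf{R1} and \textbf{R2} in hand the loop gain is $s\alpha_1\alpha_2$; choosing $a\in(0,1)$ as dictated by \textbf{R1}--\textbf{R2} and then $s>0$ small enough that $s\alpha_1\alpha_2<1$, the small-gain computation of Section~\ref{sec:preliminaries} gives $\|z\|_2^{a,K}\le\beta\coloneqq\frac{\alpha_1\beta_2+\beta_1}{1-s\alpha_1\alpha_2}$ uniformly in $K$, hence $z[k]\to0$ at rate $\mathcal{O}(a^k)$; feeding this back into \textbf{R2} shows that $\delta[k]$ decays at the same rate, so every local estimate vector $x^{(i)}[k]$ converges to $x^{(i)*}$ geometrically.

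I expect the main obstacle to be the \textbf{R2} step for the directed protocol, specifically the bookkeeping showing that the alternating, delayed averaging in \eqref{pd_sced_dd1}--\eqref{pd_sced_dd2} still acts as a strict contraction on $\delta$ once composed over a full handshake window: one must verify that the catch-up terms $\breve p_{ij}[k]-\breve r_{ij}[k]$ and $\breve y_{ij}[k]-\breve\rho_{ij}[k]$ exactly cancel the drift accumulated while node~$j$ was in its waiting mode — so that the epoch-to-epoch map is the nominal averaging map plus a genuinely $\mathcal{O}(s)$ perturbation — and that the uniform bound $\bar B$ obtained from Assumption~\ref{assume_comm_model_dd} is finite, so that the per-window geometric factor stays bounded away from~$1$. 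The remaining estimates are a routine repetition of the undirected-case computations, and invoking Lemma~\ref{lem:sced_sol} then identifies the limit $(g^{(p)*},p^*)$ with the solution of SCED.
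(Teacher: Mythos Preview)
Your proposal is correct and follows essentially the same approach the paper indicates: the paper omits the proof of Proposition~\ref{prop:small_gain_sced_dd}, stating only that it is analogous to the rSOCP analysis in Section~\ref{sec:opf} (Propositions~\ref{prop:H1_opf_un}--\ref{prop:small_gain_opf_un} and~\ref{prop:small_gain_opf_dd}), which is precisely the feedback-interconnection/small-gain route you describe---unchanged $\mathcal{H}_1$ yielding \textbf{R1}, a re-derived \textbf{R2} for the alternating-averaging disturbance system under Assumption~\ref{assume_comm_model_dd}, and then the small-gain bound. Your identification of the bounded handshake window and the role of the catch-up terms in \eqref{pd_sced_dd2} as the only genuinely new ingredient matches the paper's intent.
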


\section{Distributed OPF Over Time-Varying Communication Graphs}\label{sec:opf}
In this section, we expand on the ideas of Section~\ref{sec:sced} and present the distributed primal-dual algorithms for solving rSOCP over time-varying communication graphs. We first consider the case of undirected communication graphs. Then, we tackle the general case of directed communication graphs.
\subsection{Time-Varying Undirected Communication Graphs}
Let $x \coloneqq [g^{(p)},g^{(q)},v]^\top$, and $\omega \coloneqq [p,q,\varepsilon,\ell]^\top$, and let $\gamma \coloneqq [\lambda,\mu,\nu,\eta]^\top$ and $\tau$ denote the dual variables associated with the DistFlow model constraints \eqref{DistFlow1}--\eqref{DistFlow2}, \eqref{DistFlow5}, and \eqref{DistFlow6} in rSOCP. Let $L(x,\gamma,\tau)$ denote the augmented Lagrangian for rSOCP given by
\begin{align*}
L(x,\omega,\gamma,\tau)  &= f(g^{(p)}) + \lambda^\top b^{(p)} + \mu^\top b^{(q)} + \nu^\top b^{(v)} \\
&\quad+ \eta^\top(\varepsilon - M^\top v)+\rho \|\ell\|_2^2+ \rho\|v-1\|_2^2\\
&\quad+\tau^\top(p\circ p+q\circ q - M_0^\top v\circ \ell)\\
&\quad + \rho_1 \|b^{(p)}\|_2^2 + \rho_2 \|b^{(q)}\|_2^2 + \rho_3 \|b^{(v)}\|_2^2,
\end{align*}
where $b^{(v)}$, $b^{(p)}$ and $b^{(q)}$ are defined as follows:
\begin{align*}
b^{(v)} &\coloneqq \varepsilon - 2Rp-2Xq+(R^2+X^2)\ell,\\
b^{(p)} &\coloneqq g^{(p)} - l^{(p)} - Mp + N_0R\ell,\\
b^{(q)} &\coloneqq g^{(q)} - l^{(q)} - Mq + N_0X\ell.
\end{align*}
The regularization terms~$\rho_1 \|b^{(p)}\|_2^2$, $\rho_2 \|b^{(q)}\|_2^2$, and $\rho_3 \|b^{(v)}\|_2^2$ penalize the violation of the constraints and allow us to significantly improve the convergence speed.

Our starting point to solve rSOCP is the following primal-dual algorithm:
\begin{subequations}\label{centr_pd_opf}
\begin{align}
x[k+1] &= \mathcal{P}_{\mathcal{X}}\left(x[k] - s\frac{\partial L[k]}{\partial x}\right),\label{x_update}\\
\omega[k+1] &= \mathcal{P}_{\Omega}\left(\omega[k] - s\frac{\partial L[k]}{\partial \omega}\right),\label{omega_update}\\
\gamma[k+1] &= \gamma[k] + s\frac{\partial L[k]}{\partial \gamma},\\
\tau[k+1] &= \Big[\tau[k] + 2s\frac{\partial L[k]}{\partial \tau}\Big]_+,\label{tau_update}
\end{align}
\end{subequations}
where $L[k] \coloneqq L(x[k],\omega[k],\gamma[k],\tau[k])$, $\mathcal{P}_{\mathcal{X}}(\cdot)$ denotes the projection onto the set \[\mathcal{X} \coloneqq \big\{x: w \in[w^{\min},w^{\max}], w \in\{g^{(p)},g^{(q)},v\}\big\},\] $\mathcal{P}_{\Omega}(\cdot)$ denotes the projection onto the set \[\Omega \coloneqq \big\{\omega: w \in[w^{\min},w^{\max}], w \in\{p,q,\varepsilon\},\ell\in[0,\ell^{\max}]\big\},\]
with $p^{\max} \coloneqq (\ell^{\max})^{1/2}\circ (v^{\max})^{1/2}$, $q^{\max} \coloneqq (\ell^{\max})^{1/2}\circ (v^{\max})^{1/2}$, $\varepsilon^{\max}\coloneqq v^{\max}-v^{\min}$, $p^{\min} \coloneqq -p^{\max}$, $q^{\min} \coloneqq -q^{\max}$, $\varepsilon^{\min}\coloneqq-\varepsilon^{\max}$, and $[\cdot]_{+}$ denotes the projection onto the interval~$[0,+\infty)$. Notice that the~$\tau$-update \eqref{tau_update} uses $2\frac{\partial L[k]}{\partial \tau}$ instead of simply using $\frac{\partial L[k]}{\partial \tau}$; this subtle change (to be clarified later when we present the convergence analysis) is due to the nonlinearity of the constraint \eqref{DistFlow5}. Let $x^* \coloneqq [g^{(p)*},g^{(q)*},v^*]^\top$, $\omega^* \coloneqq [p^*,q^*,\varepsilon^*,\ell^*]^\top$, $\gamma^* \coloneqq [\lambda^*,\mu^*,\nu^*,\eta^*]^\top$, and $\tau^*$ denote the equilibrium of \eqref{centr_pd_opf}.

In the proposed distributed version of \eqref{centr_pd_opf}, each node $i$ estimates the optimal values of only local primal and dual variables, denoted by
$\mathbf{x}^{(i)*} \coloneqq [x^{(i)*\top},\omega^{(i)*\top},\gamma^{(i)*\top},\tau_i^*]^\top$, with
\begin{align*}
x^{(i)*} &\coloneqq [g^{(p)*}_i,g^{(q)*}_i,v^*_i]^\top,\\
\omega^{(i)*} &\coloneqq \big[[\{p_{ij}^*\}_{(i,j)\in\mathcal{E}_p}],[\{p_{li}^*\}_{(l,i)\in\mathcal{E}_p}],\nonumber\\
&\qquad[\{q_{ij}^*\}_{(i,j)\in\mathcal{E}_p}],[\{q_{li}^*\}_{(l,i)\in\mathcal{E}_p}],\nonumber\\
&\qquad[\{\varepsilon_{ij}^*\}_{(i,j)\in\mathcal{E}_p}],[\{\varepsilon_{li}^*\}_{(l,i)\in\mathcal{E}_p}],\nonumber\\
&\qquad[\{\ell_{ij}^*\}_{(i,j)\in\mathcal{E}_p}],[\{\ell_{li}^*\}_{(l,i)\in\mathcal{E}_p}]\big]^\top,\\
\gamma^{(i)*} &\coloneqq \big[\lambda_i^*,\mu_i^*,[\{\nu_{ij}^*\}_{(i,j)\in\mathcal{E}_p}],[\{\nu_{li}^*\}_{(l,i)\in\mathcal{E}_p}],\nonumber\\
&\qquad[\{\eta_{ij}^*\}_{(i,j)\in\mathcal{E}_p}],[\{\eta_{li}^*\}_{(l,i)\in\mathcal{E}_p}]\big]^\top.
\end{align*}
Let $\mathbf{x}^{(i)}[k] \coloneqq [x^{(i)}[k]^\top,\omega^{(i)}[k]^\top,\gamma^{(i)}[k]^\top,\tau_i[k]]^\top$ denote the vector of estimates of $\mathbf{x}^{(i)*}$ maintained at node~$i$ at time instant~$k$, where
\begin{align*}
x^{(i)}[k] &\coloneqq [g^{(p)}_i[k],g^{(q)}_i[k],v_i[k]]^\top,\\
\omega^{(i)}[k] &\coloneqq \big[[\{p_{ij}[k]\}_{(i,j)\in\mathcal{E}_p}],[\{p_{li}[k]\}_{(l,i)\in\mathcal{E}_p}],\nonumber\\
&\qquad[\{q_{ij}[k]\}_{(i,j)\in\mathcal{E}_p}],[\{q_{li}[k]\}_{(l,i)\in\mathcal{E}_p}],\nonumber\\
&\qquad[\{\varepsilon_{ij}[k]\}_{(i,j)\in\mathcal{E}_p}],[\{\varepsilon_{li}[k]\}_{(l,i)\in\mathcal{E}_p}],\nonumber\\
&\qquad[\{\ell_{ij}[k]\}_{(i,j)\in\mathcal{E}_p}],[\{\ell_{li}[k]\}_{(l,i)\in\mathcal{E}_p}]\big]^\top,\\
\gamma^{(i)}[k] &\coloneqq \big[\lambda_i[k],\mu_i[k],[\{\nu_{ij}[k]\}_{(i,j)\in\mathcal{E}_p}],[\{\nu_{li}[k]\}_{(l,i)\in\mathcal{E}_p}],\nonumber\\
&\qquad[\{\eta_{ij}[k]\}_{(i,j)\in\mathcal{E}_p}],[\{\eta_{li}[k]\}_{(l,i)\in\mathcal{E}_p}]\big]^\top.
\end{align*}
Then, node~$i$ performs updates using the following local Lagrangian:
\begin{align*}
&L^{(i)}(\mathbf{x}^{(i)})  = f_i(g_i^{(p)}) +\rho(v_i-1)^2 + \lambda_i b_i^{(p)} + \mu_i b_i^{(q)}\\
&\mbox{ } + \rho_1 (b_i^{(p)})^2+ \rho_2(b_i^{(q)})^2+ \sum_{{(i,j)\in\mathcal{E}_p}}\frac{\rho}{2}\hat{\ell}_{ij}^2 + \sum_{(l,i)\in\mathcal{E}_p}\frac{\rho}{2}\breve{\ell}_{li}^2\\
&\mbox{ }+ \sum_{(i,j)\in\mathcal{E}_p}\frac{1}{2}\hat{\eta}_{ij}(\hat{\varepsilon}_{ij} -2 v_i) + \sum_{(l,i)\in\mathcal{E}_p}\frac{1}{2}\breve{\eta}_{li}(\breve{\varepsilon}_{li} + 2v_i)\\
&\mbox{ }+ \sum_{\mathclap{(i,j)\in\mathcal{E}_p}}\Big(\frac{\hat{\nu}_{ij}}{2} \hat{b}_{ij}^{(v)} + \frac{\rho_3}{2}(\hat{b}_{ij}^{(v)})^2\Big) + \sum_{\mathclap{(l,i)\in\mathcal{E}_p}}\Big(\frac{\breve{\nu}_{li}}{2} \breve{b}_{li}^{(v)} + \frac{\rho_3}{2}(\breve{b}_{li}^{(v)})^2\Big)\\
&\mbox{ }+\sum_{(i,j)\in\mathcal{E}_p}\tau_{ij}(\hat{p}_{ij}^2+\hat{q}_{ij}^2- v_i\hat{\ell}_{ij}),
\end{align*}
where
\begin{align*}
\hat{b}_{ij}^{(v)} &\coloneqq \hat{\varepsilon}_{ij} - 2r_{ij}\hat{p}_{ij}-2x_{ij}\hat{q}_{ij}+(r_{ij}^2+x_{ij}^2)\hat{\ell}_{ij},(i,j)\in\mathcal{E}_p,\\
\breve{b}_{li}^{(v)} &\coloneqq \breve{\varepsilon}_{li} - 2r_{li}\breve{p}_{li}-2x_{li}\breve{q}_{li}+(r_{li}^2+x_{li}^2)\breve{\ell}_{li},(l,i)\in\mathcal{E}_p,\\
b^{(p)}_i &\coloneqq g^{(p)}_i - l^{(p)}_i - \sum_{(i,j)\in\mathcal{E}_p}\hat{p}_{ij} + \sum_{(l,i)\in\mathcal{E}_p}\breve{p}_{li} - \sum_{(l,i)\in\mathcal{E}_p}r_{li}\breve{\ell}_{li},\\
b^{(q)}_i &\coloneqq g^{(q)}_i - l^{(q)}_i - \sum_{(i,j)\in\mathcal{E}_p}\hat{q}_{ij} + \sum_{(l,i)\in\mathcal{E}_p}\breve{q}_{li} - \sum_{(l,i)\in\mathcal{E}_p}x_{li}\breve{\ell}_{li}.
\end{align*}
Note that local Lagrangian~$L^{(i)}(\mathbf{x}^{(i)})$ is obtained from the Lagrangian~$L(x,\omega,\gamma,\tau)$ by collecting all terms that are local to node~$i$ such that those terms that are also local to neighboring nodes are decomposed into equal parts as shown below:
\begin{figure}[H]\vspace{-10pt}
    \centering 
	\includegraphics[trim=0.2cm 0cm 0cm 0cm, clip=true, scale=0.84]{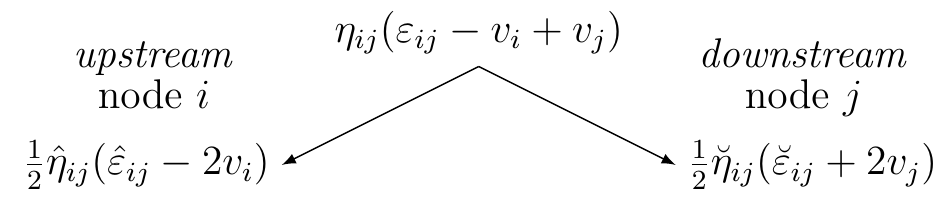} \vspace{-10pt}
 \end{figure}
We use exactly the same ideas presented in Section~\ref{sec:sced}, and propose the following distributed version of \eqref{centr_pd_opf}:
\begin{subequations}\label{pd_opf_un1}
\begin{align}
\chi[k+1] &= \Big[\chi[k] - s\frac{\partial L^{(i)}[k]}{\partial \chi}\Big]_{\chi^{\min}}^{\chi^{\max}},\\
\chi &\in \{g^{(p)}_i,g^{(q)}_i,v_i\},\nonumber\\
\psi[k+1] &= \psi[k] + s\frac{\partial L^{(i)}[k]}{\partial \psi},\\
\psi &\in \{\lambda_i,\mu_i\},\nonumber\\
\hat{w}_{ij}[k+1] &= \Big[(1-a_{ij}[k])\hat{w}_{ij}[k]+ a_{ij}[k]\breve{w}_{ij}[k]\nonumber\\
&\quad  - s\hat{y}^{(w)}_{ij}[k]\Big]_{w_{ij}^{\min}}^{w_{ij}^{\max}},\\
\breve{w}_{li}[k+1] &= \Big[(1-a_{li}[k])\breve{w}_{li}[k]+ a_{li}[k]\hat{w}_{li}[k]\nonumber\\
&\quad - s\breve{y}^{(w)}_{li}\Big]_{w_{li}^{\min}}^{w_{li}^{\max}},\\
w &\in \{\ell, p, q,\varepsilon\},\nonumber\\
\hat{d}_{ij}[k+1] &= (1-a_{ij}[k])\hat{d}_{ij}[k]+ a_{ij}[k]\breve{d}_{ij}[k] \nonumber\\&\quad + s\hat{y}^{(d)}_{ij}[k],\\
\breve{d}_{li}[k+1] &= (1-a_{li}[k])\breve{d}_{li}[k]+ a_{li}[k]\hat{d}_{li}[k]\nonumber\\&\quad + s\breve{y}^{(d)}_{li}[k],\\
d &\in \{\nu,\eta\},\nonumber\\
\tau_{li}[k+1] &= \Big[\tau_{li}[k] + 2s\frac{\partial L^{(i)}[k]}{\partial \tau_{li}}\Big]_+,\\
(i,j)&\in\mathcal{E}_p,(l,i)\in\mathcal{E}_p,\nonumber
\end{align}
\end{subequations}
where $L^{(i)}[k]\coloneqq L^{(i)}(\mathbf{x}^{(i)}[k])$, and 
\begin{align*}
a_{ij}[k]=\left\{\begin{matrix*}[l]0.5 & \mbox{if }\{i,j\} \in \mathcal{E}_c[k],\\0 & \mbox{otherwise.}\end{matrix*}\right.
\end{align*}
The gradients~$\hat{y}^{(w)}[k]\coloneqq [\{\hat{y}_{ij}^{(w)}[k]\}_{(i,j)\in\mathcal{E}_p}]$ and $\breve{y}^{(w)}[k]\coloneqq [\{\breve{y}_{li}^{(w)}[k]\}_{(l,i)\in\mathcal{E}_p}]$, $w \in \{\ell, p, q,\varepsilon,\nu,\eta\}$, are updated as follows:
\begin{subequations}\label{pd_opf_un2}
\begin{align}
&\hat{y}^{(w)}_{ij}[k+1] = (1-a_{ij}[k]) \hat{y}^{(w)}_{ij}[k] + a_{ij}[k]\breve{y}^{(w)}_{ij}[k]\nonumber\\
&\quad + 2\left(\frac{\partial L^{(i)}[k+1]}{\partial \hat{w}_{ij}} - \frac{\partial L^{(i)}[k]}{\partial \hat{w}_{ij}}\right),(i,j)\in\mathcal{E}_p, \\
&\breve{y}^{(w)}_{li}[k+1] = (1-a_{li}[k]) \breve{y}^{(w)}_{li}[k] + a_{li}[k]\hat{y}^{(w)}_{li}[k]\nonumber\\
&\quad + 2\left(\frac{\partial L^{(i)}[k+1]}{\partial \breve{w}_{li}} - \frac{\partial L^{(i)}[k]}{\partial \breve{w}_{li}}\right),(l,i)\in\mathcal{E}_p.
\end{align}
\end{subequations}
We initialize \eqref{pd_opf_un2} as follows:
\begin{subequations}\label{init}
\begin{align}
\hat{y}^{(w)}_{ij}[0] &= 2\frac{\partial L^{(i)}[0]}{\partial \hat{w}_{ij}}, (i,j)\in\mathcal{E}_p, \\
\breve{y}^{(w)}_{li}[0] &= 2\frac{\partial L^{(i)}[0]}{\partial \breve{w}_{li}}, (l,i)\in\mathcal{E}_p. 
\end{align}
\end{subequations}
We noticed from the numerical simulations that if the initial voltage magnitudes are set to $1.0$~per unit, $y^{(\eta)}[0]$ must be initialized differently to achieve a better performance. If \eqref{init} is used, then,
\begin{align}
\begin{bmatrix} \hat{y}^{(\eta)}[0]\\\breve{y}^{(\eta)}[0]\end{bmatrix} &= \begin{bmatrix}\hat{\varepsilon}[0]\\\breve{\varepsilon}[0]\end{bmatrix} - 2\begin{bmatrix}M_0^\top v[0]\\N_0^\top v[0]\end{bmatrix}.\label{y_eta}
\end{align}
If $v_i[0] = 1$, $i \in \mathcal{V}_p$, then, the second term on the right-hand side of \eqref{y_eta} does not have any effect on the average estimate, $\frac{1}{2}(\hat{y}^{(\eta)}[0]+\breve{y}^{(\eta)}[0])$, since $M_0^\top v[0]+N_0^\top v[0] = Mv[0] = 0$. In view of this observation and the fact that the second term in \eqref{y_eta} can be significantly larger than the first term, it is better to neglect it during the initialization.

\subsection{Feedback Interconnection Representation of the Distributed Primal-Dual Algorithm}\label{subsec:feedback_opf}
Similar to the feedback representation given in Section~\ref{subsec:feedback_sced}, we show that~\eqref{pd_opf_un1}--\eqref{pd_opf_un2} can also be represented as a feedback interconnection of a nominal system, denoted by $\mathcal{H}^{opf}_1$, and a disturbance system, denoted by $\mathcal{H}^{opf}_2$, which allows us to utilize the small-gain theorem for convergence analysis purposes. To this end, let 
\begin{align*}
\hat{w}[k]&\coloneqq [\{\hat{w}_{ij}[k]\}_{(i,j)\in\mathcal{E}_p}], \breve{w}[k]\coloneqq [\{\breve{w}_{li}[k]\}_{(l,i)\in\mathcal{E}_p}],\\
\overline{w}[k]&\coloneqq \frac{1}{2}(\hat{w}[k]+\breve{w}[k]), w \in \{\ell,p, q, \varepsilon, \nu, \eta\},\\
\hat{\omega}[k] &\coloneqq [\hat{p}[k]^\top,\hat{q}[k]^\top,\hat{\varepsilon}[k]^\top,\hat{\ell}[k]^\top]^\top,\\
\breve{\omega}[k] &\coloneqq [\breve{p}[k]^\top,\breve{q}[k]^\top,\breve{\varepsilon}[k]^\top,\breve{\ell}[k]^\top]^\top,\\
\hat{\gamma}[k] &\coloneqq [\lambda[k]^\top,\mu[k]^\top,\hat{\nu}[k]^\top,\hat{\eta}[k]^\top]^\top,\\
\breve{\gamma}[k] &\coloneqq [\lambda[k]^\top,\mu[k]^\top,\breve{\nu}[k]^\top,\breve{\eta}[k]^\top]^\top,\\
\hat{y}[k] &\coloneqq \big[\hat{y}^{(\ell)}[k]^\top,\hat{y}^{(p)}[k]^\top,\hat{y}^{(q)}[k]^\top,\hat{y}^{(\varepsilon)}[k]^\top,\hat{y}^{(\nu)}[k]^\top,\\&\qquad\hat{y}^{(\eta)}[k]^\top\big]^\top,\\
\breve{y}[k] &\coloneqq \big[\breve{y}^{(\ell)}[k]^\top,\breve{y}^{(p)}[k]^\top,\breve{y}^{(q)}[k]^\top,\breve{y}^{(\varepsilon)}[k]^\top,\breve{y}^{(\nu)}[k]^\top,\\&\qquad\breve{y}^{(\eta)}[k]^\top\big]^\top,\\
\overline{\omega}[k] &\coloneqq \frac{1}{2}(\hat{\omega}[k]+\breve{\omega}[k]),
\overline{\gamma}[k] \coloneqq \frac{1}{2}(\hat{\gamma}[k]+\breve{\gamma}[k]),\\
\overline{y}[k] &\coloneqq \frac{1}{2}(\hat{y}[k]+\breve{y}[k]),
\end{align*}
By using \eqref{pd_opf_un1}, we compactly write the iterations for $x$, $\overline{\omega}$, $\overline{\gamma}$, and $\tau$, which constitute the nominal system, $\mathcal{H}^{opf}_1$, given by:
\begin{subequations}\label{H1_opf_un}
\begin{align}
{\mathcal{H}_1^{opf}:}\mbox{ }x[k+1] &= \mathcal{P}_{\mathcal{X}}\Big(x[k] - s\frac{\partial \overline{L}[k]}{\partial x}+e_x[k]\Big),\label{H1_opf_x}\\
\overline{\omega}[k+1] &= \frac{1}{2}\mathcal{P}_{\Omega}\Big(\overline{\omega}[k] - s\frac{\partial \overline{L}[k]}{\partial \omega}+e_\omega[k]\Big)\nonumber\\&\quad+
\frac{1}{2}\mathcal{P}_{\Omega}\Big(\overline{\omega}[k] - s\frac{\partial \overline{L}[k]}{\partial \omega}-e_\omega[k]\Big),\\
\overline{\gamma}[k+1] &= \overline{\gamma}[k] + s\frac{\partial \overline{L}[k]}{\partial \gamma} + e_\gamma[k],\\
\tau[k+1] &= \Big[\tau[k] + 2s\frac{\partial \overline{L}[k]}{\partial \tau} + e_{\tau}[k]\Big]_+,
\end{align}
\end{subequations}
where $\overline{L}[k] \coloneqq L(x,\overline{\omega}[k],\overline{\gamma}[k],\tau[k])$, $e[k]\coloneqq [e_x[k]^\top,e_\omega[k]^\top,e_{\gamma}[k]^\top,e_{\tau}[k]^\top]^\top$ is given by 
\begin{align*}
e[k]=A\begin{bmatrix}\hat{\omega}[k]-\overline{\omega}[k]\\\hat{\gamma}[k]-\overline{\gamma}[k]\\\hat{y}[k]-\overline{y}[k]\end{bmatrix},
\end{align*}
for some constant matrix~$A$. Notice that $e[k]$ results from $(\hat{\omega}[k],\breve{\omega}[k])$, $(\hat{\gamma}[k],\breve{\gamma}[k])$, and $(\hat{y}[k],\breve{y}[k])$ deviating from their respective average, $\overline{\omega}[k]$, $\overline{\gamma}[k]$, and $\overline{y}[k]$; without $e[k]$, the nominal system~$\mathcal{H}^{opf}_1$ has exactly the same form as~\eqref{centr_pd_opf}.
Now, we define the disturbance system, $\mathcal{H}^{opf}_2$, as follows: 
\begin{subequations}\label{H2_opf_un}
\begin{align}
\hat{w}_{ij}[k+1] &= \Big[(1-a_{ij}[k])\hat{w}_{ij}[k]+ a_{ij}[k]\breve{w}_{ij}[k]\nonumber\\
&\quad  - s\hat{y}^{(w)}_{ij}[k]\Big]_{w_{ij}^{\min}}^{w_{ij}^{\max}}, \\
\breve{w}_{li}[k+1] &= \Big[(1-a_{li}[k])\breve{w}_{li}[k]+ a_{li}[k]\hat{w}_{li}[k]\nonumber\\
&\quad - s\breve{y}^{(w)}_{li}\Big]_{w_{li}^{\min}}^{w_{li}^{\max}},\\
w &\in \{\ell, p, q,\varepsilon\},(i,j)\in\mathcal{E}_p,(l,i)\in\mathcal{E}_p,\nonumber\\
\hat{d}_{ij}[k+1] &= (1-a_{ij}[k])\hat{d}_{ij}[k]+ a_{ij}[k]\breve{d}_{ij}[k] \nonumber\\&\quad + s\hat{y}^{(d)}_{ij}[k],\\
\breve{d}_{li}[k+1] &= (1-a_{li}[k])\breve{d}_{li}[k]+ a_{li}[k]\hat{d}_{li}[k]\nonumber\\&\quad + s\breve{y}^{(d)}_{li}[k],\\
d&\in \{\nu,\eta\},\nonumber\\
\hat{y}^{(\psi)}_{ij}[k+1] &= (1-a_{ij}[k]) \hat{y}^{(\psi)}_{ij}[k] + a_{ij}[k]\breve{y}^{(\psi)}_{ij}[k]\nonumber\\
&\quad + 2\left(\frac{\partial L^{(i)}[k+1]}{\partial \hat{w}_{ij}} - \frac{\partial L^{(i)}[k]}{\partial \hat{w}_{ij}}\right), \\
\breve{y}^{(\psi)}_{li}[k+1] &= (1-a_{li}[k]) \breve{y}^{(\psi)}_{li}[k] + a_{li}[k]\hat{y}^{(\psi)}_{li}[k]\nonumber\\
&\quad + 2\left(\frac{\partial L^{(i)}[k+1]}{\partial \breve{w}_{li}} - \frac{\partial L^{(i)}[k]}{\partial \breve{w}_{li}}\right),\\
\psi& \in \{\ell, p, q,\varepsilon,\nu,\eta\},\nonumber\\
e[k]&=A\begin{bmatrix}\hat{\omega}[k]-\overline{\omega}[k]\\\hat{\gamma}[k]-\overline{\gamma}[k]\\\hat{y}[k]-\overline{y}[k]\end{bmatrix},
\end{align}
\end{subequations}
Then, as depicted in Fig.~\ref{fig:feedback_opf}, algorithm~\eqref{pd_opf_un1}--\eqref{pd_opf_un2} can be viewed as a feedback interconnection of $\mathcal{H}^{opf}_1$ and $\mathcal{H}^{opf}_2$.
\begin{figure}
    \centering 
	\includegraphics[trim=0cm 0cm 0cm 0cm, clip=true, scale=0.6]{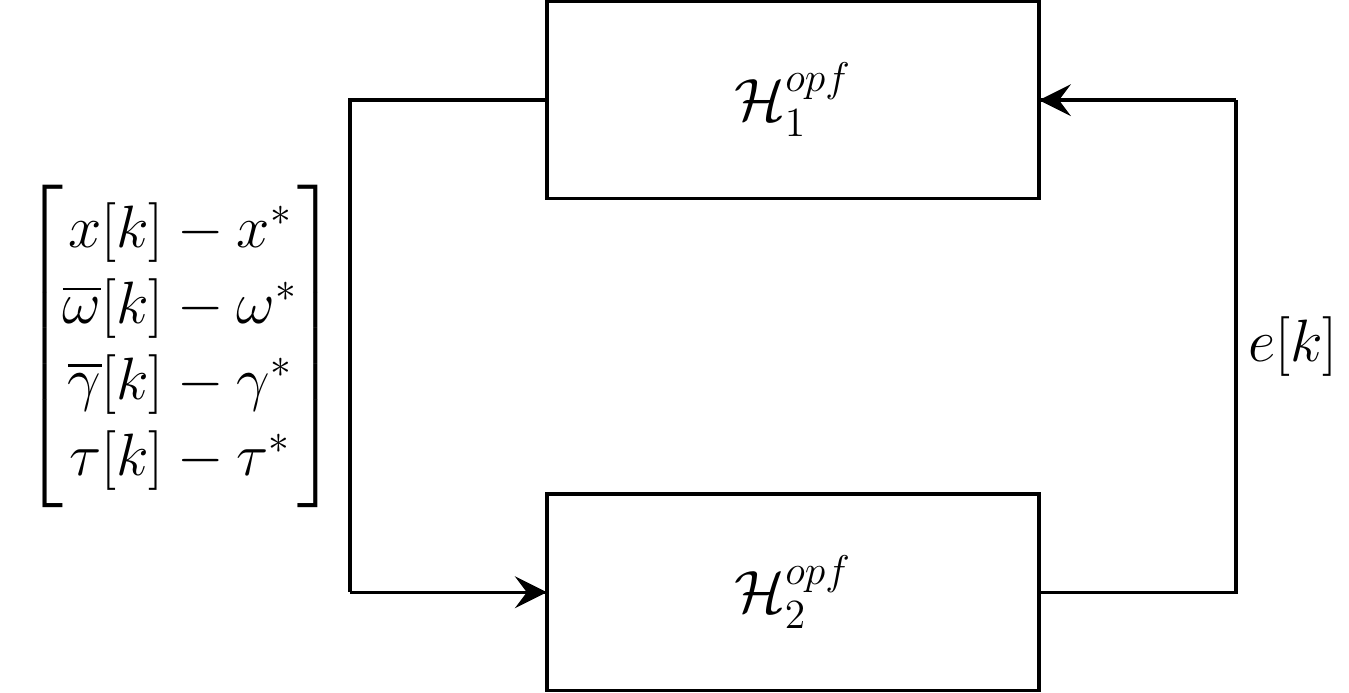} 
    \caption{Algorithm~\eqref{pd_opf_un1}--\eqref{pd_opf_un2} as a feedback system.}
    \vspace{-5pt} 
    \label{fig:feedback_opf}
\end{figure}

%

\subsection{Convergence Analysis}\label{subsec:conv_analysis_opf_un}
Representing algorithm~\eqref{pd_opf_un1}--\eqref{pd_opf_un2} in a feedback form facilitates our analysis relying on the small-gain theorem.
Much of the analysis is dedicated to establishing that $\mathcal{H}^{opf}_1$ and $\mathcal{H}^{opf}_2$ are finite-gain stable. Then, the small-gain theorem is applied in a straightforward manner to show the convergence of~\eqref{pd_opf_un1}--\eqref{pd_opf_un2}.

In the next result, we establish that $\mathcal{H}^{opf}_1$ is finite-gain stable.
\begin{proposition}\label{prop:H1_opf_un}
Let Assumption~\ref{objective_assumption} hold.
Then, under~\eqref{H1_opf_un}, we have that
\begin{align}
\textnormal{\textbf{R1. }} \|z\|_2^{a,K}\leq \alpha_1\|e\|_2^{a,K}+\beta_1,
\label{z_to_e_opf_un}
\end{align}
for some positive~$\alpha_1$ and $\beta_1$, $a \in (0,1)$, and sufficiently small~$s>0$,
where
\[z[k] \coloneqq \begin{bmatrix}x[k] - x^*\\\overline{\omega}[k] - \omega^*\\\overline{\gamma}[k] - \gamma^*\\ \tau[k] - \tau^*\\ \end{bmatrix}.\]
\end{proposition}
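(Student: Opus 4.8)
The plan is to exploit the fact that the nominal system $\mathcal{H}_1^{opf}$ in \eqref{H1_opf_un}, with the disturbance $e[k]$ set to zero, coincides with the centralized primal--dual iteration \eqref{centr_pd_opf}, whose equilibrium $(x^*,\omega^*,\gamma^*,\tau^*)$ is exactly the point where $z[k]=0$. The core of the argument is to establish a one-step contraction-plus-disturbance bound
\[
\|z[k+1]\|_2 \;\le\; \theta\,\|z[k]\|_2 + c\,\|e[k]\|_2,
\]
with $\theta=\theta(s)\in(0,1)$ and $c>0$ for all sufficiently small $s>0$, and then to convert this recursion into R1 using the weighted norm $\|\cdot\|_2^{a,K}$.

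To obtain the one-step bound I would first peel off the projections. Subtracting from each line of \eqref{H1_opf_un} the corresponding fixed-point identity satisfied by $(x^*,\omega^*,\gamma^*,\tau^*)$ in \eqref{centr_pd_opf}, and using that $\mathcal{P}_{\mathcal{X}}$, $\mathcal{P}_{\Omega}$ and $[\cdot]_+$ are nonexpansive (and that $\tfrac12\mathcal{P}_{\Omega}(a+e)+\tfrac12\mathcal{P}_{\Omega}(a-e)$ lies within $\|e\|_2$ of $\mathcal{P}_{\Omega}(a)$), the estimate reduces to bounding $\big\|(u[k]-sF(u[k]))-(u^*-sF(u^*))\big\|_2$ plus terms linear in $\|e[k]\|_2$, where $u:=(x,\omega,\gamma,\tau)$ and $F$ is the saddle-point operator assembled from $\partial_x L$, $\partial_\omega L$, $-\partial_\gamma L$ and $-2\partial_\tau L$ (the factor $2$ inherited from the $\tau$-update \eqref{tau_update}). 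I would then show that $u\mapsto u-sF(u)$ is a contraction on the compact, forward-invariant feasible box for small $s$, via a quadratic Lyapunov function $V(z)=z^\top W z$ with block-diagonal $W\succ 0$ reweighting the primal, $\gamma$- and $\tau$-blocks: expanding $V(z[k+1])$ and using (i) the Arrow--Hurwicz sign pattern, which cancels the bilinear primal/$\gamma$ cross terms (the $\gamma$- and $\tau$-blocks being pinned down, hence contracted, by the full-row-rank Jacobian of the constraints at $u^*$); (ii) Assumption~\ref{objective_assumption} together with the regularizers $\rho,\rho_1,\rho_2,\rho_3$, which make the augmented Lagrangian uniformly strongly convex in the full primal block $(x,\omega)$ over the box; and (iii) Lipschitz continuity on the box of the primal-nonlinear part of $F$ (finite by twice differentiability), the dual variables entering $L$ only bilinearly with the primal ones, one arrives at $V(z[k+1])\le(1-2sm_0+s^2L_0^2)\,V(z[k])+c'(\|e[k]\|_2^2+\|e[k]\|_2\,\|z[k]\|_2)$, hence the displayed bound with $\theta=\sqrt{1-2sm_0+s^2L_0^2}<1$ for $s<2m_0/L_0^2$.

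The conversion to R1 is then routine. Pick $a\in(\theta,1)$, multiply the one-step bound by $a^{-(k+1)}$ to get $a^{-(k+1)}\|z[k+1]\|_2\le(\theta/a)\,\|z\|_2^{a,K}+(c/a)\,\|e\|_2^{a,K}$ for $k<K$, take the maximum over $0\le k\le K$ (the index $k+1=0$ contributing $\|z[0]\|_2$), and rearrange to obtain $\|z\|_2^{a,K}\le\frac{c}{a-\theta}\,\|e\|_2^{a,K}+\frac{a}{a-\theta}\,\|z[0]\|_2$, i.e.\ R1 with $\alpha_1=\frac{c}{a-\theta}$ and $\beta_1=\frac{a}{a-\theta}\|z[0]\|_2$.

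The main obstacle is proving the one-step contraction $\theta(s)<1$, and two features of rSOCP make it delicate. First, because $f$ is strongly convex only in $g^{(p)}$, the uniform strong convexity of the augmented Lagrangian in \emph{all} the primal variables rests entirely on the regularizers: verifying that the sum of their Hessians is positive definite uses that the tree incidence matrix $M$ has full column rank, and the indefinite $v$--$\ell$ block produced by the coupling term $\tau^\top(p\circ p+q\circ q-M_0^\top v\circ\ell)$ forces $\rho$ (and $\rho_1,\rho_2,\rho_3$) to be chosen large enough relative to $\tau^*$ and the size of the box. Second, because the SOC constraint \eqref{DistFlow5} is nonlinear, the primal/$\tau$ cross terms carry a second-order remainder $z_{\mathrm{prim}}^\top\nabla^2 h_i\,z_{\mathrm{prim}}$ from the Taylor expansion of $h_i(u):=(p\circ p+q\circ q-M_0^\top v\circ\ell)_i$; it is precisely the factor $2$ in \eqref{tau_update} that makes the remainder produced in the primal descent cancel the one produced in the $\tau$ descent, leaving only contributions proportional to $\tau_i^*$ that (ii) absorbs. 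Getting the weighting $W$, the stepsize threshold, and these parameter conditions to line up simultaneously is the technical heart of the argument; the remaining disturbance bookkeeping (propagating $e_x,e_\omega,e_\gamma,e_\tau$ through the projections) is then straightforward.
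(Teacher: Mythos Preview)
Your overall architecture --- strip the projections by nonexpansiveness, prove a one-step contraction $\|z[k+1]\|\le\theta\|z[k]\|+c\|e[k]\|$, then pass to the weighted norm $\|\cdot\|_2^{a,K}$ --- is exactly the paper's, and your last paragraph reproduces the paper's conversion step almost verbatim. Your reading of the role of the factor $2$ in \eqref{tau_update} is also on target.

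The gap is in how you obtain the contraction. Your claimed rate $\theta=\sqrt{1-2sm_0+s^2L_0^2}$ via a block-diagonal Lyapunov $V(z)=z^\top Wz$ presupposes that the saddle-point operator $F$ is $m_0$-strongly monotone. It is not: the Lagrangian is \emph{affine} in the multipliers $(\gamma,\tau)$, so the symmetric part of the Jacobian of $F$ carries a zero block over the dual coordinates. With a block-diagonal $W$ the Arrow--Hurwicz cancellation you invoke removes the first-order primal/dual cross terms and leaves $V(z[k+1])\le V(z[k])-2s\,z_{\mathrm{prim}}^\top(D[k]+\Upsilon)z_{\mathrm{prim}}+O(s^2)$; there is no first-order decrease in the $\gamma,\tau$ directions, so $\theta<1$ does not follow. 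Your parenthetical that the dual is ``pinned down by the full-row-rank Jacobian'' names the right phenomenon but is not something a block-diagonal $V$ can exploit.

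The paper gets contraction by a different mechanism. It linearizes $G[k]-G^*=F[k]z[k]+H[k][x^{*\top},\omega^{*\top},\gamma^{*\top},\tau^{*\top}]^\top$ with $F[k]=I-sB[k]$, packing the nonlinear residual (including the indefinite $\tau^*$-weighted Hessian of the SOC constraint) into $H[k]$ so that the primal block $D[k]+\Upsilon$ of $B[k]$ stays positive \emph{semidefinite}; it then proves (Lemma~\ref{lem:B_rank}) that every eigenvalue of $B[k]$ has strictly positive real part, because any eigenvector with purely imaginary eigenvalue would have to lie in the null space of $D[k]+\Upsilon$ and be annihilated by the constraint Jacobian $C[k]$, which forces it to vanish. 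This yields $\rho(F[k])<1$ for small $s$, and Lemma~\ref{lem:F_norm} then invokes a \emph{general} induced matrix norm (not $\|\cdot\|_2$, not necessarily from an inner product) in which $\|F[k]\|\le b_1<1$; Lemma~\ref{lem:H_norm} bounds the residual by $sb_2\|z[k]\|$, and norm equivalence converts back to $\|\cdot\|_2$. A byproduct is that no largeness condition on $\rho$ relative to $\tau^*$ is needed, whereas your route would require one even if the dual-contraction issue were fixed. Your plan can be repaired by dropping the block-diagonal ansatz and solving a discrete Lyapunov equation for $W$ once $\rho(F[k])<1$ is known --- but that is essentially re-deriving Lemma~\ref{lem:F_norm}.
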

\begin{proof}
Letting $L^* \coloneqq L(x^*,\omega^*,\gamma^*,\tau^*)$, 
\begin{align*}
G[k] &\coloneqq \begin{bmatrix}
x[k] - s\frac{\partial \overline{L}[k]}{\partial x}\\[2pt]
\overline{\omega}[k] - s\frac{\partial \overline{L}[k]}{\partial \omega}\\[2pt]
\overline{\gamma}[k] + s\frac{\partial \overline{L}[k]}{\partial\gamma}\\[2pt]
\tau[k] + 2s\frac{\partial \overline{L}[k]}{\partial \tau}
\end{bmatrix},\mbox{ and }
G^* \coloneqq \begin{bmatrix}
x^* - s\frac{\partial L^*}{\partial x}\\[2pt]
\omega^* - s\frac{\partial L^*}{\partial \omega}\\[2pt]
\gamma^* + s\frac{\partial L^*}{\partial \gamma}\\[2pt]
\tau^* + 2s\frac{\partial L^*}{\partial \tau}
\end{bmatrix},
\end{align*}
we establish the following result.
\begin{lemma}\label{lem:projection}
\begin{align}
\|z[k+1]\|&\leq \|G[k]-G^*\| + \|e[k]\|,
\label{projection}
\end{align}
where $\|\cdot\|$ is a vector norm.
\end{lemma}
\begin{proof}
We note that the following relationship holds:
\begin{align}
x^* = \mathcal{P}_{\mathcal{X}}\Big(x^* - s\frac{\partial L^*}{\partial x}\Big).\label{lem_proj_eq1}
\end{align}
Then, by applying the triangle inequality on multiple occasions and the Projection Theorem \cite[Proposition~2.1.3]{NonlinearProgramming}, we have that
\begin{subequations}\label{lem_projection}
\begin{align}
\|x[k+1]&-x^*\|=\Big\|\frac{1}{2}\mathcal{P}_{\mathcal{X}}\Big(x[k] - s\frac{\partial \overline{L}[k]}{\partial x}+e_x[k]\Big)\nonumber\\
&\quad+\frac{1}{2}\mathcal{P}_{\mathcal{X}}\Big(x[k] - s\frac{\partial \overline{L}[k]}{\partial x}-e_x[k]\Big) - x^*\Big\|\label{lem_proj_eq2}\\
&\leq \frac{1}{2}\Big\|\mathcal{P}_{\mathcal{X}}\Big(x[k] - s\frac{\partial \overline{L}[k]}{\partial x}+e_x[k]\Big)- x^*\Big\|\nonumber\\
&\quad+\frac{1}{2}\Big\|\mathcal{P}_{\mathcal{X}}\Big(x[k] - s\frac{\partial \overline{L}[k]}{\partial x}-e_x[k]\Big) - x^*\Big\|\label{lem_proj_eq3}\\
&= \frac{1}{2}\Big\|\mathcal{P}_{\mathcal{X}}\Big(x[k] - s\frac{\partial \overline{L}[k]}{\partial x}+e_x[k]\Big)\nonumber\\&\quad- \mathcal{P}_{\mathcal{X}}\Big(x^* - s\frac{\partial L^*}{\partial x}\Big)\Big\|\nonumber\\
&\quad+\frac{1}{2}\Big\|\mathcal{P}_{\mathcal{X}}\Big(x[k] - s\frac{\partial \overline{L}[k]}{\partial x}-e_x[k]\Big) - \nonumber\\&\quad- \mathcal{P}_{\mathcal{X}}\Big(x^* - s\frac{\partial L^*}{\partial x}\Big)\Big\|\label{lem_proj_eq4}\\
&\leq \frac{1}{2}\Big\|x[k] - s\frac{\partial \overline{L}[k]}{\partial x} - x^* + s\frac{\partial L^*}{\partial x}\nonumber\\
&\quad+e_x[k]\Big\|+\frac{1}{2}\Big\|x[k] - s\frac{\partial \overline{L}[k]}{\partial x} - x^* \nonumber\\&\quad+ s\frac{\partial L^*}{\partial x}-e_x[k]\Big\|\label{lem_proj_eq5}\\
&\leq \Big\|x[k] - s\frac{\partial \overline{L}[k]}{\partial x} - x^* + s\frac{\partial L^*}{\partial x}\Big\|+\|e_x[k]\|,
\label{lem_proj_eq6}
\end{align}
where we used \eqref{H1_opf_x} to obtain \eqref{lem_proj_eq2}, applied the triangle inequality to obtain \eqref{lem_proj_eq3} and \eqref{lem_proj_eq6}, used \eqref{lem_proj_eq1} to obtain \eqref{lem_proj_eq4}, and the projection theorem to obtain \eqref{lem_proj_eq5}.
Similarly, we find that
\begin{align}
\|\overline{\omega}[k+1]-\omega^*\|&\leq \Big\|\overline{\omega}[k] - s\frac{\partial \overline{L}[k]}{\partial \omega} - \omega^* + s\frac{\partial L^*}{\partial \omega}\Big\|\nonumber\\&\quad+\|e_\omega[k]\|,\\
\|\overline{\gamma}[k+1]-\gamma^*\|&\leq \Big\|\overline{\gamma}[k] - s\frac{\partial \overline{L}[k]}{\partial \omega} - \omega^* + s\frac{\partial L^*}{\partial \gamma}\Big\|\nonumber\\&\quad+\|e_\gamma[k]\|,\\
\|\tau[k+1]-\tau^*\|&\leq \Big\|\tau[k] - s\frac{\partial \overline{L}[k]}{\partial \tau} - \tau^* + s\frac{\partial L^*}{\partial \tau}\Big\|\nonumber\\&\quad+\|e_\tau[k]\|,
\end{align}
\end{subequations}
yielding \eqref{projection}. 
\end{proof}
Next, it follows from the mean value theorem \cite[Theorem~5.1]{Rudin} applied to each component in $\nabla f(g^{(p)}[k]) - \nabla f(g^{(p)*})$ that
\begin{align}
\nabla f(g^{(p)}[k]) - \nabla f(g^{(p)*}) = \nabla^2f(\upsilon[k])(g^{(p)}[k]-g^{(p)*}), \label{mean-value-th}
\end{align}
where $\upsilon[k]\coloneqq[\upsilon_1[k],\upsilon_2[k],\dots,\upsilon_n[k]]^\top$, with $\upsilon_i[k]$ lying on the line segment connecting $g^{(p)}_i[k]$ and $g^{(p)*}_i$, and $\nabla^2f(\upsilon[k])$ is the Hessian of $f(\mathrm{x})$ at $\mathrm{x}=\upsilon[k]$. Then, by using~\eqref{mean-value-th}, we have that
\begin{align}
G[k]-G^*&=F[k]z[k]+H[k]\begin{bmatrix}x^*\\\omega^*\\\gamma^*\\ \tau^*\end{bmatrix},
\label{ave_dyn2}
\end{align}
where
\begin{equation*}
H[k] \coloneqq \begin{bmatrix}\mathbf{0}&s(C^*-C[k])\\s(C[k]^\top-C^{*\top})&\mathbf{0} \end{bmatrix},
\end{equation*}
for some suitable matrices~$C^*$ and $C[k]$ corresponding to $(x^*,\omega^*,\gamma^*,\tau^*)$ and $(\overline{x}[k],\overline{\omega}[k],\overline{\gamma}[k],\tau[k])$, respectively, and
\begin{equation*}
F[k] \coloneqq \begin{bmatrix}I - s(D[k]+\Upsilon)&-sC[k]\\sC[k]^\top&I \end{bmatrix},
\end{equation*}
where $\Upsilon\in\mathds{R}^{(3n+4|\mathcal{E}_p|)\times(3n+4|\mathcal{E}_p|)}$ is a positive-semidefinite matrix, and the matrix~$D[k]$ is a block diagonal matrix given by
\begin{equation*}
D[k] \coloneqq \begin{bmatrix}\nabla^2f(\upsilon[k])& & & & \\ &\mathbf{0}_{n\times n} & & &\\ && \rho I_{n} && \\&&& \mathbf{0}_{3|\mathcal{E}_p|\times3|\mathcal{E}_p|} &\\&&&& \rho I_{|\mathcal{E}_p|}\end{bmatrix},
\end{equation*}
where $\mathbf{0}_{n\times n}\in\mathds{R}^{n\times n}$ is the all-zeros matrix, and $I_{n}\in\mathds{R}^{n\times n}$ is the identity matrix. We note that $F[k]$ is a skew-symmetric matrix resulting from multiplying $\frac{\partial L^{(i)}[k]}{\partial \tau_{li}}$ by a factor of $2$ in the~$\tau$-update in algorithm \eqref{pd_opf_un1}--\eqref{pd_opf_un2}. 
Define
\begin{equation*}
B[k] \coloneqq \begin{bmatrix}D[k] & C[k]\\-C[k]^\top & \mathbf{0}\end{bmatrix}
\end{equation*}
so that $F[k]=I-sB[k]$. 
The next result can be established by using some standard analysis.
\begin{lemma}\label{lem:B_rank}
If, at time instant~$k$,
\begin{align*}
B[k][\mathbf{0}_n^\mathrm{H},\mathrm{x}^\mathrm{H},\mathbf{0}_n^\mathrm{H},\mathrm{y}^\mathrm{H},\mathbf{0}_{|\mathcal{E}_p|}^\mathrm{H},\mathrm{z}^\mathrm{H}]^\mathrm{H}=0,
\end{align*}
for some $\mathrm{x}\in\mathds{R}^{n}$, $\mathrm{y}\in\mathds{R}^{3|\mathcal{E}_p|}$, and $\mathrm{z}\in\mathds{R}^{2+3|\mathcal{E}_p|}$, then, we have that $(\mathrm{x},\mathrm{y},\mathrm{z})=0$, where $\mathbf{0}_n$ denotes the all-zeros vector of length~$n$, and $\mathrm{x}^\mathrm{H}$ denotes the Hermitian transpose of $\mathrm{x}$.
\end{lemma}
Next, we show that all eigenvalues of $B[k]$ have a strictly positive real part. Suppose $\mu$ is an eigenvalue of $B[k]$ and $[\zeta^\mathrm{H},w^\mathrm{H}]^\mathrm{H}$ is an eigenvector corresponding to $\mu$, where $\zeta \coloneqq [\zeta^{(1)\top},\zeta^{(2)\top},\zeta^{(3)\top},\zeta^{(4)\top},\zeta^{(5)\top}]^\top$ has the same number of rows as $D[k]$ such that $\zeta^{(1)}\in\mathds{C}^{n}$, $\zeta^{(2)}\in\mathds{C}^{n}$, $\zeta^{(3)}\in\mathds{C}^{n}$, $\zeta^{(4)}\in\mathds{C}^{3|\mathcal{E}_p|}$, and $\zeta^{(5)}\in\mathds{C}^{|\mathcal{E}_p|}$. Then, on the one hand, we have that
\begin{align*}
\operatorname{Re}\left([\zeta^\mathrm{H},w^\mathrm{H}]B[k]\begin{bmatrix}\zeta\\ w \end{bmatrix}\right)&=\operatorname{Re}\left(\mu[\zeta^\mathrm{H},w^\mathrm{H}]\begin{bmatrix}\zeta\\ w \end{bmatrix}\right)\nonumber\\&=\operatorname{Re}(\mu)(\|\zeta\|_2^2+\|w\|_2^2).
\end{align*}
On the other hand, we have that
\begin{align*}
&\operatorname{Re}\left([\zeta^\mathrm{H},w^\mathrm{H}]B[k]\begin{bmatrix}\zeta\\ w \end{bmatrix}\right)=\operatorname{Re}\Big(\zeta^\mathrm{H}D[k]\zeta+\zeta^\mathrm{H}C[k]w\nonumber\\&\quad-w^\mathrm{H}C[k]^\top \zeta\Big)=\zeta^{(1)\mathrm{H}}\nabla^2f(\upsilon[k])\zeta^{(1)} + \rho\|\zeta^{(3)}\|_2^2 \nonumber\\&\quad+ \rho\|\zeta^{(5)}\|_2^2 + \zeta^{\mathrm{H}}\Upsilon\zeta>0,
\end{align*}
if $(\zeta^{(1)},\zeta^{(3)},\zeta^{(5)}) \neq 0$, or $\zeta\notin\mathrm{null}(\Upsilon)$, denoting the null space of $\Upsilon$. If $\operatorname{Re}(\mu)=0$, then, it follows that $\zeta\in\mathrm{null}(\Upsilon)$, $(\zeta^{(1)},\zeta^{(3)},\zeta^{(5)}) = 0$, and 
\begin{equation}
B[k][\mathbf{0}_n^{\mathrm{H}},\zeta^{(2)\mathrm{H}},\mathbf{0}_n^{\mathrm{H}},\zeta^{(4)\mathrm{H}},\mathbf{0}_{|\mathcal{E}_p|}^\mathrm{H},w^\mathrm{H}]^\mathrm{H}= 0.
\label{eq1}
\end{equation}
By Lemma~\ref{lem:B_rank}, \eqref{eq1} holds only if $\zeta = 0$ and $w=0$, which contradicts the fact that $[\zeta^\mathrm{H},w^\mathrm{H}]^\mathrm{H} \neq 0$. Therefore, all eigenvalues of $B[k]$ have a strictly positive real part, and, for small enough~$s$, the spectral radius of $F[k]$ denoted by $\rho(F[k])$ is strictly less than $1$.
The next results can be established by using some standard analysis.
\begin{lemma}\label{lem:F_norm}
For sufficiently small~$s$, there exists an induced matrix norm~$\|\cdot\|$ such that $\|F[k]\|\leq b_1$, for some $b_1<1$, $\forall k$.
\end{lemma}
\begin{lemma}\label{lem:H_norm}
\begin{align*}
\|H[k][x^{*\mathrm{H}},\omega^{*\mathrm{H}},\gamma^{*\mathrm{H}},\tau^{*\mathrm{H}}]^\mathrm{H}\|\leq sb_2\|z[k]\|,
\end{align*}
for some positive~$b_2$, $\forall k$.
\end{lemma}

Taking $\|\cdot\|$ on both sides of \eqref{ave_dyn2} and applying the triangle inequality yields
\begin{align}
\|G[k]-G^*\|&\leq\|F[k]\|\|z[k]\|\nonumber\\
&\quad+\|H[k][x^{*\mathrm{H}},\omega^{*\mathrm{H}},\gamma^{*\mathrm{H}},\tau^{*\mathrm{H}}]^\mathrm{H}\|.\label{H1_opf_un_ineq1}
\end{align}
By applying Lemmas~\ref{lem:projection}, \ref{lem:F_norm} and \ref{lem:H_norm} and using the inequality \eqref{H1_opf_un_ineq1}, we obtain that
\begin{align}
\|z[k+1]\|&\leq \|G[k]-G^*\| + \|e[k]\|\nonumber\\
&\leq\|F[k]\|\|z[k]\|+\|H[k][x^{*\mathrm{H}},\omega^{*\mathrm{H}},\gamma^{*\mathrm{H}},\tau^{*\mathrm{H}}]^\mathrm{H}\|\nonumber\\&\quad+\|e[k]\|\leq (b_1+sb_2)\|z[k]\|+\|e[k]\| \nonumber\\
&= b\|z[k]\|+\|e[k]\|,
\label{ave_dyn_ineq}
\end{align}
where $b\coloneqq b_1+sb_2<1$ for sufficiently small~$s$.
Now, by multiplying both sides of \eqref{ave_dyn_ineq} by $a^{-(k+1)}$, we obtain
\begin{align}
a^{-(k+1)}\|z[k+1]\|\leq \frac{b}{a} a^{-k}\|z[k]\|+a^{-(k+1)}\|e[k]\|.
\label{ave_dyn_ineq3}
\end{align}
Then, by taking $\max\limits_{0\leq k\leq K}(\cdot)$ on both sides of \eqref{ave_dyn_ineq3}, we obtain
\begin{align}\label{H1_opf_un_ineq2}
&\max\limits_{0\leq k\leq K}a^{-(k+1)}\|z[k+1]\|\leq \frac{b}{a} \max\limits_{0\leq k\leq K} a^{-k}\|z[k]\|\nonumber\\&\quad+\frac{1}{a}\max\limits_{0\leq k\leq K} a^{-k}\|e[k]\|
\nonumber\\&\leq \frac{b}{a} \max\limits_{0\leq k\leq K+1} a^{-k}\|z[k]\|+\frac{1}{a}\max\limits_{0\leq k\leq K+1} a^{-k}\|e[k]\|.
\end{align}
Since \[\max\limits_{0\leq k\leq K}a^{-(k+1)}\|z[k+1]\| = \max\limits_{0\leq k\leq K+1}a^{-k}\|z[k]\|-\|z[0]\|,\] 
the relation~\eqref{H1_opf_un_ineq2} can be written as
\begin{align}
\|z\|_{a,K+1} \leq \frac{b}{a}\|z\|_{a,K+1} + \frac{1}{a}\|e\|_{a,K+1}+\|z[0]\|,\label{z_to_e_un2}
\end{align}
where $\|z\|_{a,K}\coloneqq \max\limits_{0\leq k\leq K}a^{-k}\|z[k]\|$.
Since $\|z\|_{a,K+1}\geq\|z\|_{a,K}$, it follows from~\eqref{z_to_e_un2} that
\begin{align}
\|z\|_{a,K} \leq \frac{b}{a}\|z\|_{a,K+1} + \frac{1}{a}\|e\|_{a,K+1}+\|z[0]\|,\label{z_to_e_un3}
\end{align}
Then, after rearranging~\eqref{z_to_e_un3}, we obtain
\begin{align*}
|z\|_{a,K+1} \leq \frac{1}{a-b}\|e\|_{a,K+1}+\frac{a}{a-b}\|z[0]\|.
\end{align*}
Because $\|\cdot\|_2\leq \alpha\|\cdot\|$ and $\|\cdot\|\leq \beta\|\cdot\|_2$ for some $\alpha$ and $\beta$, we have that $\|z\|_{a,K}\geq {\|z\|_2^{a,K}}/{\alpha}$, $\|e\|_{a,K} \leq \beta\|e\|_2^{a,K}$. Hence,
\[\frac{1}{\alpha}\|z\|_2^{a,K}\leq \frac{\beta}{a-b}\|e\|_2^{a,K}+\frac{a}{a-b}\|z[0]\|,\]
which can be rewritten as 
\[
\|z\|_2^{a,K}\leq \alpha_1\|e\|_2^{a,K}+\beta_1,
\]
where \[\alpha_1 =  \frac{\beta\alpha}{a-b},\] and \[\beta_1 = \frac{a\alpha}{a-b}\|z[0]\|,\]
yielding~\eqref{z_to_e_opf_un}.
\end{proof}
Next, we establish that $\mathcal{H}^{opf}_2$ is finite-gain stable.
\begin{proposition}\label{prop:H2_opf_un}
Let Assumptions~\ref{objective_assumption} and \ref{assume_comm_model_un} hold.
Then, under~\eqref{H2_opf_un}, we have that
\begin{align}
\textnormal{\textbf{R2. }}\|e\|_2^{a,K}\leq s\alpha_2\|z\|_2^{a,K}+\beta_2.\label{e_to_z_opf_un}
\end{align}
for some positive~$\alpha_2$ and $\beta_2$, $a \in (0,1)$, and sufficiently small~$s>0$.
\end{proposition}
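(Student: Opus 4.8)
The plan follows the SCED template: bound the disturbance $e[k]$ by the pairwise disagreements of the replicated variables, and then show that those disagreements are driven only by $\mathcal{O}(s)$-sized quantities together with the convergence error $z[k]$, using that every replicated quantity is re-averaged at least once every $B$ iterations. Since $\overline w[k]=\tfrac12(\hat w[k]+\breve w[k])$ for every replicated variable $w\in\{\ell,p,q,\varepsilon,\nu,\eta\}$ and every gradient tracker, $\hat w[k]-\overline w[k]=\tfrac12(\hat w[k]-\breve w[k])$; writing $\delta_\omega[k]$, $\delta_\gamma[k]$, $\delta_y[k]$ for the stacked disagreements of $\{\ell,p,q,\varepsilon\}$, of $\{\nu,\eta\}$, and of the trackers, inspection of \eqref{H2_opf_un} gives $\|e[k]\|_2\le c\big(\|\delta_\omega[k]\|_2+\|\delta_\gamma[k]\|_2\big)+c\,s\,\|\delta_y[k]\|_2$, the point being that $\hat y$ and $\breve y$ occur in the iterations \emph{only} multiplied by $s$, so the tracker disagreements enter $e$ at order $s$. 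It therefore suffices to establish (i) $\|\delta_\omega\|_2^{a,K}+\|\delta_\gamma\|_2^{a,K}\le s\,c_1\|\delta_y\|_2^{a,K}+c_2$ and (ii) $\|\delta_y\|_2^{a,K}\le c_3\|z\|_2^{a,K}+c_4$, from which \eqref{e_to_z_opf_un} follows with $\alpha_2\propto c_1 c_3$ and $\beta_2$ collecting fixed constants.

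For (i): for a replicated variable $w_{ij}$ attached to line $(i,j)\in\mathcal{E}_p$, the updates in \eqref{H2_opf_un} together with nonexpansiveness of the box and $[\cdot]_+$ projections give $\|\hat w_{ij}[k+1]-\breve w_{ij}[k+1]\|_2\le s\|\hat y^{(w)}_{ij}[k]-\breve y^{(w)}_{ij}[k]\|_2$ whenever $\{i,j\}\in\mathcal{E}_c[k]$ (the averaging annihilates the disagreement up to the gradient step) and $\le\|\hat w_{ij}[k]-\breve w_{ij}[k]\|_2+s\|\hat y^{(w)}_{ij}[k]-\breve y^{(w)}_{ij}[k]\|_2$ otherwise. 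By Assumption~\ref{assume_comm_model_un} the line $\{i,j\}$ is active at least once every $B$ steps, so $\|\hat w_{ij}[k]-\breve w_{ij}[k]\|_2$ is at most $s$ times a sum of at most $B$ consecutive tracker-disagreement norms (plus the fixed initial disagreement when $k<B$); multiplying by $a^{-k}$, bounding $a^{-(k-t)}\le a^{-B}$, and maximizing over $0\le k\le K$ yields $\|\delta_\omega\|_2^{a,K}+\|\delta_\gamma\|_2^{a,K}\le sBa^{-B}\|\delta_y\|_2^{a,K}+c_2$.

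For (ii): the trackers obey an identical per-line re-averaging law, with the gradient-variation terms $\partial L^{(i)}[k+1]/\partial\hat w_{ij}-\partial L^{(i)}[k]/\partial\hat w_{ij}$ (and the $\breve w$ counterparts) replacing the $-s\hat y$ steps, so $\|\delta_y[k]\|_2$ is bounded by $B$ times the largest over a length-$B$ window of $\sum_i\|\mathbf{x}^{(i)}[k+1]-\mathbf{x}^{(i)}[k]\|_2$, using that Assumption~\ref{objective_assumption} makes each $L^{(i)}$ twice differentiable with gradients Lipschitz on the bounded feasible box. Each one-step increment is bounded, by nonexpansiveness of $\mathcal{P}_{\mathcal{X}},\mathcal{P}_{\Omega},[\cdot]_+$ \emph{after} inserting the equilibrium fixed-point relations (the component form of $x^*=\mathcal{P}_{\mathcal{X}}(x^*-s\,\partial L^*/\partial x)$, $\omega^*=\mathcal{P}_{\Omega}(\omega^*-s\,\partial L^*/\partial\omega)$, $\tau^*=[\tau^*+2s\,\partial L^*/\partial\tau]_+$, which coincide with the local fixed points because the residuals $b^{(p)},b^{(q)},b^{(v)},\varepsilon-M^\top v$ all vanish at the optimum of rSOCP and the trackers average to the true gradient at consensus), by $c_5\|\mathbf{x}^{(i)}[k]-\mathbf{x}^{(i)*}\|_2+c_6\big(\|\delta_\omega[k]\|_2+\|\delta_\gamma[k]\|_2\big)+c_7 s\|\delta_y[k]\|_2$ with no free additive constant. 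Since the blocks of $z[k]$ are $x[k]-x^*$, $\overline\omega[k]-\omega^*$, $\overline\gamma[k]-\gamma^*$, $\tau[k]-\tau^*$ and $\overline y[k]-y^*=\partial\overline L[k]/\partial\omega-\partial L^*/\partial\omega$ is Lipschitz in $z[k]$, one has $\|\mathbf{x}^{(i)}[k]-\mathbf{x}^{(i)*}\|_2\le c_8\big(\|z[k]\|_2+\|\delta_\omega[k]\|_2+\|\delta_\gamma[k]\|_2+\|\delta_y[k]\|_2\big)$. Passing to the $a$-norm as in the proof of Proposition~\ref{prop:H1_opf_un} gives $\|\delta_y\|_2^{a,K}\le c_9\big(\|z\|_2^{a,K}+\|\delta_\omega\|_2^{a,K}+\|\delta_\gamma\|_2^{a,K}+s\|\delta_y\|_2^{a,K}\big)+c_{10}$; substituting (i) makes the $\|\delta_y\|_2^{a,K}$-coefficient on the right $\mathcal{O}(s)$, so fixing $a\in(0,1)$ and then taking $s$ small enough that this coefficient is $<1$ and rearranging gives (ii), and then (i) gives $\|\delta_\omega\|_2^{a,K}+\|\delta_\gamma\|_2^{a,K}=\mathcal{O}(s)\|z\|_2^{a,K}+\mathcal{O}(1)$, whence \eqref{e_to_z_opf_un}.

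The crux is the increment estimate in the last paragraph: one must verify that $\|\mathbf{x}^{(i)}[k+1]-\mathbf{x}^{(i)}[k]\|_2$ carries \emph{no} additive constant (which is exactly why the equilibrium fixed-point identities have to be inserted before the nonexpansive step, rather than a crude bound against $0$), and one must keep track of which of its contributions come with an explicit $s$ (the $-s\hat y$ flow/dual steps and the $s$-scaled dual and $\tau$ updates) versus which are only $\mathcal{O}(1)\|\delta_\omega\|_2$ (the mixing terms $-a_{ij}(\hat w_{ij}[k]-\breve w_{ij}[k])$). The factor $s$ in \eqref{e_to_z_opf_un} arises precisely because every unscaled $\mathcal{O}(1)\|\delta_\omega\|_2$ contribution is routed through (i), where the once-per-$B$-step re-averaging forces $\delta_\omega$ to be $\mathcal{O}(s)$ relative to $\delta_y$, so the $\delta_\omega$--$\delta_y$ feedback has loop gain $o(1)$ as $s\to0$; organizing the constants and the $a^{-B}$ factors so that this holds uniformly is the bulk of the work. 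A secondary point is that the bilinear term $\tau_{ij}(\hat p_{ij}^2+\hat q_{ij}^2-v_i\hat\ell_{ij})$ in $L^{(i)}$ makes the relevant Jacobians depend on $\tau$; as in the treatment of $C[k]$ and the skew-symmetrized $F[k]$ in Proposition~\ref{prop:H1_opf_un}, this is handled via the $2s\,\partial L^{(i)}/\partial\tau_{li}$ factor and the boundedness of the iterates, and does not affect the structure above.
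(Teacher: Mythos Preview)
Your proposal is essentially the paper's own strategy. The paper organizes the proof into four lemmas stated without proof: (Lemma~\ref{lem:e_to_zy}) $\|e\|_2^{a,K}\le s\alpha_3\|z\|_2^{a,K}+\alpha_4\|\tilde z\|_2^{a,K}+s\alpha_5\|\tilde y\|_2^{a,K}$, (Lemma~\ref{lem:z_to_y}) $\|\tilde z\|_2^{a,K}\le s\beta_3\|\tilde y\|_2^{a,K}+\beta_0$, (Lemma~\ref{lem:y_to_d}) $\|\tilde y\|_2^{a,K}\le \zeta_1\|\delta\|_2^{a,K}+\zeta_0$, (Lemma~\ref{lem:d_to_zzt}) $\|\delta\|_2^{a,K}\le \kappa_1\|z\|_2^{a,K}+\kappa_0$, and then chains them linearly. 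Your $\delta_\omega,\delta_\gamma$ are the paper's $\tilde z$, your $\delta_y$ is the paper's $\tilde y$, and your gradient variations are the paper's $\delta$; your steps (i), ``trackers obey an identical re-averaging law'', and the one-step increment estimate correspond to Lemmas~\ref{lem:z_to_y}, \ref{lem:y_to_d}, \ref{lem:d_to_zzt} respectively. The only organizational difference is that you close an explicit feedback loop between $\delta_y$ and $(\delta_\omega,\delta_\gamma)$ to get (ii), whereas the paper packages that into the bare statement of Lemma~\ref{lem:d_to_zzt}; since the paper gives no proof of that lemma, your loop-closing argument is in fact the natural way to establish it.

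One small slip to fix: in your bound $\|\mathbf{x}^{(i)}[k]-\mathbf{x}^{(i)*}\|_2\le c_8\big(\|z[k]\|_2+\|\delta_\omega[k]\|_2+\|\delta_\gamma[k]\|_2+\|\delta_y[k]\|_2\big)$, the last term should not be there with an $O(1)$ coefficient, since $\mathbf{x}^{(i)}$ (as defined in the paper) does not contain the trackers; it is bounded by $c_8\big(\|z\|_2+\|\delta_\omega\|_2+\|\delta_\gamma\|_2\big)$ alone. You evidently drop it anyway when you write the next displayed inequality with only an $s\|\delta_y\|_2^{a,K}$ term on the right, so the argument is unaffected, but the intermediate line should be cleaned up.
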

\begin{proof}
In our analysis, we need the following results stated without proofs.
\begin{lemma}\label{lem:e_to_zy}
\begin{align}
\|e\|_2^{a,K} \leq s\alpha_3\|z\|_2^{a,K} + \alpha_4\|\tilde{z}\|_2^{a,K}+s\alpha_5\|\tilde{y}\|_2^{a,K},
\label{e_to_zy}
\end{align}
for some constant~$\alpha_3$, $\alpha_4$, and $\alpha_5$,
where $\tilde{z}[k]\coloneqq \big[(\overline{p}[k] - \hat{p}[k])^\top, (\overline{q}[k] - \hat{q}[k])^\top,\allowbreak (\overline{\varepsilon}[k] - \hat{\varepsilon}[k])^\top, (\overline{\ell}[k] - \hat{\ell}[k])^\top, (\overline{\nu}[k] - \hat{\nu}[k])^\top, (\overline{\eta}[k] - \hat{\eta}[k])^\top\big]^\top$, $\tilde{y}[k]\coloneqq \big[(\overline{y}^{(p)}[k] - \hat{y}^{(p)}[k])^\top, (\overline{y}^{(q)}[k] - \hat{y}^{(q)}[k])^\top, (\overline{y}^{(\varepsilon)}[k] - \hat{y}^{(\varepsilon)}[k])^\top,(\overline{y}^{(\ell)}[k] - \hat{y}^{(\ell)}[k])^\top, (\overline{y}^{(\nu)}[k] - \hat{y}^{(\nu)}[k])^\top,(\overline{y}^{(\eta)}[k] - \hat{y}^{(\eta)}[k])^\top\big]^\top$.
\end{lemma}
\begin{lemma}\label{lem:z_to_y}
\begin{align}
\|\tilde{z}\|_2^{a,K} \leq s\beta_3\|\tilde{y}\|_2^{a,K}+\beta_0,
\label{zt_to_y}
\end{align}
for some constant~$\beta_3$ and $\beta_0$.
\end{lemma}
\begin{lemma}\label{lem:y_to_d}
For $w \in \{\ell, p, q, \varepsilon, \nu, \eta\}$, let 
\begin{align*}
\hat{\delta}^{(w)}_{ij}[k+1] & \coloneqq 2\left(\frac{\partial L^{(i)}[k+1]}{\partial \hat{w}_{ij}} - \frac{\partial L^{(i)}[k]}{\partial \hat{w}_{ij}}\right),(i,j)\in\mathcal{E}_p, \\
\breve{\delta}^{(w)}_{li}[k+1]&\coloneqq 2\left(\frac{\partial L^{(i)}[k+1]}{\partial \breve{w}_{li}} - \frac{\partial L^{(i)}[k]}{\partial \breve{w}_{li}}\right),(l,i)\in\mathcal{E},
\end{align*}
$\delta^{(w)}[k]\coloneqq [\hat{\delta}^{(w)}[k]^\top,\breve{\delta}^{(w)}[k]^\top]^\top$,
$\delta[k] \coloneqq [\delta^{(p)}[k]^\top,\allowbreak\delta^{(q)}[k]^\top,\delta^{(\varepsilon)}[k]^\top,\delta^{(\ell)}[k]^\top,\delta^{(\nu)}[k]^\top,\delta^{(\eta)}[k]^\top]^\top$.
Then, for some constant~$\zeta_0$ and $\zeta_1$, the following relation holds:
\begin{align}
\|\tilde{y}\|_2^{a,K} \leq \zeta_1\|\delta\|_2^{a,K}+\zeta_0.\label{y_to_d}
\end{align}
\end{lemma}
\begin{lemma}\label{lem:d_to_zzt}
\begin{align}
\|\delta\|_2^{a,K} \leq \kappa_1\|z\|_2^{a,K}+\kappa_0,
\label{d_to_zzt}
\end{align}
for some constant~$\kappa_0$ and $\kappa_1$.
\end{lemma}
By substituting \eqref{zt_to_y} for $\|\tilde{z}\|_2^{a,K}$ in \eqref{e_to_zy}, we obtain that
\begin{align}
\|e\|_2^{a,K}& \leq s\alpha_3\|z\|_2^{a,K} + \alpha_4\|\tilde{z}\|_2^{a,K}+s\alpha_5\|\tilde{y}\|_2^{a,K}\nonumber\\
&\leq s\alpha_3\|z\|_2^{a,K} + \alpha_4(s\beta_3\|\tilde{y}\|_2^{a,K}+\beta_0)+s\alpha_5\|\tilde{y}\|_2^{a,K}\nonumber\\
&= s(\alpha_4\beta_3+\alpha_5)\|\tilde{y}\|_2^{a,K} + s\alpha_3\|z\|_2^{a,K} + \alpha_4\beta_0\nonumber\\
&\leq s(\alpha_4\beta_3+\alpha_5)(\zeta_1\|\delta\|_2^{a,K}+\zeta_0) + s\alpha_3\|z\|_2^{a,K}\nonumber\\
&\quad + \alpha_4\beta_0= s\zeta_2\|\delta\|_2^{a,K}+s\alpha_3\|z\|_2^{a,K}+\zeta_3,
\label{e_to_d}
\end{align}
where in the last inequality we applied \eqref{y_to_d}, $\zeta_2\coloneqq (\alpha_4\beta_3+\alpha_5)\zeta_1$, and $\zeta_3\coloneqq s(\alpha_4\beta_3+\alpha_5)\zeta_0+ \alpha_4\beta_0$.
By substituting \eqref{d_to_zzt} for $\|\delta\|_2^{a,K}$ in \eqref{e_to_d}, we obtain that
\begin{align}
\|e\|_2^{a,K}&\leq s\zeta_2\big(\kappa_1\|z\|_2^{a,K}+\kappa_0\big)+s\alpha_3\|z\|_2^{a,K}+s\zeta_3,
\end{align}
which can be rewritten as 
\[
\|e\|_2^{a,K}\leq s\alpha_2\|z\|_2^{a,K}+\beta_2,
\]
where $\alpha_2 =  \zeta_2\kappa_1+\alpha_3$, and $\beta_2 = s\zeta_2\kappa_0+s\zeta_3$,
yielding~\eqref{e_to_z_opf_un}.
\end{proof}
By using Propositions~\ref{prop:H1_opf_un}--\ref{prop:H2_opf_un}, we now show that \eqref{pd_opf_un1}--\eqref{pd_opf_un2} converges geometrically fast.
\begin{proposition}\label{prop:small_gain_opf_un}
Let Assumptions~\ref{objective_assumption} and \ref{assume_comm_model_un} hold.
Then, under algorithm~\eqref{pd_opf_un1}--\eqref{pd_opf_un2}, we have that
\begin{align}
\|z\|_2^{a,K}\leq \beta,
\label{z_relation_opf_un}
\end{align}
for some~$a \in (0,1)$, $\beta>0$, and sufficiently small~$s>0$.
In particular, $\mathbf{\hat{x}}\coloneqq (g^{(p)},g^{(q)},v,\hat{p},\hat{q},\hat{\varepsilon},\hat{\ell})$ and $\mathbf{\breve{x}}\coloneqq (g^{(p)},g^{(q)},v,\allowbreak\breve{p},\breve{q},\breve{\varepsilon},\breve{\ell})$ converge to $\mathbf{x^*}\coloneqq(g^{(p)*}, g^{(q)*}, v^*, p^*, q^*, \varepsilon^*,\ell^*)$ at a geometric rate~$\mathcal{O}(a^k)$.
\end{proposition}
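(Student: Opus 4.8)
The plan is to invoke the small-gain theorem exactly as set up in Section~\ref{sec:preliminaries}, fed with the two finite-gain bounds already in hand. By Proposition~\ref{prop:H1_opf_un} we have $\|z\|_2^{a,K}\le\alpha_1\|e\|_2^{a,K}+\beta_1$, and by Proposition~\ref{prop:H2_opf_un} we have $\|e\|_2^{a,K}\le s\alpha_2\|z\|_2^{a,K}+\beta_2$, for a common $a\in(0,1)$ and all sufficiently small $s>0$; in the abstract statement, $\mathcal{H}_1^{opf}$ plays the role of $\mathcal{H}_1$ and $\mathcal{H}_2^{opf}$ that of $\mathcal{H}_2$. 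Substituting the second relation into the first gives $\|z\|_2^{a,K}\le s\alpha_1\alpha_2\|z\|_2^{a,K}+\alpha_1\beta_2+\beta_1$, so the loop gain is $s\alpha_1\alpha_2$. First I would shrink $s$ further so that $s\alpha_1\alpha_2<1$, then rearrange to obtain
\[
\|z\|_2^{a,K}\le\frac{\alpha_1\beta_2+\beta_1}{1-s\alpha_1\alpha_2}
\]
uniformly in $K$; setting $\beta\coloneqq(\alpha_1\beta_2+\beta_1)/(1-s\alpha_1\alpha_2)$ gives \eqref{z_relation_opf_un}. Since $\|z\|_2^{a,K}\le\beta$ for every $K$, the sequence $a^{-k}\|z[k]\|_2$ is bounded, so $z[k]\to0$ at the geometric rate $\mathcal{O}(a^k)$; unpacking the definition of $z[k]$, this already yields $x[k]\to x^*$, $\overline{\omega}[k]\to\omega^*$, $\overline{\gamma}[k]\to\gamma^*$, and $\tau[k]\to\tau^*$, all geometrically.

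It remains to upgrade convergence of the averages to convergence of the two node-local copies $\hat{\mathbf{x}}$ and $\breve{\mathbf{x}}$. For the components held by a single node --- $g^{(p)}$, $g^{(q)}$, $v$ (and $\tau$) --- there is nothing more to do, since these are precisely the entries of $x[k]$ (resp.\ $\tau[k]$) whose convergence we just established. For the duplicated flow variables $p,q,\varepsilon,\ell$, I would bound the disagreement vector $\tilde{z}[k]$ by chaining Lemmas~\ref{lem:z_to_y}, \ref{lem:y_to_d}, and \ref{lem:d_to_zzt}: from \eqref{d_to_zzt} and the just-proved bound $\|z\|_2^{a,K}\le\beta$ we get $\|\delta\|_2^{a,K}$ bounded; then \eqref{y_to_d} gives $\|\tilde{y}\|_2^{a,K}$ bounded; and finally \eqref{zt_to_y} gives $\|\tilde{z}\|_2^{a,K}$ bounded. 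Hence $a^{-k}\|\tilde{z}[k]\|_2$ is bounded, so $\tilde{z}[k]\to0$ geometrically; in particular $\hat{\omega}[k]-\overline{\omega}[k]\to0$ at rate $\mathcal{O}(a^k)$, and since $\overline{\omega}=\tfrac12(\hat{\omega}+\breve{\omega})$ we have $\breve{\omega}[k]-\overline{\omega}[k]=-(\hat{\omega}[k]-\overline{\omega}[k])\to0$ as well. Combining these with $\overline{\omega}[k]\to\omega^*$ yields $\hat{\omega}[k]\to\omega^*$ and $\breve{\omega}[k]\to\omega^*$ geometrically, which together with $x[k]\to x^*$ is exactly the assertion that $\hat{\mathbf{x}}$ and $\breve{\mathbf{x}}$ converge to $\mathbf{x^*}$ at rate $\mathcal{O}(a^k)$.

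I do not expect a serious obstacle, because the hard estimates have been isolated in Propositions~\ref{prop:H1_opf_un}--\ref{prop:H2_opf_un} and Lemmas~\ref{lem:e_to_zy}--\ref{lem:d_to_zzt}; what remains is assembly. The one point needing care is consistency of constants: one must check that a single $a\in(0,1)$ together with a single upper threshold on $s$ makes R1, R2, all the invoked lemmas, and the strict inequality $s\alpha_1\alpha_2<1$ hold simultaneously --- routine, since each is an ``all sufficiently small $s$ (for a suitable $a$)'' statement and there are only finitely many of them. A second, easily overlooked, subtlety is that $z[k]\to0$ by itself does \emph{not} deliver convergence of the per-node copies, only of their averages; the extra disagreement bound through $\tilde{z}$ is genuinely required, which is why Lemmas~\ref{lem:z_to_y}--\ref{lem:d_to_zzt} must be re-used after the small-gain step.
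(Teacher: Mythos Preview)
Your proposal is correct and follows essentially the same route as the paper: the small-gain step substitutes \textbf{R2} into \textbf{R1} to obtain $\|z\|_2^{a,K}\le(\alpha_1\beta_2+\beta_1)/(1-s\alpha_1\alpha_2)$, and then the disagreement $\tilde{z}$ is bounded by chaining \eqref{d_to_zzt}$\to$\eqref{y_to_d}$\to$\eqref{zt_to_y} before the triangle inequality $\|\hat{\mathbf{x}}-\mathbf{x^*}\|_2^{a,K}\le\|z\|_2^{a,K}+\|\tilde z\|_2^{a,K}$ finishes. Your explicit remarks on choosing a single $(a,s)$ that validates all the lemmas simultaneously, and on why bounding $\tilde z$ is genuinely needed beyond $z\to0$, are points the paper leaves implicit but does not do differently.
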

\begin{proof}
By using Propositions~\ref{prop:H1_opf_un} and \ref{prop:H2_opf_un}, it follows that
\begin{align*}
\|z\|_2^{a,K}\leq \alpha_1\|e\|_2^{a,K}+\beta_1 \leq \alpha_1(s\alpha_2\|z\|_2^{a,K}+\beta_2)+\beta_1,
\end{align*}
which, after rearranging, results in
\begin{align*}
\|z\|_2^{a,K}\leq \frac{\alpha_1\beta_2+\beta_1}{1-s\alpha_1\alpha_2}\eqqcolon \beta,
\end{align*}
yielding \eqref{z_relation_opf_un}.
Hence, for sufficiently small~$s$, we have that $s\alpha_1\alpha_2<1$, which ensures that $\beta$ is finite. 
Next, we show that $\mathbf{\hat{x}}[k]$ and $\mathbf{\breve{x}}[k]$ converge to $\mathbf{x^*}$.
By substituting \eqref{d_to_zzt} for $\|\delta\|_2^{a,K}$ in \eqref{y_to_d}, we have that
\begin{align}
\|\tilde{y}\|_2^{a,K} &\leq \zeta_1\|\delta\|_2^{a,K}+\zeta_0\leq \zeta_1(\kappa_1\|z\|_2^{a,K}+\kappa_0)+\zeta_0\nonumber\\&= \kappa_3\|z\|_2^{a,K}+\kappa_4,
\label{y_to_z}
\end{align}
for some positive~$\kappa_3$ and $\kappa_4$.
Then, by using \eqref{y_to_z} and \eqref{z_relation_opf_un} in \eqref{zt_to_y}, we have that
\begin{align}
\|\tilde{z}\|_2^{a,K} &\leq s\beta_3\|\tilde{y}\|_2^{a,K}+\beta_0\leq s\beta_3(\kappa_3\|z\|_2^{a,K}+\kappa_4)+\beta_0\nonumber\\&\leq \kappa_5,
\label{zt_to_z}
\end{align}
for some~$\kappa_5>0$. 
Letting $\mathbf{\overline{x}}\coloneqq(\mathbf{\hat{x}}+\mathbf{\hat{x}})/2$, and using the triangle inequality and the inequalities \eqref{z_to_e_opf_un} and \eqref{zt_to_z}, we obtain that 
\begin{align}
\|\mathbf{\hat{x}}-\mathbf{x^*}\|_2^{a,K}&=\|\mathbf{\overline{x}} - \mathbf{x^*} + \mathbf{\hat{x}}-\mathbf{\overline{x}}\|_2^{a,K}\leq \|z\|_2^{a,K} + \|\tilde{z}\|_2^{a,K}\nonumber\\&\leq \beta + \kappa_5,
\end{align} 
from where it follows that $\mathbf{\hat{x}}[k]$ and $\mathbf{\breve{x}}[k]$ converge to $\mathbf{x^*}$.
\end{proof}
Finally, we prove that $\mathbf{x^*}$ is the solution of rSOCP. [The proof is similar to the proof of Lemma~\ref{lem:sced_sol}.]
\begin{lemma}\label{lem:opf_sol}
Consider $(x^*,\omega^*,\gamma^*,\tau^*)$, namely, the equilibrium of \eqref{centr_pd_opf}.
Then, $\mathbf{x^*}$ is the solution of rSOCP.
\end{lemma}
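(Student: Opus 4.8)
\emph{Proof proposal.} The plan is to mirror the proof of Lemma~\ref{lem:sced_sol}: start from the equilibrium conditions of \eqref{centr_pd_opf}, namely
\begin{align*}
x^* &= \mathcal{P}_{\mathcal{X}}\Big(x^* - s\tfrac{\partial L^*}{\partial x}\Big), &
\omega^* &= \mathcal{P}_{\Omega}\Big(\omega^* - s\tfrac{\partial L^*}{\partial \omega}\Big),\\
\gamma^* &= \gamma^* + s\tfrac{\partial L^*}{\partial \gamma}, &
\tau^* &= \Big[\tau^* + 2s\tfrac{\partial L^*}{\partial \tau}\Big]_+,
\end{align*}
and convert each into a KKT-type relation for rSOCP. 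The $\gamma$-update immediately forces $\partial L^*/\partial\gamma = 0$, i.e. $b^{(p)}=b^{(q)}=b^{(v)}=0$ and $\varepsilon^* = M^\top v^*$ at the equilibrium, which is exactly primal feasibility of the equality constraints \eqref{DistFlow1}--\eqref{DistFlow2} and \eqref{DistFlow6}. A key consequence is that the penalty terms $\rho_1\|b^{(p)}\|_2^2$, $\rho_2\|b^{(q)}\|_2^2$, $\rho_3\|b^{(v)}\|_2^2$ have vanishing gradient at the equilibrium (each gradient is proportional to $b^{(p)}$, $b^{(q)}$, $b^{(v)}$, respectively), so the stationarity conditions extracted below coincide with the KKT stationarity conditions of rSOCP, in which these penalties are absent.

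Next I would invoke the Projection Theorem \cite[Proposition~2.1.3]{NonlinearProgramming} on the $x$- and $\omega$-equilibrium equations. Since $\mathcal{X}$ and $\Omega$ are boxes, $x^* = \mathcal{P}_{\mathcal{X}}(x^* - s\,\partial L^*/\partial x)$ is equivalent to the existence of nonnegative multipliers for the upper and lower bounds on $g^{(p)},g^{(q)},v$ such that $\partial L^*/\partial x$ plus the corresponding multiplier combination vanishes, together with the associated complementary-slackness relations; likewise for the box on $p,q,\varepsilon,\ell$ from the $\omega$-equation (and $x^*\in\mathcal{X}$, $\omega^*\in\Omega$ automatically, being images of the respective projections). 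Expanding $\partial L^*/\partial x$ and $\partial L^*/\partial \omega$ block by block, with the penalty gradients dropped, reproduces precisely the gradient-of-Lagrangian equations for rSOCP with multipliers $\lambda^*,\mu^*,\nu^*,\eta^*$ on the equalities and $\tau^*$ on the conic constraint. Finally, $\tau^* = [\tau^* + 2s\,\partial L^*/\partial\tau]_+$ yields $\tau^*\ge 0$, $\partial L^*/\partial\tau = p^*\circ p^* + q^*\circ q^* - M_0^\top v^*\circ\ell^* \le 0$, and $\tau^*_{ij}\big((p^*_{ij})^2+(q^*_{ij})^2-(M_0^\top v^*)_{ij}\ell^*_{ij}\big)=0$ for all $(i,j)\in\mathcal{E}_p$, i.e. feasibility of \eqref{DistFlow5} and complementary slackness; the harmless factor $2$ (present only to make $F[k]$ skew-symmetric in the convergence analysis) does not affect this fixed-point characterization.

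Collecting all these relations gives the complete KKT system of rSOCP. Since rSOCP is a convex program — convex objective plus the convex regularizer, linear equality constraints \eqref{DistFlow1}--\eqref{DistFlow2}, \eqref{DistFlow6}, box constraints \eqref{cap_constr}, \eqref{volt_constr}, \eqref{curr_constr}, and the second-order-cone constraint \eqref{DistFlow5} — satisfaction of the KKT conditions is sufficient for global optimality, so by \cite[Proposition~3.3.1]{NonlinearProgramming} the tuple $\mathbf{x^*} = (g^{(p)*},g^{(q)*},v^*,p^*,q^*,\varepsilon^*,\ell^*)$ solves rSOCP. The main (though essentially routine) work is the bookkeeping: correctly reading off $\partial L^*/\partial x$ and $\partial L^*/\partial\omega$ for every coordinate block, matching the box-constraint normal-cone conditions to nonnegative multipliers, and confirming that the penalty-term gradients indeed vanish at the equilibrium so that the conditions obtained are those of rSOCP rather than of the penalized problem; I expect this term-by-term identification, together with checking that \eqref{DistFlow5} is a bona fide convex constraint so that KKT suffices, to be the principal obstacle.
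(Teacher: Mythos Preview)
Your proposal is correct and follows exactly the approach the paper intends: the paper does not spell out a proof of Lemma~\ref{lem:opf_sol} but only remarks that it is similar to the proof of Lemma~\ref{lem:sced_sol}, and your plan---read off the fixed-point equations of \eqref{centr_pd_opf}, use $\partial L^*/\partial\gamma=0$ to recover the equality constraints (so the quadratic penalty gradients vanish), translate the box/nonnegative projections into multipliers with complementary slackness, and invoke KKT sufficiency via \cite[Proposition~3.3.1]{NonlinearProgramming}---is precisely that. The only point worth flagging in the bookkeeping is that the algorithm imposes auxiliary box bounds on $p,q,\varepsilon$ (inside $\Omega$) that are not constraints of rSOCP; you should note that these bounds are implied by \eqref{volt_constr}, \eqref{curr_constr}, \eqref{DistFlow5}, and \eqref{DistFlow6}, so they are inactive (or redundant) at the equilibrium and contribute no extra multipliers to the KKT system.
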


\subsection{Time-Varying Directed Communication Graphs}
The design of the robust extension of~\eqref{pd_opf_un1}--\eqref{pd_opf_un2} follows exactly the development procedure outlined in Section~\ref{subsec:sced_dd}.

In the robust extension, we let node~$i$ perform the same updates for $g^{(p)}_i$, $g^{(q)}_i$, $v_i$, $\lambda_i$, $\mu_i$, $\tau_{li}$, $(l,i)\in\mathcal{E}_p$ as the ones in \eqref{pd_opf_un1}--\eqref{pd_opf_un2}, i.e.,
\begin{subequations}\label{pd_opf_dd0}
\begin{align}
\chi[k+1] &= \Big[\chi[k] - s\frac{\partial L^{(i)}[k]}{\partial \chi}\Big]_{\chi^{\min}}^{\chi^{\max}},\chi \in \{g^{(p)}_i,g^{(q)}_i,v_i\},\\
\psi[k+1] &= \psi[k] + s\frac{\partial L^{(i)}[k]}{\partial \psi},\psi \in \{\lambda_i,\mu_i\},\\
\tau_{li}[k+1] &= \Big[\tau_{li}[k] + 2s\frac{\partial L^{(i)}[k]}{\partial \tau_{li}}\Big]_+,
\end{align}
\end{subequations}
The only difference between algorithm~\eqref{pd_opf_un1}--\eqref{pd_opf_un2} and its robust extension is how the averaging step is performed in the updates of the estimates of the shared quantities, namely, $\hat{\omega}[k] \coloneqq [\hat{p}[k]^\top,\hat{q}[k]^\top,\hat{\varepsilon}[k]^\top,\hat{\ell}[k]^\top]^\top$, and $\breve{\omega}[k] \coloneqq [\breve{p}[k]^\top,\breve{q}[k]^\top,\breve{\varepsilon}[k]^\top,\breve{\ell}[k]^\top]^\top$, and the gradient vectors, $\hat{y}^{(w)}[k]\coloneqq [\{\hat{y}_{ij}^{(w)}[k]\}_{(i,j)\in\mathcal{E}_p}]$, and $\breve{y}^{(w)}[k]\coloneqq [\{\breve{y}_{li}^{(w)}[k]\}_{(l,i)\in\mathcal{E}_p}]$, $w \in \{\ell, p, q,\varepsilon,\nu,\eta\}$. In the robust extension, we make use of the alternating averaging protocol to execute the averaging step; namely, for each $(i,j)\in\mathcal{E}_p$, node $i$ runs the following iterations:
\begin{subequations}\label{pd_opf_dd1}
\begin{align}
\hat{w}_{ij}[k+1] &= \Big[(1-a_{ij}[k])\hat{w}_{ij}[k]+ a_{ij}[k]\breve{w}_{ij}[k]\nonumber\\
&\quad  - s\hat{y}^{(w)}_{ij}[k]\Big]_{w_{ij}^{\min}}^{w_{ij}^{\max}},\\ 
\hat{d}_{ij}[k+1] &= (1-a_{ij}[k])\hat{d}_{ij}[k]+ a_{ij}[k]\breve{d}_{ij}[k] \nonumber\\&\quad + s\hat{y}^{(d)}_{ij}[k],\\
\hat{y}^{(\psi)}_{ij}[k+1] &= (1-a_{ij}[k]) \hat{y}^{(\psi)}_{ij}[k] + a_{ij}[k]\breve{y}^{(\psi)}_{ij}[k]\nonumber\\
&\quad + 2\left(\frac{\partial L^{(i)}[k+1]}{\partial \hat{w}_{ij}} - \frac{\partial L^{(i)}[k]}{\partial \hat{w}_{ij}}\right),
\end{align}
\end{subequations}
while node~$j$ executes the iterations given below: 
\begin{subequations}\label{pd_opf_dd1}
\begin{align}
\breve{w}_{ij}[k+1] &= \Big[(1-a_{ji}[k])\breve{r}^{(w)}_{ij}[k]+ a_{ji}[k]\hat{r}^{(w)}_{ij}[k]\nonumber\\
&\quad+\breve{w}_{ij}[k] - \breve{r}^{(w)}_{ij}[k]- s\hat{y}^{(w)}_{ij}[k]\Big]_{w_{ij}^{\min}}^{w_{ij}^{\max}},\\
\breve{d}_{ij}[k+1] &= (1-a_{ij}[k])\breve{r}^{(d)}_{ij}[k]+ a_{ij}[k]\hat{r}^{(d)}_{ij}[k]\nonumber\\&\quad +\breve{d}_{ij}[k] - \breve{r}^{(d)}_{ij}[k] + s\breve{y}^{(d)}_{ij}[k],\\
\breve{y}^{(\psi)}_{ij}[k+1] &= (1-a_{ij}[k]) \breve{\rho}^{(\psi)}_{ij}[k] + a_{ij}[k]\hat{\rho}^{(\psi)}_{ij}[k]\nonumber\\
&\quad+\breve{y}^{(\psi)}_{ij}[k] - \breve{\rho}^{(\psi)}_{ij}[k]\nonumber\\
&\quad + 2\left(\frac{\partial L^{(j)}[k+1]}{\partial \breve{\psi}_{ij}} - \frac{\partial L^{(j)}[k]}{\partial \breve{\psi}_{ij}}\right),
\end{align}
\end{subequations}
with $w\in \{\ell, p, q, \varepsilon\}$, $d \in \{\nu,\eta\}$, and $\psi \in \{\ell, p, q,\varepsilon,\nu,\eta\}$, where $a_{ij}[k]$, $a_{ji}[k]$, $\hat{r}^{(w)}_{ij}[k]$, $\breve{r}^{(w)}_{ij}[k]$, $\hat{\rho}^{(w)}_{ij}[k]$, and $\breve{\rho}^{(w)}_{ij}[k]$, $(i,j)\in\mathcal{E}_p$, $w \in \{\ell, p, q,\varepsilon,\nu,\eta\}$, are updated using the alternating averaging protocol provided below:
\begin{subequations}\label{pd_opf_dd3}
\begin{align*}
\mbox{node~$i$:}\\
\breve{\phi}_{ij}[k] &= \left\{\begin{array}{l l}\phi_{ji}[k] &\mbox{if } (j,i)\in\vec{\mathcal{E}}_c[k],\\ \breve{\phi}_{ij}[k-1] & \mbox{otherwise,}\end{array}\right.\\
\phi_{ij}[k] &= \left\{\begin{array}{l l}\neg\phi_{ij}[k-1] &\mbox{if } \breve{\phi}_{ij}[k]\neq \breve{\phi}_{ij}[k-1],\\ \phi_{ij}[k-1] & \mbox{otherwise,}\end{array}\right.\\
a_{ij}[k] &= \left\{\begin{array}{l l}0.5 &\mbox{if } (j,i)\in\vec{\mathcal{E}}_c[k], \breve{\phi}_{ij}[k]\neq \breve{\phi}_{ij}[k-1],\\ 0 & \mbox{otherwise,}\end{array}\right.\\
\mbox{node~$j$:}\\
\hat{\phi}_{ij}[k] &= \left\{\begin{array}{l l}\phi_{ij}[k]&\mbox{if } (i,j)\in\vec{\mathcal{E}}_c[k],\\ \hat{\phi}_{ij}[k-1] & \mbox{otherwise,}\end{array}\right.\\
\phi_{ji}[k] &= \left\{\begin{array}{l l}\neg\phi_{ji}[k-1] &\mbox{if } \hat{\phi}_{ij}[k]\neq \hat{\phi}_{ij}[k-1],\\ \phi_{ji}[k-1] & \mbox{otherwise,}\end{array}\right.\\
a_{ji}[k] &= \left\{\begin{array}{l l}0.5 &\mbox{if } (i,j)\in\vec{\mathcal{E}}_c[k], \hat{\phi}_{ij}[k]\neq \hat{\phi}_{ij}[k-1],\\ 0 & \mbox{otherwise,}\end{array}\right.\\
\hat{r}^{(w)}_{ij}[k] &= \left\{\begin{array}{l l}\hat{w}_{ij}[t_k] &\mbox{if } (i,j)\in\vec{\mathcal{E}}_c[k], \hat{\phi}_{ij}[k]\neq \hat{\phi}_{ij}[k-1],\\ 0 & \mbox{otherwise,}\end{array}\right.\\
\breve{r}^{(w)}_{ij}[k] &= \left\{\begin{array}{l l}\breve{w}_{ij}[t_k] &\mbox{if } (i,j)\in\vec{\mathcal{E}}_c[k], \breve{\phi}_{ij}[k]\neq \breve{\phi}_{ij}[k-1],\\ 0 & \mbox{otherwise,}\end{array}\right.\\
\hat{\rho}^{(w)}_{ij}[k] &= \left\{\begin{array}{l l}\hat{y}^{(w)}_{ij}[t_k] &\mbox{if } (i,j)\in\vec{\mathcal{E}}_c[k], \hat{\phi}_{ij}[k]\neq \hat{\phi}_{ij}[k-1],\\ 0 & \mbox{otherwise,}\end{array}\right.\\
\breve{\rho}^{(w)}_{ij}[k] &= \left\{\begin{array}{l l}\breve{y}^{(w)}_{li}[t_k] &\mbox{if } (i,j)\in\vec{\mathcal{E}}_c[k], \breve{\phi}_{ij}[k]\neq \breve{\phi}_{ij}[k-1],\\ 0 & \mbox{otherwise.}\end{array}\right.
\tag{\ref{pd_opf_dd3}}
\end{align*}
\end{subequations}
The following result can be easily established following the analysis from the proofs of Propositions~\ref{prop:H1_opf_un}--\ref{prop:small_gain_opf_un}.
\begin{proposition}\label{prop:small_gain_opf_dd}
Let Assumptions~\ref{objective_assumption} and \ref{assume_comm_model_dd} hold.
Then, under algorithm~\eqref{pd_opf_dd0}--\eqref{pd_opf_dd3}, we have that
\begin{align}
\|z\|_2^{a,K}\leq \beta,
\label{z_relation_opf_dd}
\end{align}
for some~$a \in (0,1)$, $\beta>0$, and sufficiently small~$s>0$.
In particular, $\mathbf{\hat{x}}\coloneqq (g^{(p)},g^{(q)},v,\hat{p},\hat{q},\hat{\varepsilon},\hat{\ell})$ and $\mathbf{\breve{x}}\coloneqq (g^{(p)},g^{(q)},v,\allowbreak\breve{p},\breve{q},\breve{\varepsilon},\breve{\ell})$ converge to $\mathbf{x^*}\coloneqq(g^{(p)*}, g^{(q)*}, v^*, p^*, q^*, \varepsilon^*,\ell^*)$ at a geometric rate~$\mathcal{O}(a^k)$.
\end{proposition}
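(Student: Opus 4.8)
The plan is to mirror, essentially step for step, the argument developed for the undirected case in Propositions~\ref{prop:H1_opf_un}--\ref{prop:small_gain_opf_un}, replacing the one-step averaging driven by Assumption~\ref{assume_comm_model_un} with the alternating averaging protocol~\eqref{pd_opf_dd3} driven by Assumption~\ref{assume_comm_model_dd}. First I would recast algorithm~\eqref{pd_opf_dd0}--\eqref{pd_opf_dd3} as a feedback interconnection of a nominal system $\mathcal{H}_1^{opf}$ and a disturbance system $\mathcal{H}_2^{opf}$, exactly as in Section~\ref{subsec:feedback_opf}. The updates for $g^{(p)}_i$, $g^{(q)}_i$, $v_i$, $\lambda_i$, $\mu_i$, and $\tau_{li}$ in \eqref{pd_opf_dd0} coincide with those in \eqref{pd_opf_un1}; moreover, the averaging in \eqref{pd_opf_dd1} is designed (via the correction terms $\breve{w}_{ij}[k]-\breve{r}^{(w)}_{ij}[k]$ and $\breve{d}_{ij}[k]-\breve{r}^{(d)}_{ij}[k]$, which telescope the gradient increments accumulated since the last handshake) precisely so that over each handshake the pair $(\hat{w}_{ij},\breve{w}_{ij})$ emulates a single synchronized averaging step. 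Hence the averaged quantities $\overline{\omega}[k]$, $\overline{\gamma}[k]$, $\overline{y}[k]$ still satisfy relations of the form~\eqref{H1_opf_un} with a disturbance $e[k]$ of the same structure as in \eqref{H2_opf_un} (now additionally carrying the delay and accumulated-gradient terms of \eqref{pd_opf_dd1}). Since the proof of Proposition~\ref{prop:H1_opf_un} and Lemmas~\ref{lem:projection}--\ref{lem:H_norm} use only this structure --- the skew-symmetric splitting $F[k]=I-sB[k]$, the strict positivity of the real parts of the eigenvalues of $B[k]$ via Lemma~\ref{lem:B_rank}, and the geometric-weighted-norm bookkeeping --- they carry over essentially verbatim and yield \textbf{R1}: $\|z\|_2^{a,K}\leq\alpha_1\|e\|_2^{a,K}+\beta_1$ for some $\alpha_1,\beta_1>0$, $a\in(0,1)$, and sufficiently small $s$.

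The substantive work is re-establishing \textbf{R2}, i.e., the finite-gain stability of $\mathcal{H}_2^{opf}$. The key combinatorial fact to prove is that Assumption~\ref{assume_comm_model_dd}, together with the flag mechanism of \eqref{pd_opf_dd3}, guarantees that for every $(i,j)\in\mathcal{E}_p$ there is a sequence of handshake times whose consecutive gaps are bounded by some $B'$ depending only on $B$: since both $(i,j)$ and $(j,i)$ are active at least once per $B$ iterations, a full flag exchange completes within a bounded number of iterations, and in between each pair behaves as an isolated, non-mixing node whose disagreement merely accumulates bounded local gradient increments. I would then decompose $e[k]$ as in Lemma~\ref{lem:e_to_zy} into a term proportional to $\|z\|$ and a disagreement term $\|\tilde{z}\|$, and prove the directed analog of Lemma~\ref{lem:z_to_y}: over each window of length $B'$ the matched averaging step contracts $\tilde{z}$ by a fixed factor strictly less than $1$, while the accumulated-gradient corrections $\breve{w}_{ij}[k]-\breve{r}^{(w)}_{ij}[k]$ are bounded by $s$ times gradient norms, so that $\|\tilde{z}\|_2^{a,K}\leq s\beta_3\|\tilde{y}\|_2^{a,K}+\beta_0$ still holds. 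The remaining links in the chain, Lemmas~\ref{lem:y_to_d} and \ref{lem:d_to_zzt}, only use the local Lagrangians, the Lipschitz continuity implied by Assumption~\ref{objective_assumption}, and the nonexpansiveness of the projections, and thus transfer without change. Composing these estimates exactly as in the proof of Proposition~\ref{prop:H2_opf_un} gives \textbf{R2}: $\|e\|_2^{a,K}\leq s\alpha_2\|z\|_2^{a,K}+\beta_2$.

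With \textbf{R1} and \textbf{R2} in hand, the loop gain $s\alpha_1\alpha_2$ is strictly less than $1$ for sufficiently small $s$; the small-gain computation of the proof of Proposition~\ref{prop:small_gain_opf_un} then yields $\|z\|_2^{a,K}\leq(\alpha_1\beta_2+\beta_1)/(1-s\alpha_1\alpha_2)\eqqcolon\beta$ uniformly in $K$, so $z[k]\to0$ at the rate $\mathcal{O}(a^k)$, which is \eqref{z_relation_opf_dd}. Feeding this uniform bound back through the directed analogs of~\eqref{y_to_z} and~\eqref{zt_to_z} bounds $\|\tilde{z}\|_2^{a,K}$ by a constant $\kappa_5$, and the triangle inequality $\|\hat{\mathbf{x}}-\mathbf{x}^*\|_2^{a,K}\leq\|z\|_2^{a,K}+\|\tilde{z}\|_2^{a,K}\leq\beta+\kappa_5$ --- together with the identical bound for $\breve{\mathbf{x}}$ --- establishes that $\hat{\mathbf{x}}[k]$ and $\breve{\mathbf{x}}[k]$ converge to $\mathbf{x}^*$ at a geometric rate.

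I expect the main obstacle to be the second paragraph: proving the uniform handshake-period bound $B'$ from Assumption~\ref{assume_comm_model_dd} and controlling the stale, time-shifted values $\hat{w}_{ij}[t_k]$, $\hat{y}^{(w)}_{ij}[t_k]$ that node~$j$ uses in \eqref{pd_opf_dd1}, i.e., showing that the communication delay and the accumulated-gradient corrections do not destroy the geometric contraction of the disagreement dynamics. The remaining steps are faithful transcriptions of the undirected proof, and the fact that $\mathbf{x}^*$ actually solves rSOCP is furnished separately by Lemma~\ref{lem:opf_sol}.
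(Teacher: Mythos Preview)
Your proposal is correct and follows precisely the approach indicated by the paper, which does not give a detailed proof but simply states that the result ``can be easily established following the analysis from the proofs of Propositions~\ref{prop:H1_opf_un}--\ref{prop:small_gain_opf_un}.'' Your plan to carry \textbf{R1} over verbatim (since the nominal system is unchanged) and to re-establish \textbf{R2} by showing that the alternating averaging protocol under Assumption~\ref{assume_comm_model_dd} yields a bounded handshake period and hence the same contraction-plus-$\mathcal{O}(s)$ structure as Lemmas~\ref{lem:e_to_zy}--\ref{lem:d_to_zzt}, followed by the identical small-gain computation, is exactly what the paper intends.
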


\subsection{Numerical Simulations}
\label{subsec:simulations_opf}
In the following, we present numerical results to illustrate the performance of the robust distributed primal-dual algorithm~\eqref{pd_opf_dd0}--\eqref{pd_opf_dd3} over directed graph $\vec{\mathcal{G}}^{(c)}[k]$ using the IEEE~$69$--bus radial test system \cite{Matpower}. 

A subset of buses are designated to have a DER. For a DER at bus~$i$, we choose $f_i(p_i) = a_ip_i^2$, where $a_i>0$ is randomly selected. It is assumed that each communication link becomes inactive with probability~$0.4$. The algorithm uses a constant stepsize~$s = 3\times10^{-2}$. For initialization, we use $v_i[0]=1$, $i \in \mathcal{V}_p$, $g^{(p)}[0] = l_p[0]$, $g^{(q)}[0] = l_q[0]$, and the initial values of the remaining variables, except for the gradients~$\hat{y}^{(w)}[0]$ and $\breve{y}^{(w)}[0]$, $w \in \{\ell, p, q,\varepsilon,\nu,\eta\}$, are set to zero. The initial values of the gradients are computed using \eqref{init}, except for $\hat{y}^{(\eta)}[0]$ and $\breve{y}^{(\eta)}[0]$, which are computed by neglecting the voltages, $v_i[0]$'s, i.e.,
$\hat{y}^{(\eta)}[0] = \hat{\varepsilon}[0]$, and $\breve{y}^{(\eta)}[0] = \breve{\varepsilon}[0]$ (following the suggestion in the discussion after \eqref{y_eta}).


In Fig.~\ref{fig:num_results_69bus-admm}, we compare the performance of algorithm~\eqref{pd_opf_dd0}--\eqref{pd_opf_dd3} against that of the asynchronous ADMM proposed in \cite{GuHu17}. The asynchronous ADMM has two parameters~$\rho$ and $\alpha$. 
Figure~\ref{fig:num_results_69bus-admm} shows the trajectory of the relative cost error of the obtained solutions, namely, \[\frac{|f(g^{(p)}[k])-f(g^{(p)*})|}{f(g^{(p)*})},\] and the evolution of the largest constraint violation for different values of $\rho$ and $\alpha$. In algorithm~\eqref{pd_opf_dd0}--\eqref{pd_opf_dd3}, each node runs $m=10$ iterations, where we recall that $m$ is the number of iterations between consecutive communication attempts. In ADMM, each node solves a local optimization problem ($x$- or $z$-update) at each iteration (between consecutive communication attempts). In general, the closed-form solutions of the local problems are not available, but, the local problems arising in the OPF problem \eqref{rOPF} admit the closed-form solutions as shown in \cite{PeLo18}. The results in Fig.~\ref{fig:num_results_69bus-admm} demonstrate that algorithm~\eqref{pd_opf_dd0}--\eqref{pd_opf_dd3} has geometric convergence rate, and might converge faster than the asynchronous ADMM since the latter has asymptotic convergence rate.
\begin{figure}
    \begin{subfigure}[t]{0.5\textwidth}
        \centering
        \includegraphics[trim=0cm 0cm 0cm 0cm, clip=true, scale=0.4]{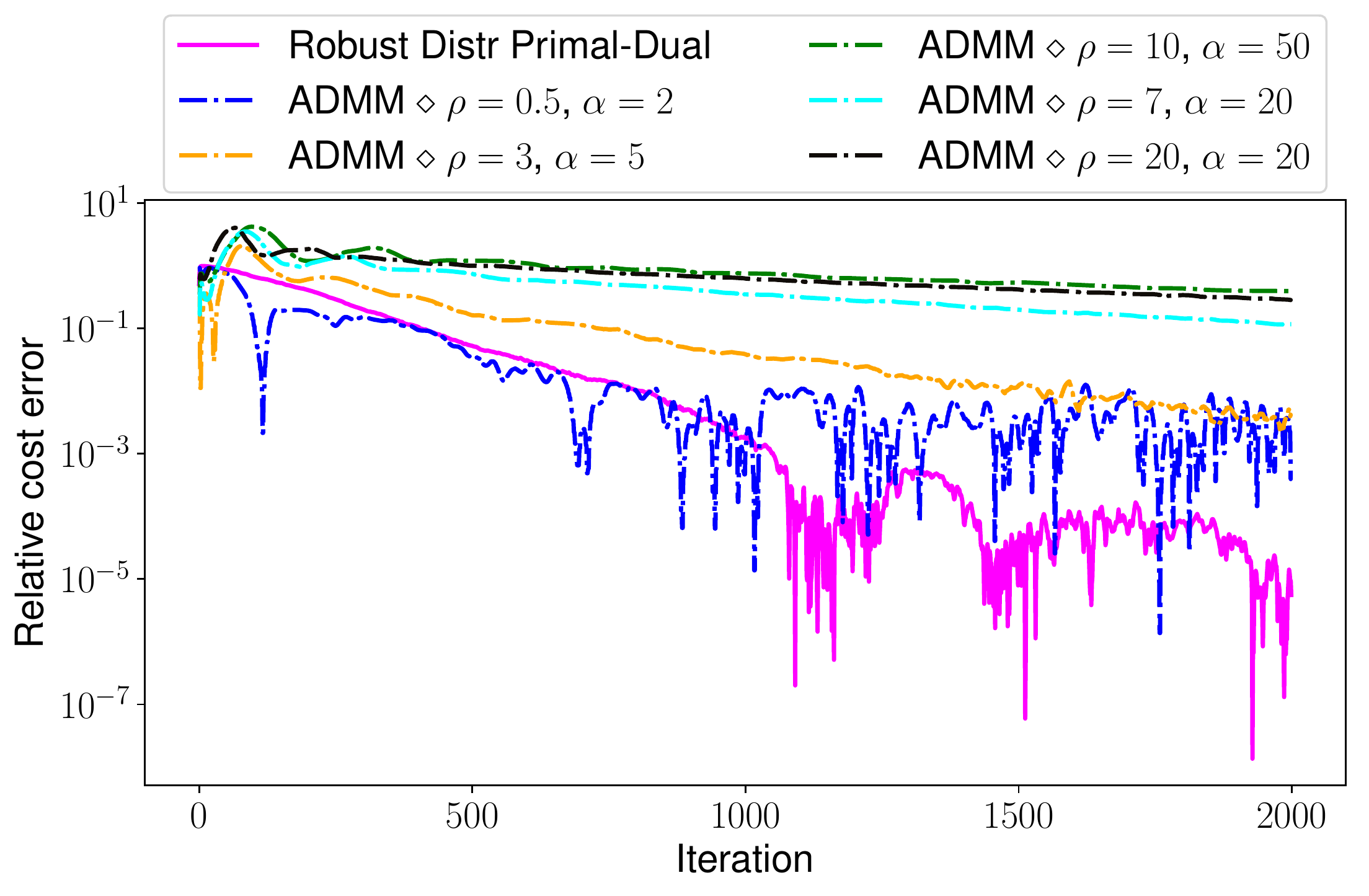}
        \vspace{-15pt}
        \caption{Trajectories of the relative cost error for the estimated solutions}
        \label{fig:cost-69-bus-admm}
    \end{subfigure}\vspace{10pt}\\
    \begin{subfigure}[t]{0.5\textwidth}
        \centering
        \includegraphics[trim=0cm 0cm 0cm 0cm, clip=true, scale=0.4]{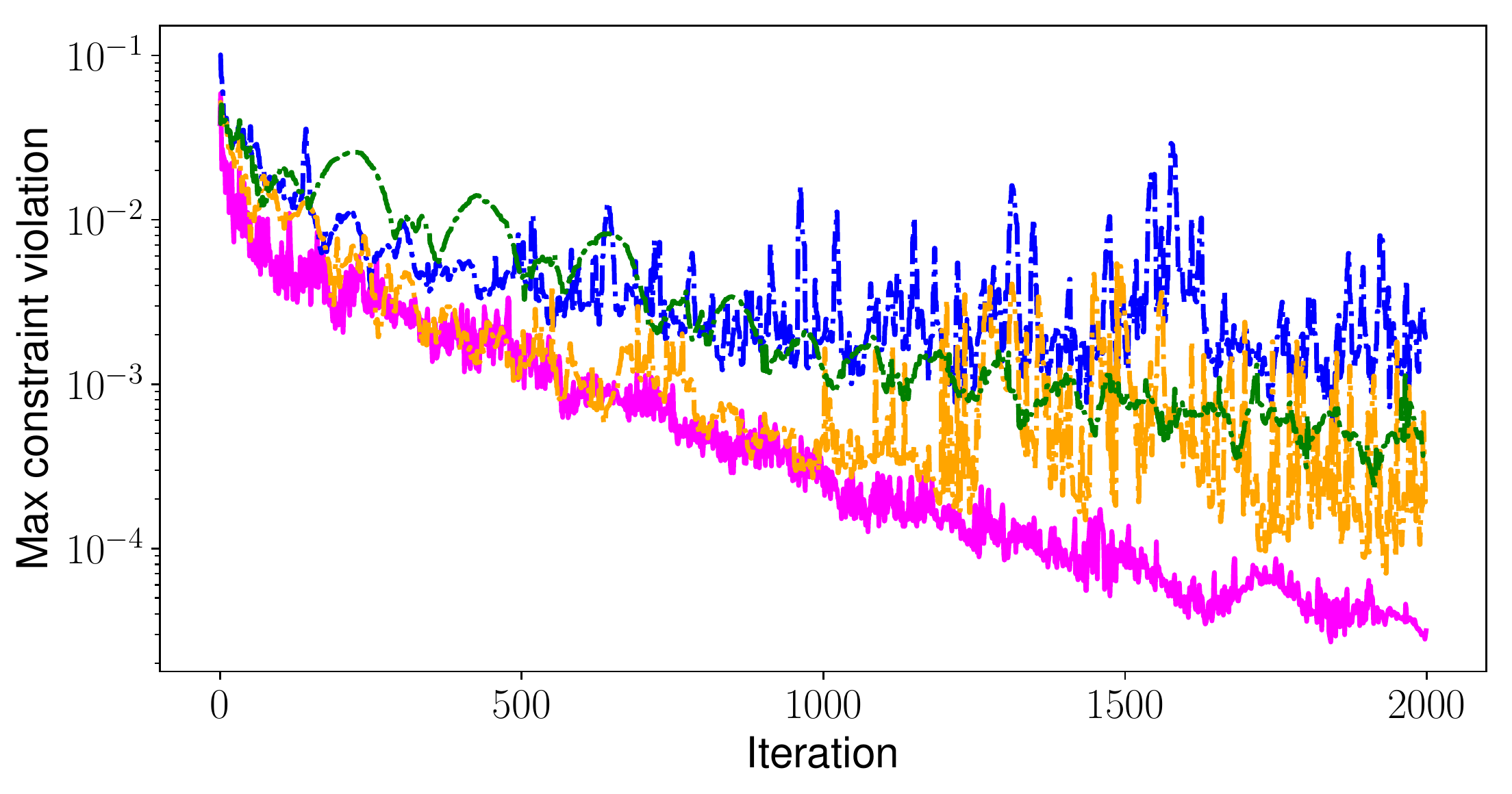}
        \vspace{-15pt}
        \caption{Evolution of the largest constraint violation -- part i}
        \label{fig:constr-69-bus-admm-1}
    \end{subfigure}\vspace{10pt}\\
        \begin{subfigure}[t]{0.5\textwidth}
        \centering
        \includegraphics[trim=0cm 0cm 0cm 0cm, clip=true, scale=0.4]{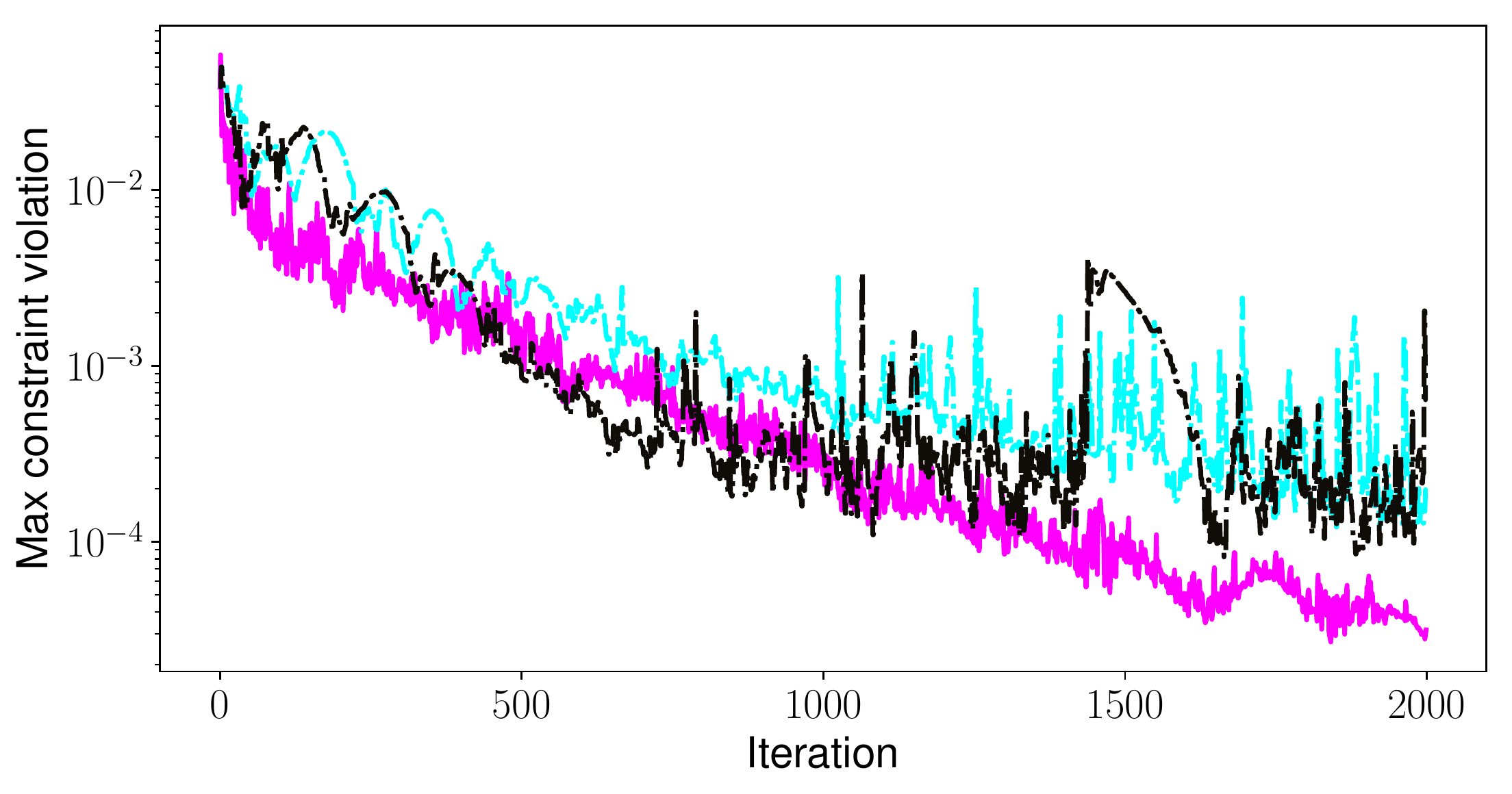}
        \vspace{-15pt}
        \caption{Evolution of the largest constraint violation -- part ii}
        \label{fig:constr-69-bus-admm-1}
    \end{subfigure}
    \vspace{10pt}
    \caption{A comparison of the performances of algorithm~\eqref{pd_opf_dd0}--\eqref{pd_opf_dd3} and the asynchronous ADMM.}
    \label{fig:num_results_69bus-admm}
\end{figure}

\section{Conclusion}
\label{sec:conclusion}
We presented distributed algorithms for solving the OPF problem for radial distribution systems over time-varying communication networks. The algorithms have geometric convergence rate and resiliency to communication delays and random data packet losses. One interesting future direction is to extend the proposed algorithms to solve multi-period OPF problems with battery energy storage systems.

\bibliographystyle{IEEEtran}
\bibliography{References}
\end{document}